\title{Dynamic Connectivity in Disk Graphs}
\author{Alexander Baumann}{Institut f\"ur Informatik, Freie Universität Berlin, Germany}
  {akauer@inf.fu-berlin.de}{https://orcid.org/0000-0001-8112-0482}{Supported 
  in part by grant 1367/2016 from the German-Israeli Science Foundation (GIF),
  by the German Research Foundation within the collaborative DACH 
  project \emph{Arrangements and Drawings} as DFG Project MU 3501/3-1,
  and by ERC StG 757609.}
\author{Haim Kaplan}{School of Computer Science, Tel Aviv University, Israel}
  {haimk@tau.ac.il}{}{Partially supported by ISF grant 1595/19 
  and by the Blavatnik research foundation.}
\author{Katharina Klost}{Institut f\"ur Informatik, Freie Universität Berlin, Germany}
{kathklost@inf.fu-berlin.de}{https://orcid.org/0000-0002-9884-3297}{}
\author{Kristin Knorr}{Institut f\"ur Informatik, Freie Universität Berlin, Germany}
{knorrkri@inf.fu-berlin.de}{https://orcid.org/0000-0003-4239-424X}{Supported by the German Science Foundation 
within the research training group `Facets of Complexity' (GRK 2434).}
\author{Wolfgang Mulzer}{Institut f\"ur Informatik, Freie Universität 
Berlin, Germany}{mulzer@inf.fu-berlin.de}{https://orcid.org/0000-0002-1948-5840}
{Supported in part by ERC StG 757609.}
\author{Liam Roditty}{Department of Computer Science, Bar Ilan 
University, Israel}{liamr@macs.biu.ac.il}{}{}
\author{Paul Seiferth}{Institut f\"ur Informatik, Freie 
Universität Berlin, Germany}{pseiferth@inf.fu-berlin.de}{}{}
\authorrunning{A.Baumann, H.Kaplan, K.Klost, K.Knorr, W.Mulzer, L.Roditty,
and P.Seiferth} 
\keywords{Disk Graphs, Connectivity, Lower Envelopes}
\let\originalleft\left
\let\originalright\right
\renewcommand{\left}{\mathopen{}\mathclose\bgroup\originalleft}
\renewcommand{\right}{\aftergroup\egroup\originalright}
\definecolor{mygreen}{rgb}{0, 0.56, 0.26}
\definecolor{mylightgreen}{rgb}{0, 0.9, 0.42}
\definecolor{myblue}{rgb}{0, 0.23, 0.7}
\definecolor{mylightblue}{rgb}{0.2, 0.6, 1}
\definecolor{mypurple}{rgb}{0.5, 0, 0.5}
\definecolor{myred}{rgb}{0.72, 0, 0}
\definecolor{myorange}{rgb}{1, 0.61, 0.21}
\definecolor{myyellow}{rgb}{0.96, 0.84, 0.19}
\newcommand{\RDS}{\textsf{RDS}\xspace}
\newcommand{\DLE}{\textsf{DLE}\xspace}
\newcommand{\MBM}{\textsf{MBM}\xspace}
\newcommand{\AWNN}{\textsf{AWNN}\xspace}
\DeclareMathOperator{\polylog}{polylog}
\DeclareMathOperator{\diam}{diam}
\DeclareMathOperator{\CL}{CL}
\newcommand{\G}{\mathcal{G}}
\newcommand{\eps}{\varepsilon}
\newcommand{\Q}{\mathcal{Q}}
\newcommand{\R}{\mathbb{R}}
\newcommand{\T}{\mathcal{T}}
\newcommand{\cH}{\mathcal{H}}
\newcommand{\cR}{\mathcal{R}}
\newcommand{\cD}{\mathcal{D}}
\renewcommand{\P}{\mathcal{P}}
\begin{document}
\maketitle

\begin{abstract}
Let $S \subseteq \R^2$ be a set of $n$ \emph{sites} 
in the plane, so that every site $s \in S$ has an 
\emph{associated radius} $r_s > 0$. Let $\cD(S)$ be
the \emph{disk intersection graph} defined 
by $S$, i.e., the graph with vertex set $S$ and 
an edge between two distinct sites $s, t \in S$ 
if and only if the disks with centers $s$, $t$ 
and radii $r_s$, $r_t$ intersect.
Our goal is to design data structures that 
maintain the connectivity structure of $\cD(S)$ 
as sites are inserted and/or deleted in $S$.

First, we consider \emph{unit disk graphs}, i.e., 
we fix $r_s = 1$, for 
all sites $s \in S$.  For this case, we describe a data 
structure that has  $O(\log^2 n)$ amortized 
update time and $O(\log n/\log\log n)$ 
query time. Second, we look at 
disk graphs \emph{with bounded radius ratio} 
$\Psi$, i.e., for all $s \in S$, we have 
$1 \leq r_s \leq \Psi$, for a parameter $\Psi$ 
that is known in advance. Here, we not only investigate 
the fully dynamic case, but also the
incremental and the decremental scenario, where
only insertions or only deletions of sites are 
allowed. 
In the fully dynamic case, we achieve amortized expected
update time $O(\Psi \log^{4} n)$ and
query time $O(\log n/\log \log n)$.
This improves the currently best update time by a 
factor of $\Psi$.
In the incremental case, we achieve 
logarithmic dependency on $\Psi$, with a data structure 
that has $O(\alpha(n))$ amortized query time and 
$O(\log\Psi \log^{4} n)$ amortized expected update time,
where $\alpha(n)$ denotes the inverse Ackermann function.

For the decremental setting, we first develop an 
efficient decremental \emph{disk revealing} data structure:  
given two sets $R$ and $B$ of disks in the plane, we can
delete disks from $B$, and upon each deletion,
we receive a list of all disks in $R$
that no longer intersect the union of $B$.
Using this data structure, we get decremental data 
structures with a query time of 
$O(\log n/\log \log n)$ that supports deletions 
in $O(n\log \Psi \log^{4} n)$ 
overall expected time for disk graphs with bounded radius ratio $\Psi$ and 
$O(n\log^{5} n)$ 
overall expected time for  
disk graphs with arbitrary radii, assuming that the deletion
sequence is oblivious of the internal random choices of the data structures.
\end{abstract}

\section{Introduction}%
\label{sec:introduction}
Suppose we are given a simple, undirected 
graph $G$, and we would like to preprocess it
so that we can determine efficiently if any two 
given vertices of $G$ lie in the same connected 
component. If $G$
is fixed, there is a simple solution: we perform a  
graph search in $G$ (e.g., a BFS or a DFS~\cite{cormen_introduction_2009})
to label the vertices of each
connected component with a unique identifier.
After that, we can answer a query in $O(1)$ time by comparing
the corresponding identifiers,
with linear preprocessing time and space. 

A much harder situation occurs if $G$ changes over time. 
Now, the connectivity queries may be 
interleaved with operations that modify $G$. 
If $G$ has a fixed  set of $n$ vertices and
edges can only be inserted, the problem directly reduces to
disjoint-set-union~\cite{reif_topological_1987}.
Using a standard disjoint-set-union 
structure~\cite{cormen_introduction_2009}, an optimal 
data structure that achieves 
$O(\alpha(n))$ amortized time for edge insertions and 
queries can be constructed, 
where $\alpha(n)$ is the inverse Ackermann 
function~\cite{cormen_introduction_2009}.
This reduction seems to be folklore, we explain 
the details in \cref{thm:generalinc} below.
If the vertex set is fixed, but 
edges can be both inserted and deleted,
the  data structure with the currently 
fastest update time is due to Huang et al.~\cite{huang_fully_2017}, 
which is based on a result by Thorup~\cite{thorup_near-optimal_2000}.
It has amortized expected update time 
$O(\log n(\log \log n)^2)$ and query time $O(\log n / \log \log \log n)$.
Generally, there were several recent results in this area~\cite{huang_fully_2017,wulff-nilsen_faster_2013,kapron_dynamic_2013}.
In our constructions, we will use a slightly older, 
but simpler data structure by Holm, Lichtenberg, and 
Thorup~\cite{holm_poly-logarithmic_2001} with a memory optimization 
by Thorup~\cite{thorup_near-optimal_2000}.
We will refer to this data structure as the \emph{HLT-structure}.
It achieves $O(\log n / \log \log n)$  query time and $O(\log^2 n)$ 
amortized update time.
For the special case of planar graphs,  
Eppstein et al.~\cite{eppstein_maintenance_1992a} 
give a data structure with $O(\log n)$ amortized time for 
both queries and updates.

One way to model the case of a changing 
vertex set is the \emph{dynamic subgraph
connectivity} problem. Here, we have a known fixed
graph $H$ with $n$ vertices and $m$ edges,
and $G$ is a subgraph of $H$ that changes
dynamically by activating or deactivating
the vertices of $H$ (the edges of $G$ are 
induced by the active vertex set). 
In this setting, there is a data structure 
with $O(m^{1/3} \polylog n)$ 
query time and $O(m^{2/3}\polylog n)$
amortized update time, developed by Chan et al.~\cite{chan_dynamic_2011}. 

In this paper, we add a geometric twist: we
study the dynamic connectivity 
problem on different variants of \emph{disk
intersection graphs}.
Let $S \subset \R^2$ be a set of planar 
\emph{point sites}, where each site $s \in S$ 
has an \emph{associated radius} $r_s > 0$. 
The \emph{disk intersection graph} (\emph{disk graph}, 
for short) $\cD(S)$  is the undirected graph
with vertex set $S$ that has an edge between 
any two distinct sites $s$ and $t$ if and only 
if the Euclidean distance between $s$ and $t$ 
is at most $r_s + r_t$.
In other words, there is an edge between
$s$ and $t$ if and only
if the disks with centers $s$ and $t$ and
with radius $r_s$ and $r_t$, respectively, intersect. 
Note that even though $\cD(S)$ is fully described by 
the $n$ sites and their associated radii, 
it might have $\Theta(n^2)$ edges. Thus, 
our goal is to find algorithms whose 
running time depends only on the number of sites 
and not on the number of edges.
We consider three variants of disk graphs, 
characterized by the possible values for the associated
radii.  In the first variant,
\emph{unit disk graphs}, all radii are~$1$.  
In the second variant, \emph{disk graphs of bounded radius
ratio}, all radii must come from the interval $[1, \Psi]$,
where $\Psi$ is a parameter that
is known in advance and that may depend on the
number of sites $n$, i.e., it can be as large as $n^2, 2^n$, or even larger.
In the third variant, \emph{general disk graphs}, the radii can be 
arbitrary positive real numbers. 

The static connectivity problem in disk graphs can be
solved similarly as in general graphs: we compute the
connected components and label the vertices of each connected
component with the same unique identifier. The challenge for disk
graphs lies in finding a quick way to perform the preprocessing
step. For unit disk graphs, there are several methods to 
perform a BFS in worst-cast 
time $O(n \log n)$ (see, e.g.,~\cite{KLOST2023101979}
and the references therein). For general disk graphs,
Cabello and Jejčič~\cite{CabelloJ15} observed that a BFS can
be performed with the help of an efficient weighted nearest
neighbor structure. With this approach and the fastest available
data structures, this leads to an algorithm with expected running
time $O(n \log^4 n)$~\cite{kaplan_dynamic_2020,Liu20}.

We assume that $S$ is \emph{dynamic}, i.e., 
sites can be inserted and deleted over time.
At each update, the edges incident 
to the modified site appear or disappear in $\cD(S)$.
An update can change up to $n-1$ edges in $\cD(S)$, 
so simply storing $\cD(S)$ in the HLT-structure 
could lead to potentially superlinear 
update times and might even be slower than recomputing 
the connectivity information from scratch. 

For dynamic connectivity in general disk graphs,
Chan et al.~\cite{chan_dynamic_2011} give a data structure with 
$O(n^{1/7 + \eps})$ query time and $O(n^{20/21 +\eps})$ 
amortized update time, where $\eps > 0$ is a constant that can
be made arbitrarily small, but which must be fixed.
As far as we know, this is currently still the 
best fully dynamic connectivity structure 
for general disk graphs.
However, Chan et al.~present their 
data structure as a special case of a more 
general setting, so there is hope  that the specific 
geometry of disk graphs may allow for better running times. 

Indeed, several results show that for certain disk graphs,
one can achieve polylogarithmic update and query times.
For unit disk graphs, Chan et al.~\cite{chan_dynamic_2011} 
observe that there is a data structure with $O(\log^{6} n)$ 
amortized update time and $O(\log n/\log\log n)$ query time.\footnote{Actually,
Chan et al.~\cite{chan_dynamic_2011} claim an amortized expected update time
of $O(\log^{10} n)$. However, the underlying data structures
have been improved since their paper was published~\cite{Chan20a}. With
this, the approach yields the stated
improved running time; see \cref{sec:unit} for more details.}
For bounded radius ratio, 
Kaplan et al.~\cite{kaplan_dynamic_2020} show that there 
is a data structure with amortized expected update time 
$O(\Psi^2 \log^{4}n)$ and query
time $O(\log n /\log \log n)$.\footnote{The original paper
claims an amortized expected update time of
$O(\Psi^2 \lambda_6(\log n) \log^{9}n)$,
but recent improvements in the underlying data structure~\cite{Liu20}
lead to the better bound.}

Both results use the notion of a 
\emph{proxy graph}, a sparse graph that 
models the connectivity of the original disk 
graph and that can be updated efficiently
with suitable dynamic geometric data structures. 
The proxy graph is then stored in 
the HLT -structure, so 
the query procedure coincides with the one by HLT.
The update operations involve a combination 
of updating the proxy graph with the help of 
geometric data structures and of
modifying the edges in the HLT-structure.

\subparagraph{Our results.}

\begin{table}
\begin{center}

\renewcommand{\arraystretch}{1.1}
\begin{tabular}{p{2.52cm}|p{3.92cm}|p{2.85cm}|p{2.82cm}}
  \multicolumn{4}{c}{unit disks, fully dynamic} \\ \hline
  & update time & query time & space usage \\ \hline
  Chan et al.~\cite{chan_dynamic_2011} & $O(\log^4 n)$ amortized & $O(\log n/\log \log n)$ & $O(n \log^2 n)$ \\ \hline
  \cref{thm:dynamicUDG} & $O(\log^2 n)$ amortized & $O(\log n / \log \log n)$ & $O(n)$ \\ \hline
  \multicolumn{4}{c}{} \\
  \multicolumn{4}{c}{$r_s \in [1,\Psi]$, fully dynamic} \\ \hline
  & update time & query time & space usage \\ \hline
  Kaplan et al.~\cite{kaplan_dynamic_2020} & $O(\Psi^2 \log^{4}n)$ amortized expected & $O(\log n/\log \log n)$ & $O(\Psi^2 n \log n)$ \\ \hline
  \cref{thm:bounded-radius-ratio-psi} & $O(\Psi \log^{4} n)$ amortized expected & $O(\log n/\log \log n)$ & $O(\Psi n \log n)$ \\ \hline
  \multicolumn{4}{c}{} \\
  \multicolumn{4}{c}{$r_s \in [1,\Psi]$, insertion only} \\ \hline
  & update time & query time & space usage \\ \hline
  \cref{thm:bounded:insertion} & $O(\log\Psi \log^{4} n)$ amortized expected & $O(\alpha(n))$ amortized & $O(n \log \Psi \log n)$ \\ \hline
  \multicolumn{4}{c}{} \\
  \multicolumn{4}{c}{$r_s \in [1,\Psi]$, deletion only} \\ \hline
  & update time & query time & space usage \\ \hline
  \cref{thm:bounded:deletion} & $O( n\log\Psi\log^4n)$ overall expected
 with oblivious adversary & $O(\log n/\log\log n)$ & $O(n (\log n + \log \Psi))$ \\ \hline
  \multicolumn{4}{c}{} \\
  \multicolumn{4}{c}{general, fully dynamic} \\ \hline
  & update time & query time & space usage \\ \hline
  Chan et al.~\cite{chan_dynamic_2011} & $O(n^{1/7 + \eps})$ amortized & $O(n^{20/21 +\eps})$ & $O(n^{5/3 + \eps})$ \\ \hline
  \multicolumn{4}{c}{} \\
  \multicolumn{4}{c}{general, deletion only} \\ \hline
  & update time & query time & space usage \\ \hline
  \cref{thm:unbounded:deletion} & $O(n\log^{5} n)$ 
   overall expected 
   with oblivious adversary & $O(\log n/\log\log n)$ & $O(n \log^2 n)$ \\ \hline
\end{tabular}
\renewcommand{\arraystretch}{1}

\end{center}
\caption{The state of the art after our work. 
Time bounds are per operation, unless noted otherwise.
In the semi-dynamic cases where no explicit bounds are given 
the best known results coincide with the fully-dynamic case.
The space requirements for the results of Chan et al.~\cite{chan_dynamic_2011}
are not stated explicitly in their paper, but they were derived
according to our understanding of their method.}
\label{tab:intro:runningtimes}
\end{table}

For unit disk graphs, we significantly improve 
the observation of Chan et al.~\cite{chan_dynamic_2011}: 
with a direct approach utilizing a grid-based proxy graph
and dynamic data structures for lower envelopes,
we obtain $O(\log^2 n)$ amortized update and 
$O(\log n / \log \log n)$ query time (\cref{thm:dynamicUDG}).

For bounded radius ratio, we give a data structure 
that improves the update time in \cref{thm:bounded-radius-ratio-psi}.
Specifically, we achieve
amortized expected update time $O(\Psi \log^{4} n)$ 
and query time $O(\log n/\log \log n)$.
Compared to the previous data structure of Kaplan et 
al.~\cite{kaplan_dynamic_2020},
this improves the factor in the update 
time from $\Psi^2$ to $\Psi$.

We also provide partial results that 
push the dependency on $\Psi$ from linear to logarithmic.
For this, we consider the \emph{semi-dynamic} setting, in which
only insertions (\emph{incremental}) or only deletions 
(\emph{decremental}) are allowed.
In both semi-dynamic settings we use quadtrees and 
assign the input disks to certain sets defined by the cells of the 
quadtree. Similar approaches have been used 
before~\cite{loffler_dynamic_2013,BuchinLMM11,LofflerM12,KaplanMRS18,KaplanKMRSS19}. 
In the incremental setting, we use a 
dynamic additively weighted Voronoi diagram to obtain 
a data structure with $O(\alpha(n))$ amortized query time 
and $O(\log\Psi \log^{4} n)$ amortized expected update 
time, see \cref{thm:bounded:insertion}.

In the decremental setting, 
a main challenge is to identify those edges in $\cD(S)$ 
that were incident to a freshly removed site and whose removal 
changes the connectivity in $\cD(S)$. 
To address this, we first develop a data structure for
a related dynamic geometric problem which might be of 
independent interest.
In this problem, we have two sets $R$ and $B$ of
disks in the plane, such that the disks in $R$ 
can be both inserted and deleted, while the
disks in $B$ can only be deleted.
We would like to maintain $R$ and $B$ in
a data structure such that whenever we delete
a disk $b$ from $B$,
we receive a list of all the disks in the current set $R$ 
that intersect the disk $b$ but no remaining disk
in $B \setminus \{b\}$.
We say that these are the disks in $R$ that are
\emph{revealed} by the deletion of $b$.
If $B$ initially contains $n$ disks, 
we can process a sequence of $m$ updates to $R$ and
 $k$ deletions from $B$ in
expected time 
$O(n \log^2 n  + 
m \log^4 n + 
k \log^4 n)$, assuming an oblivious adversary.
We call this data structure the 
\emph{disk revealing data structure} (\RDS{}) and it can be found in \cref{thm:disk-reveal}.

The \RDS{} plays a crucial part in developing 
decremental connectivity structures for disk graphs
of bounded radius ratio and for general disk graphs.
For both cases, we obtain data structures with 
$O(\log n/\log \log n)$ query time. 
The total expected time is 
$O(n\log \Psi \log^{4} n )$ 
for bounded radius ratio (\cref{thm:bounded:deletion}) and 
$O(n\log^{5}n )$ 
 for the general case (\cref{thm:unbounded:deletion}), assuming an oblivious adversary for both data structures.
Our contributions and the current state of the art 
are summarized in  \cref{tab:intro:runningtimes}.
A simplified version of the decremental data structure for general disk graphs also implies a static connectivity data structure with \(O(n\log^2n)\) preprocessing and \(O(1)\) query time (\cref{lem:static:general}).

\section{Preliminaries}%
\label{sec:preliminiaries}
After formally defining our geometric setting,
we briefly recall some basic notions and 
structures that will be relevant throughout the paper.

\subsection{Problem setting}
Let $S\subset \R^2$ be a set of $n$ \emph{sites} in the plane.
Every $s \in S$ has an \emph{associated radius} $r_s \in \R$, $r_s > 0$,
and it defines a closed disk $D_s$ with center $s$ and radius $r_s$.
The \emph{disk graph} $\cD(S)$ of $S$ is the graph with 
vertex set $S$ and an undirected edge $st$ if and only if 
$\| st\| \leq r_s+r_t$.
Here, $\| st\|$ denotes the Euclidean distance between $s$ and
$t$.
Equivalently, there is an undirected edge $st$ if and only if 
the disks $D_s$ and $D_t$ intersect.
Two sites $s$, $t$ are 
\emph{connected} in $\cD(S)$ if and only if $\cD(S)$ contains a path 
between $s$ and $t$.

We consider three types of disk graphs: 
\emph{unit disk graphs}, \emph{disk graphs of bounded radius ratio}, and 
\emph{general disk graphs}. 
In unit disk graphs, we have $r_s = 1$, for all $s \in S$. 
In disk graphs of bounded radius ratio, we require that 
$r_s \in [1, \Psi]$, for all $s \in S$, where $\Psi$ is some fixed
parameter.
The time bounds now may depend on $\Psi$.
Note that $\Psi$ in turn may depend on $n$, i.e., 
be $n^2, 2^n$, or even larger. 
For simplicity, we assume that $\Psi$ is known to 
the data structure in advance.
However, we believe that even if $\Psi$ is not previously known, 
most of our approaches 
can be adapted to achieve the same running times with 
regard to the maximum radius ratio over the life span of the data structure.
In general disk graphs, there are no additional
restrictions on the radii, and the running times depend
only on $n$.

In all three scenarios, the site set $S$ is \emph{dynamic},
i.e., it may change over time.
Our goal is to maintain $S$ while allowing 
\emph{connectivity queries} of the following form: 
\emph{given two sites $s, t \in S$, are they connected in
the current disk graph $\cD(S)$}?
Depending on the exact nature of how $S$ can change, we 
distinguish between the \emph{incremental}, the \emph{decremental}, 
and the \emph{fully dynamic} setting. 
In the incremental setting, we start with $S = \emptyset$, 
and we can only add new sites to $S$. 
In the decremental setting, we start with a given set 
$S$ of $n$ sites that may be subject to preprocessing, 
and we allow only deletions from $S$.
In the fully dynamic case, we start
with $S  = \emptyset$, and sites can
be inserted and deleted arbitrarily.
In all three settings, the updates can 
be interleaved with the queries in any
possible way.

\subsection{Data structures for edge updates}
We rely on existing data structures
that support connectivity queries and
edge updates on general graphs. 
In the incremental case, the problem reduces to
disjoint-set-union~\cite{cormen_introduction_2009}.  
The construction seems to be folklore (see e.g.\@ Holm 
et al.~\cite{holm_poly-logarithmic_2001}, 
Reif~\cite{reif_topological_1987}),
we state it in the following theorem, and provide 
a proof for completeness.

\begin{theorem}
\label{thm:generalinc}
Starting from the empty graph, there 
is a deterministic data structure for incremental
dynamic connectivity such that an isolated vertex can be added in 
$O(1)$ time, an edge between two existing vertices can be 
added in $O(\alpha(n))$ amortized time, and a connectivity query takes 
$O(\alpha(n))$ amortized time, where $n$ is the total 
number of inserted vertices so far and $\alpha(n)$ denotes
the inverse Ackermann function.
\end{theorem}

\begin{proof}
This is a direct application of a disjoint-set-union 
structure with the operations \textsc{Make-Set}, \textsc{Union}, 
and \textsc{Find}~\cite{cormen_introduction_2009}.
The vertices of $G$ correspond to the elements of the sets. 
An insertion of an isolated vertex becomes a 
\textsc{Make-Set} operation and takes $O(1)$ time.
To determine if two vertices are connected, we perform a 
\textsc{Find} operation on the corresponding elements, 
and we return \emph{yes} if they lie in the same set.
This has amortized running time $O(\alpha(n))$.
To insert an edge $uv$, we 
first find the sets that contain $u$ and $v$, and 
then we perform a \textsc{Union} operation on these sets.
This takes $O(\alpha(n))$ amortized time. 
\end{proof}

For the fully dynamic case,  we use 
the following result of Holm, Lichtenberg,
and Thorup~\cite{holm_poly-logarithmic_2001}.
We refer to this data structure as the \emph{HLT-structure}.
\begin{theorem}[Holm et al.~{\cite[Theorem~3]{holm_poly-logarithmic_2001}}]
\label{thm:dynamicspanningtree}
Let $G$ be a graph with $n$ vertices and initially empty  
edge set. There is a deterministic 
fully dynamic data structure so that edge updates in $G$ take
amortized time $O(\log^2 n)$
and connectivity queries take worst-case time 
$O(\log n/\log \log n)$. 
\end{theorem}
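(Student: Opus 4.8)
The plan is to reproduce the Holm--de Lichtenberg--Thorup construction. I would maintain a spanning forest $F$ of $G$ together with a \emph{level} $\ell(e)\in\{0,1,\dots,\lfloor\log_2 n\rfloor\}$ for every edge; all inserted edges start at level $0$ and levels only ever increase. Writing $G_i$ for the subgraph of edges of level at least $i$ and $F_i=F\cap G_i$, I would enforce two invariants: (i) $F_i$ is a spanning forest of $G_i$ (so in particular $F=F_0$ is a spanning forest of $G$); and (ii) every tree of $F_i$ has at most $n/2^i$ vertices. Invariant (ii) bounds the number of levels and drives the amortized analysis. Each forest $F_i$ is stored in an Euler-tour-tree structure: keep the Euler tour of each tree in a balanced search tree, augmenting nodes with aggregates so that \textsc{Link}, \textsc{Cut}, subtree size, ``which of $u,v$ lies in the smaller tree'', and ``does this tree contain a vertex incident to a level-$i$ non-tree edge'' all take $O(\log n)$ time. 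Connectivity queries just test whether $u$ and $v$ lie in the same tree of $F_0$; to obtain the bound $O(\log n/\log\log n)$ I would implement the ET-trees as $(a,b)$-trees with branching factor $\Theta(\log n)$ and height $O(\log n/\log\log n)$, checking that this changes the update-side aggregate operations only by constant factors.

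For an update I would proceed as follows. Insertion of $e=(u,v)$ at level $0$: if $u,v$ lie in different trees of $F_0$, \textsc{Link} them with $e$; otherwise record $e$ as a non-tree edge. Deletion of a non-tree edge: simply discard it. Deletion of a tree edge $e=(u,v)$ of level $\ell$: \textsc{Cut} it in every $F_i$ with $i\le\ell$, obtaining $T_u\ni u$ and $T_v\ni v$ in $F_\ell$, and then search for a replacement edge from level $\ell$ down to $0$. At level $i$, let $T$ be the smaller of the two current $F_i$-trees containing $u$ resp.\ $v$; since the original $F_i$-tree had at most $n/2^i$ vertices, $T$ has at most $n/2^{i+1}$, so I may promote every level-$i$ tree edge inside $T$ to level $i+1$ without violating invariant (ii). I then enumerate the level-$i$ non-tree edges incident to $T$ one at a time using the augmentation: if one reaches the other side, insert it as the replacement into $F_0,\dots,F_i$ and stop; if it stays inside $T$, promote it to level $i+1$ and continue. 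If level $i$ is exhausted with no replacement, move down to level $i-1$.

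The amortized analysis is the crux. A level lies in $[0,\log_2 n]$ and never decreases, so each edge is promoted $O(\log n)$ times over its lifetime, and each promotion costs $O(\log n)$ ET-tree work, for $O(\log^2 n)$ charged to that edge. Every non-tree edge inspected during a deletion is either promoted (charged to it as above) or is the unique successful replacement at its level; the successful-or-final-failure case costs $O(\log n)$ per level, hence $O(\log^2 n)$ per deletion, and the $O(\log n)$ initial \textsc{Cut}s and \textsc{Link}s cost $O(\log^2 n)$ as well. Thus all updates run in $O(\log^2 n)$ amortized time while queries take $O(\log n/\log\log n)$ worst case. I expect the main obstacle to be exactly this bookkeeping: getting the ET-tree augmentation right so that candidate replacement edges can be listed at $O(\log n)$ apiece, and verifying that the smaller-half promotion never breaks invariant (ii)---that is where the factor $2$ in the bound $n/2^i$ must be spent with care---while keeping the high-branching-factor ET-trees from inflating the update cost.

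Since this is precisely Theorem~3 of Holm et al.~\cite{holm_poly-logarithmic_2001}, in the paper itself it suffices to cite that reference; the sketch above indicates the structure underlying the stated bounds.
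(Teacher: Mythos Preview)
Your proposal is correct and, as you yourself note at the end, matches the paper's treatment exactly: the paper states this as a cited result from Holm et al.\ without giving any proof. Your sketch of the level-based spanning-forest construction with Euler-tour trees is an accurate summary of that reference, but for the purposes of this paper the citation alone suffices.
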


\cref{thm:dynamicspanningtree} assumes that $n$ is
fixed. However, 
we can easily support vertex insertions and deletions
within the same amortized 
time bounds, by rebuilding the data structure
whenever the number of vertices changes by a factor of $2$.
Thorup presented a variant of \cref{thm:dynamicspanningtree} 
that uses $O(m)$ space, where $m$ 
is the number of edges that are currently stored in the
data structure~\cite{thorup_near-optimal_2000}.

It should also be mentioned that both 
\cref{thm:generalinc} and \cref{thm:dynamicspanningtree} 
can easily be extended to not only answer connectivity queries 
but also to maintain the number of connected components.
As both data structures require some form of explicit operation 
when merging or splitting a connected component, a counter for the number 
of current components can be maintained and returned in $O(1)$ time.
This directly implies that all except one of our disk connectivity 
data structures can be extended to also support those queries in $O(1)$ 
time.
This is not directly clear for the data structure of 
\cref{thm:bounded-radius-ratio-psi}, as this data 
structure does not maintain the full connectivity 
internally for an improved update time.

\subsection{The hierarchical grid and Quadtrees}
\label{sec:hier-grid}
\paragraph*{The hierarchical grid}
To structure our set of sites, we will make frequent use
of the hierarchical grid and quadtrees.
Let $i \geq 0$. A \emph{cell $\sigma$ of level $i$} is
an axis-parallel square with diameter $2^i$. On the boundary,
the cell $\sigma$ contains the right edge without its bottom 
endpoint and the upper edge
without its left endpoint,
but none of the remaining points.\footnote{We emphasize that
the diameter of $\sigma$ is defined with respect to the
closure, i.e., $\sigma$ has diameter $2^i$, but any pair of
points that lie in $\sigma$ have distance strictly less than
$2^i$.}
The \emph{grid} $\G_i$ is the set of all cells
of level $i$ such that $\G_i$ contains the cell
with the origin as the upper right corner; the cells in $\G_i$ cover
the plane; and the closures of any two cells in $\G_i$  
are either disjoint or share exactly one corner or
a complete boundary edge.
Note that every point in the
plane lies in exactly one cell of $\G_i$.
The 
\emph{hierarchical grid} $\G$ is then defined as 
$\G = \bigcup_{i = 0}^{\infty} \G_i$.
For any cell $\sigma \in \G$, we denote by 
$|\sigma|$ its diameter and by $a(\sigma)$ its center.
We say that 
$\G_i$ is the \emph{grid at level $i$} in $\G$. 
We assume that given a point $p \in \R^2$ and a level $i \geq 0$,
we can find (the coordinates of) the cell of $\G_i$
that contains $p$ in $O(1)$ time. 
Furthermore, for a cell $\sigma \in \G_i$ and odd $k \in \mathbb{N}$,
the \emph{$(k\times k)$-neighborhood} of $\sigma$,
$N_{k\times k}(\sigma)$,
is the set of
$k^2$ cells in $\G_i$ that contains $\sigma$ and all cells that
can be reached from $\sigma$ by crossing at most $(k-1)/2$ vertical
and at most $(k-1)/2$ horizontal cell boundaries.
See \cref{fig:hiergrid} for an illustration.

\begin{figure}
\begin{center}
\includegraphics{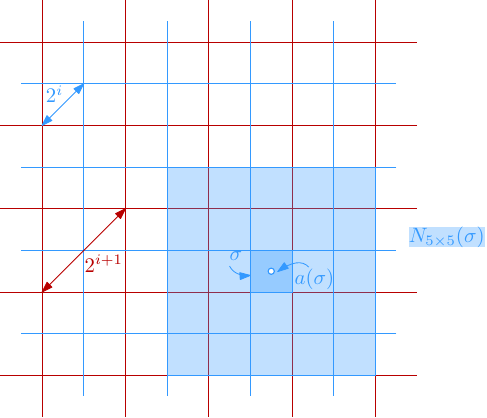}
\end{center}
\caption{Two levels of the hierarchical grid, a cell
$\sigma$ at level $i$, its center $a(\sigma)$, 
and the $(5 \times 5)$-neighborhood
of $\sigma$.}
\label{fig:hiergrid}
\end{figure}

\paragraph*{Quadtrees}
Let $\mathcal{C} \subset \G$ be a finite set of cells.
The \emph{quadtree} $\mathcal{T}$ for $\mathcal{C}$ is a rooted 
$4$-ary 
tree whose nodes are cells from $\G$.
The root of $\mathcal{T}$ is the smallest cell $\rho$
in $\G$  that
contains all the cells of $\mathcal{C}$. 
Starting from $\rho$, we expand the nodes of $\mathcal{T}$ as follows:
if a cell $\sigma$ in $\mathcal{T}$ with level $i \geq 1$,
properly contains at least one cell of $\mathcal{C}$, then
$\sigma$ obtains four children, namely
the cells $\tau_j$ with  level $i-1$ and $\tau_j \subseteq \sigma$.
If a cell $\sigma$ in $\mathcal{T}$ does not properly contain 
a cell of $\mathcal{C}$, 
it is not expanded any further, and it becomes a leaf of $\mathcal{T}$.
Typically, we do not explicitly 
distinguish between a cell $\sigma$ and its associated 
node in $\mathcal{T}$.
A quadtree $\mathcal{T}$ on a given set of $n$ cells can be constructed in 
$O(n\log |\rho| )$ time, where $\rho$ is the root 
of $\mathcal{T}$.\footnote{Usually, 
quadtrees are defined for planar point sets and not
for cells in a hierarchical grid~\cite{berg_computational_2008}. 
However, for our purposes,
it is more convenient to work at the cell level, since sites that are close
together form cliques in $\cD(S)$ and can thus be represented by 
a single cell
that contains them.}

\subsection{Maximal bichromatic matchings}
We make frequent use of a
data structure that dynamically maintains a 
\emph{maximal bichromatic matching} (\MBM) between
two sets of disks.
Let $R \subseteq S$ and $B \subseteq S$ be two disjoint non-empty sets
of sites, and let $(R \times B) \cap \cD(S)$ be the bipartite graph on 
$R$ and $B$ that consists of all edges of $\cD(S)$ 
with one vertex in $R$ and one vertex in $B$.
An \MBM between $R$ and $B$ is a
maximal set of vertex-disjoint edges in $(R \times B) \cap \cD(S)$.

We show how to maintain a dynamic \MBM as sites are
inserted or deleted in $R$ and in $B$. For this, we need
a dynamic data structure for \emph{disk unions}:
let $T \subseteq S$ be a set of sites. A \emph{disk union structure}
dynamically maintains $T$ as sites are inserted or deleted,
while supporting the following query: given a site $s \in S$,
report an arbitrary site $t \in T$ such that $D_s$ and $D_t$
intersect, or indicate that no such $t$ exists.
Given a disk union structure, we can maintain an
\MBM with only constant overhead:

\begin{lemma}[Kaplan et al.~{\cite[Lemma~9.10]{kaplan_dynamic_2020}}]
    \label{lem:dg-mbm}
    Let $R,B \subseteq S$ be two disjoint sets with a total of at most $n$ 
    sites. Suppose we have a dynamic data structure for 
    disk unions in $R$ and $B$  
    that has update time $U(m)$, query time $Q(m)$, and 
    space requirement $S(m)$, for $m$ elements.
    Then, there exists a dynamic data structure that maintains 
    an \MBM for $R$ and $B$ with 
    $O(U(n) + Q(n) + \log n)$ update time,
    using $O(S(n) + n)$ space.
\end{lemma}

\begin{proof}
    We store \MBM $M$ as a red-black tree~\cite{cormen_introduction_2009}.
    Each edge of the matching is stored twice, using as keys both 
    of its vertices, and the search order is any sensible order.
    We use 
    two disk union data structures $D_R$ and $D_B$ 
    for the sites from $R$ and $B$ that currently do not appear
    as an endpoint in $M$.
    The invariant is that $M$ contains a matching between $R$ and $B$,
    and that the disks in $D_R$ and $D_B$ are pairwise disjoint.
    This implies  that $M$ is maximal.

    To insert a new site $r$ into $R$, we 
    perform a query with $r$ in $D_B$.
    If this query reports a site  $b \in B$ such that $D_r$ and $D_b$
    intersect, we add the edge $rb$ to $M$, and
    we delete $b$ from $D_B$.
    Otherwise, we insert $r$ into $D_R$.
    To delete a site $r$ from $R$, there are two
    cases: first, if $r$ is unmatched,  we simply remove $r$ from $D_R$.
    Second, if  $r$ is matched, say to the site $b \in B$, 
    we proceed as follows:
    we remove the edge $rb$ from $M$ and we query $D_R$ with $b$,
    looking for a new partner for $b$ in $R$. If we find a new partner $r'$
    for $b$, we delete $r'$ from $D_R$, and we add the edge $r'b$ to $M$. 
    Otherwise,
    we insert $b$ into $D_B$.  Updates to $B$ are analogous.
    Thus, an update to the \MBM  requires $O(1)$ insertions, deletions, or
    queries to $D_R$ or $D_B$ and $O(1)$ queries to $M$, and the lemma follows.
\end{proof}

We use two ways to implement the dynamic structure
for disk unions, based on two different underlying data
structures. For the general case, 
we  rely on an \emph{additively-weighted nearest neighbor 
data structure} (\AWNN).
An \AWNN{} stores a set 
$P \subset \R^2$ of
$n$ points in the plane, such that every point $p \in P$
is assigned a \emph{weight} $w_p \in \R$. The \AWNN{}
supports \emph{additively-weighted nearest neighbor queries}:
given a point $q\in \R^2$, find the point $p \in P$ that minimizes
the additively weighted Euclidean distance $\| pq \| + w_p$ to $q$.
Kaplan et al.~\cite{kaplan_dynamic_2020}, with the improved
construction of shallow cuttings by 
Liu~\cite{Liu20}, 
show the following result:
\begin{lemma}[Kaplan et al.~{\cite[Theorem 8.3, Section~9]{kaplan_dynamic_2020}, Liu~\cite[Corollary~4.3]{Liu20}}]\label{lem:prelims:dynamicNN}
There is a fully dynamic \AWNN{} data structure that 
allows insertions in $O(\log^2 n)$ amortized 
expected time and deletions in $O(\log^{4} n)$
amortized expected time. 
Furthermore, a query takes $O(\log^2 n)$ worst case time.
The data structure requires $O(n \log n)$ space.
\end{lemma}

\cref{lem:prelims:dynamicNN} can be used directly to implement a dynamic
disk union structure: we assign to each site $t \in T$ the
weight $w_t = -r_t$. To check if a disk $D_s$ intersects a disk in $T$,
we query the \AWNN structure with $s$. Let $t \in T$ be 
the resulting additively-weighted
nearest neighbor. We check if the disks $D_s$ and $D_t$
intersect. If so, we return $t$, otherwise, we report that no such
intersection exists. Combining \cref{lem:dg-mbm,lem:prelims:dynamicNN}, 
we get:
\begin{lemma}
    \label{lem:mbm-general}
    Let $R,B \subseteq S$ be two disjoint sets with a total of at most $n$ 
    sites. 
    Then, there exists a dynamic data structure that maintains 
    an \MBM for $R$ and $B$ with 
    $O(\log^{4} n)$ amortized expected update time,
    using $O(n \log n)$ space.
\end{lemma}

Next, we consider the more restricted case that 
$S$ contains only unit disks, and that the sites
in $R$ and $B$ are separated by a vertical or horizontal
line $\ell$ that is known in advance. In this case, we can
implement a disk union structure using a \emph{dynamic lower envelope}
(\DLE) structure for pseudolines.
Indeed, suppose we have 
a dynamic set $T$ of sites with $r_t = 1$, for all $t \in T$. 
Assume further that we are given a vertical
or horizontal line $\ell$  such that all query sites $s$ 
have $r_s = 1$ and are separated from $T$ by $\ell$.
We rotate and translate everything such that $\ell$ is the $x$-axis and all 
sites in $T$ have 
positive $y$-coordinate.
We consider the set $U_T$ of disks with radius $2$ and centers
in $T$ (see Figure~\ref{fig:lower-envelope}).
\begin{figure}
\centering
\includegraphics{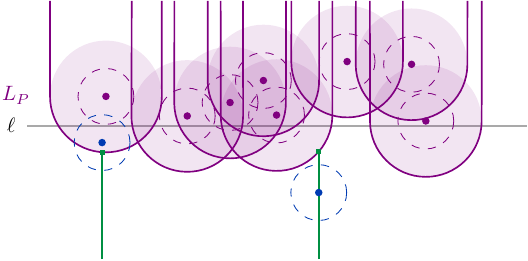}
\caption{The set $L_P$ induced by $P$. The unit disks are 
drawn dashed. If a site $b \in B$ lies above the lower 
envelope, the unit disks intersect.}
\label{fig:lower-envelope}
\end{figure}
Then, a site in $s$ that lies below $\ell$ and has $r_s = 1$ 
intersects some disk in $T$ if 
and only if it is contained in the union of the disks in $U_T$. To detect 
this, we maintain the lower envelope of $U_T$,
with a \DLE-structure for pseudolines. 

This is defined as follows:
let $L$ be a set of \emph{pseudolines} in the plane, i.e., 
each element of $L$ is a simple continuous $x$-monotone 
curve, and any two distinct curves in $L$ cross in exactly 
one point (and do not meet anywhere else). 
The \emph{lower envelope} of $L$ is the pointwise 
minimum of the graphs of the curves in $L$.
Combinatorially, it consists of a sequence of 
\emph{pseudoline segments}, such that each pseudoline
contributes at most one segment. 
The goal is to maintain a representation of this lower envelope
efficiently as pseudolines are inserted and deleted in $L$.
At the same time, the structure should support
\emph{vertical ray shooting queries}: given a query $q \in \R$,
report the pseudoline(s) for the segment(s) of the current
lower envelope that intersect the vertical line with $x$-coordinate
$q$.
Overmars and van Leeuwen show how to maintain the \textsf{DLE}
of a set of lines with update time $O(\log^2 n)$ such that 
\emph{vertical ray shooting} queries take 
$O(\log n)$ time~\cite{overmars_maintenance_1981}.
This was extended to pseudolines by
Agarwal et al.~\cite{agarwal_dynamic_2019}:
 
\begin{lemma}[Agarwal et al.~\cite{agarwal_dynamic_2019}]
\label{lem:lowerenvelope}
We can maintain the \textsf{DLE}
of a set of pseudolines with $O(\log^2 n)$
worst-case update time and $O(\log n)$ worst-case time
for vertical ray shooting queries.
Here, $n$ denotes the maximum number of pseudolines 
in the structure. The data structure uses $O(n)$ space.
\end{lemma}

Now, consider 
the following set $L_T$ of pseudolines: for each disk $D$ of $U_T$, take the 
arc that defines the lower part of the boundary of $D$ and extend 
both ends upward to $\infty$, with a very 
high
slope. 
By maintaining $L_T$, we can
implement a disk union structure for our special case with
worst-case update time $O(\log^2 n)$ and worst-case query time $O(\log n)$.
Combining with \cref{lem:dg-mbm,lem:lowerenvelope},
we get:
\begin{lemma}
    \label{lem:mbm-unit}
    Suppose that $r_s = 1$, for all sites $s \in S$.
    Let $R,B \subseteq S$ be two disjoint sets with a total of at most $n$ 
    sites, such that there is a known vertical or horizontal
    line that separates $R$ and $B$.
    Then, there exists a dynamic data structure that maintains 
    an \MBM for $R$ and $B$ with 
    $O(\log^{2} n)$ worst-case update time,
    using $O(n)$  space.
\end{lemma}

\subparagraph*{Remark.}
By now, there are several improvements
of the structure of  Overmars and van 
Leeuwen~\cite{chan_dynamic_2001,kaplan_faster_2001,brodal_dynamic_2002}, 
culminating in an optimal $O(\log n)$ amortized update time 
due to Brodal and Jacob~\cite{brodal_dynamic_2002}.
There is solid evidence~\cite{hofer_fast_2017} 
that the improvement of 
Kaplan et al.~\cite{kaplan_faster_2001} carries over to the pseudoline 
setting, giving a better $O(\log n\log \log n)$ amortized update 
time with $O(\log n)$ query time. 
However, there is no formal presentation of these arguments yet,
so we refrain from formally stating the result.
The structure by Brodal and 
Jacob~\cite{brodal_dynamic_2002} is very involved,
and we were not able to verify if it carries over to the
setting of pseudolines. 
This poses an interesting 
challenge for further investigation. 

\subsection{Computational Model}
\label{sec:compmodel}

Our algorithms use quadtrees and hierarchical grids.
Even though this is a common practice in the computational geometry
literature (see, e.g., Har-Peled's book~\cite{har-peled_geometric_2011}
for many examples of such algorithms), we should be aware
that this creates some issues: in order to use hierarchical grids
of arbitrary resolution, we typically need to determine which
grid cell of a certain diameter contains a given point.
To do this efficiently, we need to extend the standard
computational model in computational geometry, the real 
RAM~\cite{PreparataS85}, by an additional \emph{floor function} 
that rounds a given
real number down to the next integer~\cite{har-peled_geometric_2011}. 
Even though this floor
function is natural---and implemented in many real-world computers---it 
is problematic in the context of the real RAM: the floor function
together with the unbounded precision of the real RAM provide us with
a very powerful model of computation 
that can even solve PSPACE-complete problems
in polynomial time (see, e.g., Mitchell and Mulzer~\cite{MulzerMi17} and the references
therein for further discussion of this issue).
Thus, when using this model, we should be careful to use the floor function
in a ``reasonable'' way. Typically, it is considered reasonable
to stipulate an operation that allows us to find the cell of a given
level of the hierarchical grid that contains a given input point in constant 
time\cite{har-peled_geometric_2011}.
In the main part of this paper, we will follow this approach, 
because it will lead to a clearer  description of the algorithms.
Note that this is mainly an issue for insertion operations in the
semi-dynamic incremental and the fully dynamic setting, because if the
points are given in advance (as in the decremental setting), 
we have time to preprocess them to find
the associated grid cells. Furthermore, for the query operations, we 
can store the associated grid cell for each point in the data structure
with its satellite data, so that we do not need to determine this cell
with the help of the floor function during the query.

However, it typically turns out that the use of the floor-function is
not strictly necessary. The price we need to pay for this is that
the grid cells are no longer perfectly aligned,  which makes the algorithms
a bit more messy. However, these issues can typically be dealt with
no or little overhead (see, e.g.,~\cite{BuchinLMM11,LofflerM12}
for examples). This is also true for our algorithms, and we give the
details in the appendix, in order to satisfy the curious reader while
keeping the main text free from additional complications.

\subsection{The Role of Randomness}

Most of our algorithms use randomness, and some of the 
running time guarantees hold in expectation only. We would like
to emphasize that the expectation is over the randomness in the algorithms
only, and they hold in the worst case over all possible query sequences.
However, sometimes, in particular in the context of the
disk revealing structure, we will need the \emph{obliviousness assumption}.
This assumption says that the query sequence must be \emph{oblivious} to
the random choices of the algorithm and may not adapt to the state of
the data structure. This is a typical assumption in the analysis of
data structures and is used, e.g., in the analysis of hashing-based
structures. However, we should mention that in recent years a lot of
effort has been put into designing randomized data structures that
do not rely on the obliviousness assumption and that it is an intriguing
open problem to obtain a disk revealing structure that does not need
this assumption.

\section{Unit disk graphs}%
\label{sec:unit}

We first consider the case of unit disk graphs.
As mentioned in the introduction, this problem
was already addressed by Chan et 
al.~\cite{chan_dynamic_2011}.
They observed how to combine several known
results into a data structure
for connectivity queries in
fully dynamic unit disk graphs with amortized update time 
$O(\log^{6}n)$ and query time $O(\log n/\log \log n)$.
We briefly sketch their approach (see \cref{fig:approach}):
\begin{figure}
\centering
\includegraphics{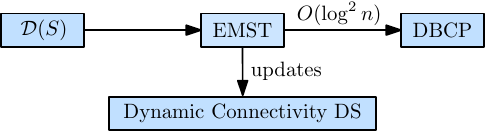}
\caption{A solution for unit disks with $O(\log^{6} n)$ update time.}
\label{fig:approach}
\end{figure}
let $T$ be the Euclidean minimum spanning tree 
(\textsf{EMST}) of $S$. If we 
remove from $T$ all 
edges with length more than $2$, the resulting forest 
$F$ is spanning for $\cD(S)$. Thus, to maintain the 
components of $\cD(S)$, 
it suffices to maintain the components of $F$. 
For this, we use an HLT-structure $\mathcal{H}$.
The geometric properties of the \textsf{EMST} ensure that 
inserting or deleting a site from $S$ modifies 
$O(1)$ edges in $T$. Thus, if we can efficiently 
find the set $E$ of edges in $T$ that change
during an update, we can maintain the components of $F$ through
$O(1)$ updates to $\mathcal{H}$, by considering all edges in $E$ 
of length at most $2$.
To find $E$, we need to 
dynamically maintain the \textsf{EMST} $T$ of $S$.
For this, there is a technique 
of Agarwal et al.~that reduces the problem to 
several instances of the \emph{dynamic bichromatic closest 
pair problem} (\textsf{DBCP}),
with an overhead of $O(\log^2 n)$ in the update 
time~\cite{agarwal_euclidean_1991}. 
Recently, Chan~\cite{Chan20a} showed how to solve the \textsf{DBCP}-problem
with a modified 
dynamic nearest neighbor (\textsf{DNN}) structure, 
in $O(\log^4 n)$ amortized update time. These two results
together allow us to maintain the \textsf{EMST} dynamically
with an amortized update time of $O(\log^6 n)$.\footnote{
At the time of Chan et al.'s original paper~\cite{chan_dynamic_2011},
Chan's improvement~\cite{Chan20a} was not available
yet. Instead, Chan et al.~\cite{chan_dynamic_2011} use a result by 
Eppstein that reduces the 
\textsf{DBCP} problem to 
several instances of the \textsf{DNN} problem
for points in the 
plane~\cite{eppstein_dynamic_1995}, 
with another $O(\log^2 n)$ factor as
overhead in the update time. Using Chan's original
\textsf{DNN} structure~\cite{chan_dynamic_2010} with 
amortized 
expected update time $O(\log^6 n)$, this leads to a total amortized expected update time of 
$O(\log^{10} n)$.} We can use $\mathcal{H}$ for 
queries in $O(\log n/\log \log n)$ time. 

To improve over this result, we replace the \textsf{EMST} by 
a simpler graph that still captures the connectivity of $\cD(S)$.
We also replace the \textsf{DNN} structure 
by a suitable maximal bichromatic matching (\MBM) structure that
is based on dynamic lower envelopes (\cref{lem:mbm-unit}).
These two changes improve the amortized update time
to
$O(\log^2 n)$, without affecting the query time.
 \cref{fig:approach2} shows the structure of our technique.

\begin{figure}
 \centering
 \includegraphics[page=2]{approach}
 \caption{The structure of our data structure for unit disks.}
 \label{fig:approach2}
 \end{figure}

First, we define a \emph{proxy graph} $H$ that 
represents the connectivity of $\cD(S)$.
This proxy graph groups close unit disks by a grid.
It is similar to other grid-based proxy graphs used to solve  problems related to unit disk graphs, e.g.\@ the construction of hop spanners~\cite{catusse_planar_2010}.
For a current set $S \subset \R^2$ of sites, the  vertices 
of $H$ are those cells of the level-1 grid $\G_1$ 
(cf.~\cref{sec:hier-grid}) that contain at least one site
of $S$, i.e.,
the cells $\sigma  \in \G_1$
with $\sigma \cap S \neq \emptyset$.
We call these cells \emph{non-empty}, and we 
say that a site $s \in S$ is \emph{assigned} 
to the cell $\sigma \in \G_1$ that contains it.
We let $S(\sigma)$ denote the sites that are
assigned to $\sigma$.
Two distinct non-empty cells $\sigma$, $\tau$ are connected 
by an edge in $H$ 
if and only if there is an edge $st\in \cD(S)$ with $s \in S(\sigma)$ 
and $t \in S(\tau)$.
The following lemma states that the proxy graph $H$ 
is sparse and that it represents the connectivity in $\cD(S)$:

\begin{lemma}
\label{lem:gridreachability}
The proxy graph $H$ has at most $n$ vertices, each with degree $O(1)$.
Two sites $s, t\in S$ 
are connected in $\cD(S)$ if and only if their assigned cells 
$\sigma$ and $\tau$ are connected in $H$.
\end{lemma}

\begin{proof}
Since every vertex of $H$ has at least one site of $S$ assigned to it,
and since every site is assigned to exactly one vertex,
there are at most $n$ vertices in $H$.
We say that two cells $\sigma$ and $\tau$ in $\G_1$ 
are \emph{neighboring} 
if $\tau \in N_{5\times 5}(\sigma)$.
Then, a cell $\sigma$ can be adjacent in $H$ only to the $O(1)$ 
neighboring cells of $\sigma$, as the distance to all other cells 
is larger than $2$. 

Next, we show
that (i) for every edge $st$ in $\cD(S)$, the assigned
cells $\sigma$ of $s$  and $\tau$ of $t$  are connected in $H$;
and  (ii) for every edge $\sigma\tau$ in $H$, all sites in $S(\sigma)$
can reach all sites in $S(\tau)$ in $\cD(S)$. This immediately 
implies the claim about the connectivity, because we can then map
paths in $\cD(S)$ to paths in $H$, and vice versa.
To prove (i), let $st$ be an edge of $\cD(S)$, and
consider the assigned cells
$\sigma, \tau$ with $s \in S(\sigma)$ and $t \in S(\tau)$. 
If $\sigma = \tau$, there is nothing to show.
If $\sigma \neq \tau$, the definition of $H$ immediately implies that 
the edge $\sigma\tau$ exists in $H$, and hence $\sigma$
and $\tau$ are connected.
For (ii),
we first 
note that for every vertex  
$\sigma$ in $H$,
the sites in $S(\sigma)$
induce a clique in $\cD(S)$.
Indeed, for any $s, t \in S(\sigma)$, 
we have $\| st\| < |\sigma| = 2$, 
so the unit disks $D_s$ and $D_{t}$
intersect, and $st$ is an edge in $\cD(S)$.
Now, let
$\sigma\tau$ be an edge  in $H$. 
By definition of $H$, there is at least one pair $s, t \in S$ 
with $s \in S(\sigma)$, and  $t \in S(\tau)$ such that $st$ is an
edge of $\cD(S)$.
Then, as the sites in $S(\sigma)$ and the sites in $S(\tau)$ each
form a clique in $\cD(S)$, all sites in $S(\sigma)$ are connected 
to $s$, 
and  
all sites in $S(\tau)$ are connected to $t$, so every
site in $S(\sigma)$ can reach every site in $S(\tau)$
by at most three steps in  $\cD(S)$.
The claim follows.
\end{proof}

In our connectivity structure, we maintain an HLT-structure $\cH$ 
for $H$.
To determine the connectivity between two sites $s$ and $t$, 
we first identify the cells $\sigma$ and $\tau$ in $\G_1$
to which $s$ and $t$ are assigned. This requires $O(1)$ time,
because as mentioned in~\cref{sec:compmodel},
we can store the assigned cell for $u$ in its satellite data 
during the insertion of $u$. 
Then, we query $\cH$ with the vertices $\sigma$ and $\tau$ of $H$, 
which, by \cref{lem:gridreachability}, yields the correct answer.
When a site $s$ is inserted into or deleted from $S$, 
only the edges incident to the assigned cell $\sigma$ of $s$ are affected.
By \cref{lem:gridreachability}, there are only $O(1)$ such edges.
Thus, once the set $E$ of these edges is determined, 
by \cref{thm:dynamicspanningtree}, we can update
$\mathcal{H}$ in time $O(\log^2 n)$.

We describe how to find the edges $E$ of $H$ that change when we 
update $S$.
For every pair $\sigma,\tau$ of neighboring cells in $\G_1$ 
where at least one cell is non-empty,
we maintain a maximal bichromatic matching (\MBM)
$M_{\{\sigma,\tau\}}$ for $R = S(\sigma)$ and $B = S(\tau)$,
as in \cref{lem:mbm-unit} (note that the special
requirements of the lemma apply here). 
By definition, 
$\sigma\tau$ is an edge of  $H$ if and only if 
$M_{\{\sigma,\tau\}}$ is not empty. 
When inserting or deleting a site $s$ from $S$, we proceed as 
follows: let $\sigma \in \G_1$ be the cell assigned to $s$. We go 
through all cells $\tau \in N_{5\times 5}(\sigma)$, and we update 
$M_{\{\sigma,\tau\}}$ by inserting or deleting $s$ from the relevant set
(creating or removing the underlying \MBM structure if necessary). 
If the matching $M_{\{\sigma,\tau\}}$ becomes non-empty during an insertion or 
empty during a deletion, we add the edge $\sigma\tau$ to $E$ and mark it 
for insertion or remove $\sigma\tau$ from E and mark it for deletion, respectively.
By \cref{lem:mbm-unit}, these updates to the \MBM-structures take
$O(\log^2 n)$ time.
Putting everything together, we obtain the main result of this section:

\begin{theorem}
\label{thm:dynamicUDG}
There is a dynamic connectivity structure for
unit disk graphs 
such that
the insertion or deletion of a site takes amortized time 
$O\left(\log^2 n\right)$
and  a connectivity query takes worst-case time
$O(\log n/\log \log n)$,
where $n$ is the maximum number of sites at any time. 
The data structure requires $ O(n)$ space.
\end{theorem}
\begin{proof}
The main part of the theorem follows from the discussion so far and
\cref{lem:gridreachability,lem:mbm-unit}.
For the space bound, note that the \MBM{}s have size linear in the number
of involved sites, and every site lies
in a constant number of cells and \MBM{}s.
Also, the total number of edges in $H$ is  $O(n)$, 
so the HLT-structure requires linear space as well.
\end{proof}

\subparagraph*{Remark.}
Following the remark regarding \cref{lem:mbm-unit}, there is evidence that the update time of \cref{thm:dynamicUDG} can be improved.
Using a better \MBM{} data structure and a faster data structure for maintaining the proxy graph~\cite{huang_fully_2017,wulff-nilsen_faster_2013} would lead to an improved update time (with possibly worse query and space bounds, depending on the data structures chosen).

\section{Polynomial dependence on \texorpdfstring{\(\Psi\)}{psi}}%
\label{sec:poly-dependence}

We extend our structure from \cref{thm:dynamicUDG}
to intersection graphs of arbitrary disks.
Now, the running times will depend
polynomially on the radius ratio $\Psi$. The general approach is unchanged, 
but the varying radii
of the disks introduce new issues.

Again, we use a grid to structure the sites, but
instead of the single grid $\G_1$, we rely on 
the first $\lfloor \log \Psi \rfloor + 1$ 
levels of the hierarchical grid $\G$ (see \cref{sec:hier-grid}). 
Each site $s$ is assigned 
to a grid level that is determined by the associated radius $r_s$.
As in the other sections, we assume without loss of generality that all those cells are part of the same globally aligned grids.
In \cref{sec:app:quadforest} we give the details on why this assumption can be made without negatively impacting the running time.

Since the disks have different sizes,
we can no longer use \cref{lem:mbm-unit} to maintain
the maximal bichromatic matchings (\MBM{}s) between
neighboring grid cells. Instead, we use
the more complex structure from \cref{lem:mbm-general}.
This increases the overhead for updating the \MBM
for each pair of neighboring cells.\footnote{Actually,
instead of the \MBM{}s, we could also use the dynamic
bichromatic closest pair (\textsf{DBCP}) structure for general
distance functions of Kaplan et al.~\cite{kaplan_dynamic_2020}.
It achieves the same time bounds, but it is slightly more complicated.
Hence, we prefer to stay with \MBM{}s in our presentation.}
Furthermore, we will see that a disk can now intersect disks 
from $\Theta(\Psi^2)$ other cells, generalizing the $O(1)$-bound
from the unit disk case.
Thus, the degree of the proxy graph, and hence the number
of edges that need to be modified in a single update, depends on $\Psi$.
To address this latter problem---at least partially---we describe
in \cref{sub:dg-limit-insertions-to-mbm} how to refine
the definition of the proxy graph 
so that fewer edges need to be 
modified in a single update.
This will reduce the dependence on $\Psi$ in the update time
from quadratic to linear. 
The query procedure becomes slightly more complicated,
but the asymptotic running time remains unchanged.

We note that the approach in \cref{sub:dg-adapting-the-unit-disk-case} 
is similar to the method of Kaplan et 
al.~\cite[Theorem~9.11]{kaplan_dynamic_2020} that 
achieves the same time and space bounds.
However, our implementation uses a hierarchical
grid instead of a single fine grid. This will be 
crucial for the improvement in 
\cref{sub:dg-limit-insertions-to-mbm},
so we first describe the details of the modified approach.

\subsection{Extending the unit disk case}
\label{sub:dg-adapting-the-unit-disk-case}
As for unit disk graphs, we define a proxy graph $H$ 
for $\cD(S)$ that is
based on grid cells and the intersections between the disks.
The vertices of $H$ are 
cells from the first $\lfloor \log \Psi \rfloor + 1$
levels of the hierarchical grid $\G$.
More precisely, a site $s \in S$ is \emph{assigned} to the 
cell $\sigma \in \G$ with $s \in \sigma$ 
and $|\sigma| \leq r_s < 2 |\sigma|$.\footnote{Note 
that this differs from \cref{sec:unit}, where 
sites with radius $1$ are assigned to cells of diameter $2$.}
We define the \emph{level} of $s$ as the level of the
cell $\sigma$ that $s$ was assigned to.
As before in \cref{sec:unit}, we denote the set of sites assigned to a cell $\sigma$ by 
$S(\sigma)$.
It is still the case that all sites in $S(\sigma)$
form a clique in $\cD(S)$, since for $s, t \in S(\sigma)$,
we have $\|st\| \leq |\sigma| \leq r_s + r_t$.
The vertices of $H$ are all cells $\sigma$ of $\G$ 
with $S(\sigma) \neq \emptyset$.

As in \cref{sec:unit}, we connect two cells 
$\sigma, \tau \in \G$ by an edge in $H$
if and only if there are  assigned sites
$s \in S(\sigma)$ 
and $t \in S(\tau)$ such that
$st$ is an edge in $\cD(S)$.
Note that $\sigma$ and $\tau$ do not have to be on the same level.
We call a pair of cells $\sigma$, $\tau$ \emph{neighboring} 
if and only if it is possible that they could become
adjacent in $H$, i.e., if some disks in $S(\sigma)$
and $S(\tau)$ could intersect. This is the case if
the distance between $\sigma$ and $\tau$ is less than 
$2 | \sigma | + 2 | \tau |$, see \cref{fig:neighborhood-in-grids-quadtree}.
Since we are now dealing with cells from $\lfloor \log \Psi \rfloor + 1$
levels, 
the degree in $H$ depends on $\Psi$.

\begin{lemma}
    \label{lem:dg-graph-properties}
    The proxy graph $H$ has at most $n$ vertices.
    Let $\sigma$ be a vertex of $H$, and let  $\mathcal{N}(\sigma)
    \subseteq \G$ be the neighboring cells of $\sigma$.
    Then, we have $|\mathcal{N}(\sigma)| = O(\Psi^2)$, so $H$
    has maximum
    degree $O(\Psi^2)$.
    Two sites of $S$
    are connected in $\cD(S)$ if and only if their assigned 
    cells are connected in $H$.
\end{lemma}
\begin{proof}
    As in \cref{lem:gridreachability}, the bound on the
    vertices follows from the facts that
    every vertex in $H$ has at least one site assigned to it, 
    and that every site is assigned to exactly one vertex.
    Now, let $\sigma$ be a vertex of $H$, and 
    let $\mathcal{N}(\sigma)$ the set of neighboring cells for
    $\sigma$.
    Recall that all sites are assigned 
    to the lowest $\lfloor \log \Psi \rfloor + 1$ levels of $\G$.
    Let $\ell \in \{0, \dots,  \lfloor \log \Psi \rfloor\}$
    be such that $\sigma \in \G_{\ell}$.
    First, fix an $\ell'$ with  
    $\ell \leq \ell' \leq \lfloor \log \Psi \rfloor$, and let 
    $\sigma' \supseteq \sigma$ be 
    the cell at level $\ell'$ that contains $\sigma$.
    As can be seen by simple volume considerations, all cells in
    $\mathcal{N}(\sigma) \cap \G_{\ell'}$ lie in the neighborhood 
    $N_{13\times 13}(\sigma')$, so there
    are at most $13^2$ of them.
    Next, consider $\ell'$ with $0 \leq \ell' < \ell$. 
    All cells
    of  $\mathcal{N}(\sigma) \cap \G_{\ell'}$
    are contained in
    the region $N_{13 \times 13}(\sigma)$.
    Thus, since every cell of $\G_\ell$ contains exactly
    $4^{\ell - \ell'}$ cells of $\G_{\ell'}$, we have
    $|\mathcal{N}(\sigma) \cap \G_{\ell'}| \leq 13^2 \cdot 4^{\ell - \ell'}$.
    Altogether, $|\mathcal{N}(\sigma)|$ is at most
    \[
        13^2 \cdot \sum_{i=0}^{\lfloor \log \Psi \rfloor} 4^i 
	= O(4^{\log \Psi}) = O(\Psi^2).
    \]
    Now, the claim on the maximum degree of $H$ is
    immediate. The claim on the connectivity is shown verbatim as in
    the proof of \cref{lem:gridreachability}.
\end{proof}

\begin{figure}
    \centering
    \includegraphics{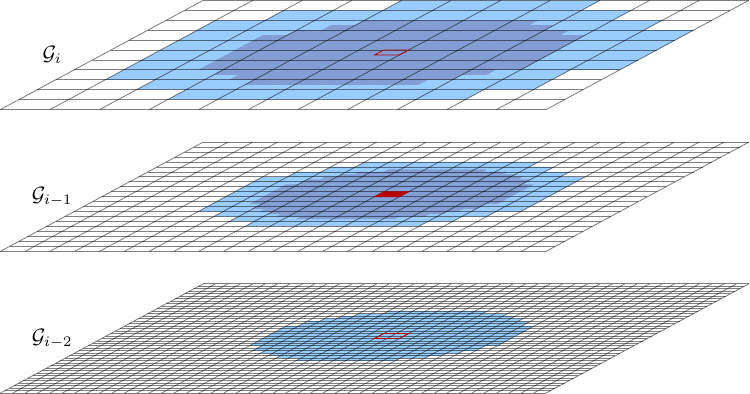}
    \caption{
        The neighboring cells of the red cell in $\G_{i-1}$.
        The area of the neighboring cells in one level beneath is 
	colored in a darker shade.
    }
    \label{fig:neighborhood-in-grids-quadtree}
\end{figure}

We now describe the details of our data structure. The main 
ingredient is a \emph{quadforest} $\mathcal{F}$.
The quadforest helps us find the relevant edges of the proxy graph
$H$ that need to be updated.
Let $\mathcal{C} = \bigcup_{\sigma \in H} \{\sigma\} \cup \mathcal{N}(\sigma)$
be the set of all vertices $\sigma$ in $H$ together with
their neighboring cells in $\G$.

Let $\mathcal{C}'$ be the set of all cells
in $\G_{\lfloor \log \Psi \rfloor}$ 
that contain
at least one cell from $\mathcal{C}$.
For each cell $\rho \in \mathcal{C}'$, we construct
a \emph{quadtree} $\mathcal{T}_{\rho}$ (cf.~\cref{sec:hier-grid}): 
we make  $\rho$ the
root of $\mathcal{T}_{\rho}$. If $\rho$ properly contains
a cell of $\mathcal{C}$, we take the four cells of $\G$
that lie in $\rho$ in the level below, 
and we add each such subcell as a child to $\rho$
in $\mathcal{T}_{\rho}$. We recurse on all children of $\rho$
that still properly contain a cell from $\mathcal{C}$,
until $\mathcal{T}_{\rho}$ is complete.
The resulting set of trees $\mathcal{T}_\rho$, for all
$\rho \in \mathcal{C}'$, 
constitutes our quadforest
$\mathcal{F}$. 
Since $\mathcal{C}$ contains the full neighborhood for every vertex in $H$, the 
quadforest $\mathcal{F}$ has $O(\Psi^2 n)$ nodes, by
\cref{lem:dg-graph-properties}. The number of quadtrees in $\mathcal{F}$ 
is $O(n)$, since the roots for the neighboring cells 
of a vertex $\sigma \in H$ must be in the 
$(13 \times 13)$-neighborhood of the root for $\sigma$.
We store the $O(n)$ quadtree root cells of $\mathcal{F}$ in a 
red-black tree~\cite{cormen_introduction_2009} 
that allows us to locate a root 
in $O(\log n)$ time when given the coordinates of its
lower-left corner. 
We have an \MBM-structure of \cref{lem:mbm-general} for every pair of neighboring cells
$\sigma$, $\tau$ such that $S(\sigma) \neq \emptyset$ or
$S(\tau) \neq \emptyset$, containing the sites from
$S(\sigma)$ and $S(\tau)$. 
The proxy graph $H$ is represented by an 
HLT-structure $\cH$
(cf.~\cref{thm:dynamicspanningtree}). In $\cH$, we  store a vertex for every 
cell $\sigma$ with $S(\sigma) \neq \emptyset$ and
 edge $\sigma\tau$ for every pair of neighboring cells
$\sigma$ and $\tau$ whose \MBM is nonempty.

For every cell $\sigma$ in $\mathcal{F}$,
we store the set
$S(\sigma)$ of assigned sites (possibly empty).
Additionally, we store for every cell $\sigma$
a red-black tree containing all \MBM-structures that involve $\sigma$.
These red-black trees use as index the center of the other
cell $\tau$ that defines the \MBM-structure together with $\sigma$, allowing retrievals and updates in $O(\log n)$ time given $\tau$.
The center of a cell is unique, even across different cell sizes.
The total space for $\mathcal{F}$, the red-black tree of the  roots, and
$\cH$ is $O(\Psi^2 n)$, so the space is dominated by the total size 
of the
\MBM{}-structures.
 By \cref{lem:mbm-general}, this is 
bounded by $O(\Psi^2 n \log n)$,
since every site appears in at most $O(\Psi^2)$ \MBM{}s.
See \cref{fig:approach-psi-squared} for an overview of the structure.

\begin{figure}
    \centering
    \includegraphics{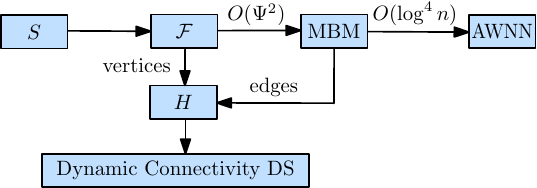}
    \caption{The structure of our initial data structure for general disks.}
    \label{fig:approach-psi-squared}
\end{figure} 

To perform a connectivity query for
two sites $s$ and $t$,
we first obtain the cells $\sigma$ and $\tau$ with 
$s \in S(\sigma)$ and $t \in S(\tau)$.
This takes $O(1)$ time, because this information can
be stored with a site when it is inserted into the structure.
Then, we
query the HLT-structure $\mathcal{H}$ with 
$\sigma$ and $\tau$, in $O(\log n / \log\log n)$ time. 
By \cref{lem:dg-graph-properties}, this yields the correct result.

To insert a site $s$ into $S$, 
we determine the cell $\sigma \in \G$ to which 
$s$ is assigned, as well as the set of neighboring
cells $\mathcal{N}(\sigma)$ of $\sigma$.
We locate all cells of $\{ \sigma \} \cup \mathcal{N}(\sigma)$ in 
$\mathcal{F}$, creating new quadtree nodes if necessary. This takes
$O(\Psi^2 + \log n)$ time,
by \cref{lem:dg-graph-properties} and the time required to obtain the quadtree roots.
We add $s$ to $S(\sigma)$, and we insert
$s$ into the \MBM{} structures for $\sigma$ and every neighboring cell 
$\tau \in \mathcal{N}(\sigma)$, creating new \MBM-structures if necessary. 
For every \MBM that becomes non-empty
we insert an edge into the HLT-structure $\mathcal{H}$.
Since $|\mathcal{N}(\sigma)| = O(\Psi^2)$,
by Lemmas~\ref{lem:mbm-general} and \ref{lem:dg-graph-properties},
updating the \MBM{}s needs
amortized expected time $O(\Psi^2 \log^{4} n)$ (this
also includes the time for inserting the new \MBM{}s into their respective
red-black trees),
and the update in $\mathcal{H}$ needs $O(\Psi^2 \log^2 n)$ amortized time by \cref{thm:dynamicspanningtree}.
Thus, the total amortized expected time for an insertion is
$O(\Psi^2 \log^{4} n)$.

Similarly, to delete a site $s \in S$, we locate the cell
$\sigma$ with $s \in S(\sigma)$ together
with the set of neighboring cells $\mathcal{N}(\sigma)$ of $\sigma$
in $\mathcal{F}$.
This takes  $O(\Psi^2 + \log n)$ 
time, by \cref{lem:dg-graph-properties} and the time required to obtain the quadtree roots. Then, we 
delete $s$ from $S(\sigma)$ and from all \MBM{}s for
$\sigma$ and a neighboring cell $\tau$.
If now $S(\sigma) = \emptyset$,
we also delete for all neighboring cells $\tau$ with $S(\tau) = \emptyset$ the \MBM{} for $\sigma$ and $\tau$.
Since $\sigma$ has $O(\Psi^2)$ neighboring cells,
all this needs
amortized expected time $O(\Psi^2 \log^{4} n)$
(including the time for deleting the \MBM{}s from their
respective red-black trees).
Afterwards, for all $\tau \in \mathcal{N}(\sigma)$ 
whose \MBM with $\sigma$ has become empty or was deleted through the deletion
of $s$, we delete the edge $\sigma\tau$ from the  HLT-structure $\mathcal{H}$.
Since there are 
$O(\Psi^2)$ such edges,
this takes $O(\Psi^2 \log^2 n)$ amortized time.
Finally, we
delete from 
$\mathcal{F}$ all cells $\tau \in \{ \sigma  \} \cup \mathcal{N}(\sigma)$ that
have $S(\tau) = \emptyset$ and that no longer occur in the
neighborhood of any vertex of $H$ with an assigned site.
The last condition corresponds to all cells of $\{ \sigma \} \cup \mathcal{N}(\sigma)$ with an empty \MBM{} red-black tree.
Hence, they can be obtained in $O(\Psi^2 + \log n)$ time.
Overall, the deletion time is dominated by the updates in
the \MBM{}s, and it is bounded by
$O(\Psi^2 \log^{4} n)$.
We obtain the following theorem:

\begin{theorem}
    \label{thm:dg-dynamicDG}
    There is a fully dynamic connectivity structure for
    disk graphs of bounded radius ratio $\Psi$
    such that
    an update takes amortized expected time 
    $O(\Psi^2 \log^{4} n)$
    and  a connectivity query takes worst-case time
    $O(\log n/\log \log n)$,
    where $n$ is the maximum number of sites at any time.
    The data structure requires $O(\Psi^2 n \log n)$ space.
\end{theorem}

\subsection{Improving the dependence on \texorpdfstring{$\Psi$}{the
radius ratio}}
\label{sub:dg-limit-insertions-to-mbm}

To improve over \cref{thm:dg-dynamicDG}, we show how to
reduce the degree of the proxy graph $H$
from $O(\Psi^2)$ to $O(\Psi)$.
The intuition is that it suffices
to focus on disks that are not contained in any other disk of $S$ to maintain
the connected components of $\cD(S)$. 
We call these disks \emph{(inclusion) maximal}.
Then, 
we need to consider only edges between disks that intersect
\emph{properly}, i.e.,  their boundaries intersect. 
If we want to perform a connectivity query
between two sites $s$ and $t$, we must find appropriate
maximal disks in $S$ that contain $D_s$ and $D_t$.
See \cref{fig:dg-disks-path-ignore-fully-contained} for examples depicting this intuition.

The following definition formalizes the notion of containment between 
disks.
The definition is based on grids, since we must take into account not only 
the disks that are currently present in the structure, but also
the disks that could be inserted in the future.

\begin{definition}
    \label{def:fully-contained}
    Let $D$ be a disk and $\sigma \in \G$ a grid cell with
    level at most $\lfloor \log \Psi \rfloor$.
    We say that $\sigma$ is \emph{Minkowski covered} by $D$ if and only if 
    every possible disk that can be assigned to 
    $\sigma$ fully lies in $D$.
    We call $\sigma$ \emph{maximal} 
    if and only if there is no larger cell $\tau \supset \sigma$ in
    $\G$ that
    is Minkowski covered by $D$.
\end{definition}

Equivalently, \cref{def:fully-contained} states  that
$\sigma$ is Minkowski covered by $D$ if and only if  
(i) $\sigma$ has level at most $\lfloor \log \Psi \rfloor$ and
(ii) the Minkowski sum of $\sigma$ with an
open disk of radius $2|\sigma|$ is contained in $D$. 
By definition, 
if $\sigma$ is Minkowski covered
by $D$, then all smaller cells $\tau \in \G$ 
with $\tau \subset \sigma$ 
are also Minkowski covered by $D$.
The following lemma bounds the number of cells of different types,
for a given disk.
See \cref{fig:fully-contained-in-grids-quadtree} for 
an illustration.

\begin{figure}
    \centering
    \colorlet{colorfullytopmost}{mypurple}
    \colorlet{colorfully}{mygreen}
    \colorlet{colorboundary}{mylightblue}
    \includegraphics{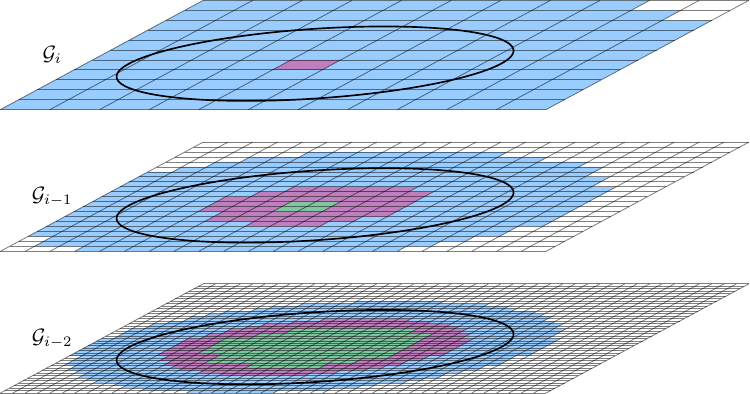}
    \caption{
        The types of cells that 
	require checking when updating the black disk in the 
	data structure of \cref{thm:dg-dynamicDG}:
        {\sffamily\bfseries\color{colorboundary}\pgfsetfillopacity{0.5}\rule{0.73em}{0.73em}\pgfsetfillopacity{1}}~$\mathcal{N}_1$:~not Minkowski covered \hfill
        {\sffamily\bfseries\color{colorfullytopmost}\pgfsetfillopacity{0.5}\rule{0.73em}{0.73em}\pgfsetfillopacity{1}}~$\mathcal{N}_2$:~maximal~Minkowski~covered \hfill\\
        {\sffamily\bfseries\color{colorfully}\pgfsetfillopacity{0.5}\rule{0.73em}{0.73em}\pgfsetfillopacity{1}}~$\mathcal{N}_3$:~Minkowski~covered, not maximal
    }
    \label{fig:fully-contained-in-grids-quadtree}
\end{figure}

\begin{lemma}
    \label[lemma]{lem:dg-linear-cells-intersecting-disks-edge}
    Let $s \in S$ be a site, and let $\mathcal{N}(s)$ be the set of cells of
    $\G$ that may have an assigned disk that intersects $D_s$.
    Write $\mathcal{N}(s)$ as the disjoint union 
    $\mathcal{N}(s) = \mathcal{N}_1(s) \mathrel{\dot\cup} \mathcal{N}_2(s) \mathrel{\dot\cup} \mathcal{N}_3(s)$,
    such that (i) $\mathcal{N}_1(s)$ consists of the cells that are not 
    Minkowski covered by 
    $D_s$, (ii) $\mathcal{N}_2(s)$ consists of  the cells that are maximal Minkowski covered
    by $D_s$, and (iii) $\mathcal{N}_3(s)$ consists of  the cells that are Minkowski covered
    by $D_s$, but not maximal with this property.
    Then, we have $|\mathcal{N}_1(s) \cup \mathcal{N}_2(s)| = O(\Psi)$
    and $|\mathcal{N}_3(s)| = O(\Psi^2)$.
\end{lemma}

\begin{proof}
    First, we show that $|\mathcal{N}_1(s)| = O(\Psi)$.
    Let $\ell \in \{0, \dots, \lfloor \log \Psi \rfloor\}$ be the level of $s$,
    and let $\sigma \in \G_\ell$ be the cell with $s \in S(\sigma)$.
    By definition, all cells of $\mathcal{N}_1(s)$ have level at most
    $\lfloor \log \Psi \rfloor$.
    We bound 
    for each level $\ell' \in \{0, \dots,  \lfloor \log \Psi \rfloor\}$
    the size of $\mathcal{N}_1(s) \cap \G_{\ell'}$.
    For $\ell' \geq \ell$, there are $O(1)$ cells
    in $\G_{\ell'}$ that are neighboring to $\sigma$, so
    we have $|\mathcal{N}_1(s) \cap \G_{\ell'}| = O(1)$.
    For $\ell' < \ell$, we observe that any cell in $\mathcal{N}_1(s) \cap 
    \G_{\ell'}$
    must intersect the annulus $A_s$ that is centered at the boundary of  
    $D_s$ and has width 
    $\Theta(2^{\ell'})$,
    see \cref{fig:fully-contained-in-grids-quadtree}.
    The area of $A_s$ is
    $\Theta(r_s \cdot 2^{\ell'}) = \Theta(2^{\ell + \ell'})$.
    Since the cells at level $\ell'$ have pairwise disjoint
    interiors and area
    $\Theta(2^{2\ell'})$,
    a simple volume argument shows
    that $|\mathcal{N}_1(s) \cap \G_{\ell'}| = O(2^{\ell - \ell'})$.
    Adding over all $\ell'$, we get 
    \begin{align*}
    |\mathcal{N}_1(s)| = 
    \sum_{\ell' = 0}^{\lfloor \log \Psi \rfloor} 
    |\mathcal{N}_1(s) \cap G_{\ell'}|
    &=
    \sum_{\ell' = 0}^{\ell - 1} |\mathcal{N}_1(s) \cap G_{\ell'}|
    +
  \sum_{\ell' = \ell}^{\lfloor \log \Psi \rfloor} 
    |\mathcal{N}_1(s) \cap G_{\ell'}|\\
    &=
    \sum_{\ell' = 0}^{\ell - 1}  O\Big(2^{\ell - \ell'}\Big)
    +
  \sum_{\ell' = \ell}^{\lfloor \log \Psi \rfloor} 
     O(1)\\
    &= O(2^\ell) + O(\log \Psi)
    = O(\Psi).
    \end{align*}
    Next, we bound $|\mathcal{N}_1(s) \cup \mathcal{N}_2(s)|$. 
    For this, we note that every cell in $\mathcal{N}_2(s)$ either
    (i) has level $\lfloor \log \Psi \rfloor$ (and there 
    are $O(1)$ such cells that are Minkowski covered
    by $D_s$), or (ii)
    is one of the four cells that are directly contained
    in a cell $\sigma \in \mathcal{N}_1(s)$ and have diameter
    $|\sigma|/2$.
    Hence, we get
    $|\mathcal{N}_1(s) \cup \mathcal{N}_2(s)| = 
    O(|\mathcal{N}_1(s)|) = O(\Psi)$.
    The bound on $|\mathcal{N}_3(s)|$ follows from 
    \cref{lem:dg-graph-properties}.
\end{proof}

Now, 
we show how to reduce the degree
of the proxy graph. 
Let $H$ be the proxy graph from
\cref{sub:dg-adapting-the-unit-disk-case}, 
and
let $H'$ be a subgraph of $H$ that is defined as follows:
as in $H$, the vertices of $H'$ are all the grid cells $\sigma \in \G$
with $S(\sigma) \neq \emptyset$.
Let $\sigma, \tau$ be two distinct vertices of $H'$.
Then, there is an edge between $\sigma$ and $\tau$
in $H'$ if and only if there are
sites $s \in S(\sigma)$ and $t \in S(\tau)$ such that
(i) $D_s$ and $D_t$ intersect; 
(ii) $D_s$ does \emph{not}
Minkowski cover $\tau$; and (iii) $D_t$ does \emph{not} Minkowski cover $\sigma$.
From the definition, it is immediate that $H'$ is a subgraph of $H$,
and that there is an edge between $\sigma$ and $\tau$
if and only there are sites $s \in S(\sigma)$ and $t \in S(\tau)$
such that (i) $D_s$ and $D_t$ intersect; (ii) $\tau \in \mathcal{N}_1(s)$;
and (iii) $\sigma \in \mathcal{N}_1(t)$.

Let $\sigma$ be a vertex in $H'$. In order to implement
connectivity queries in the sparsified graph $H'$, 
we define the \emph{proxy vertex}
$\sigma'$ of $\sigma$ in $H'$ as follows: if there is no site $s \in S$
such that
$D_s$ Minkowski covers $\sigma$, then $\sigma' = \sigma$. Otherwise,
let $\overline{\sigma}$ be the maximal cell such that
(i) $\overline{\sigma}$  contains
$\sigma$ and (ii) there is a site $s \in S$ such 
that $\overline{\sigma}$ is Minkowski covered by $D_s$.
Let $t \in S$ be a site of maximum radius such that 
$D_t$ Minkowski covers 
$\overline{\sigma}$,
and let $\rho$ be the cell that $t$ is assigned to, i.e.,
$\rho$ is the cell with
$t \in S(\rho)$.
Then, the proxy vertex $\sigma'$ for $\sigma$ in $H'$
is defined as  $\rho$.
If there are multiple choices for $t$, we fix an arbitrary one.
See \cref{fig:proxy-cell} for an illustration.
The next lemma shows that $H'$ preserves the connectivity between
proxy vertices.

\begin{figure}
    \centering
    \includegraphics{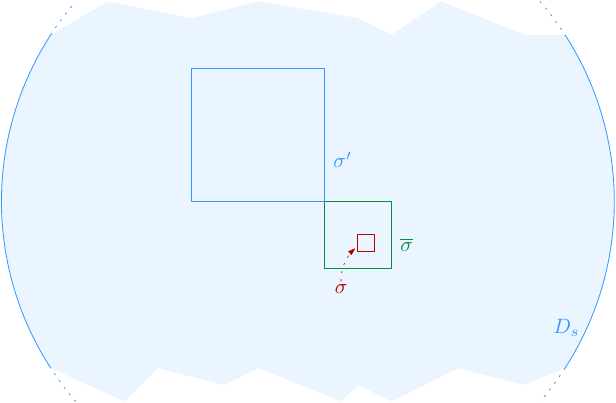}
    \caption{Obtaining the proxy vertex for $\sigma$. We assume $\overline{\sigma}$ is the largest cell which (i) contains $\sigma$ and (ii) is Minkowski covered by a disk. If $D_s$ is the largest disk doing so, and we have $s \in S(\sigma')$,   then $\sigma'$ is the proxy vertex for $\sigma$. This means in particular that for each $v \in S(\sigma)$ the disk $D_s$ properly contains $D_v$.}
    \label{fig:proxy-cell}
\end{figure}

\begin{lemma}
    \label[lemma]{lem:dg-gnore-fully-connected}
    Let $s, t \in S$ be two sites, and let $\sigma, \tau$ be
    the cells with $s \in S(\sigma)$ and $t \in S(\tau)$.
    Let $\sigma'$ and $\tau'$ be the proxy vertices for 
    $\sigma$ and $\tau$ in $H'$.
    Then, $\sigma'$ and $\tau'$ are  connected in $H'$ 
    if and only if $s$ and $t$ are connected in $\cD(S)$.
\end{lemma}

\begin{proof}
    First, suppose that $\sigma'$ and $\tau'$ are connected
    in $H'$. 
    The cells
    $\sigma'$ and $\tau'$ are 
    connected in $H$,
    since $H'$ is a subgraph of $H$.
    From the definition of proxy vertices
    it is immediate that $\sigma'$ is adjacent in $H$ to $\sigma$ and 
    that $\tau'$ is adjacent in $H$ to $\tau$.
    Thus, 
    $\sigma$ and $\tau$ are 
    connected in $H$,
    and \cref{lem:dg-graph-properties} shows that
    $s$ and $t$ are connected in $\cD(S)$.

    Next, suppose that 
    $s$ and $t$ are connected in $\cD(S)$.
    We call a disk $D_u$, $u \in S$, 
    \emph{(inclusion) maximal} if there is no site $v \in S$ with
    $u \neq v$ and $D_u \subset D_v$.
    Our strategy is to consider a path $\pi$ of 
    \emph{(inclusion) maximal} disks 
    that connects $s$ and $t$ in  $\cD(S)$, and to show
    that $\pi$ induces a path $\pi'$ between $\sigma'$ and $\tau'$ in 
    $H'$. 
    To construct $\pi$,
    we first find a maximal disk $D_{s'}$ 
    that \emph{represents} $D_s$, as follows:
    if the proxy vertex $\sigma'$ is different from $\sigma$, we let $D_{s'}$
    be the disk used to define $\sigma'$,  i.e.,
    the disk of maximum radius that Minkowski
    covers $\overline{\sigma}$, the maximal Minkowski covered
    cell that contains $\sigma$.  Then, the disk $D_{s'}$ is indeed maximal,
    since any disk that properly contains $D_{s'}$ would have larger
    radius and would Minkowski cover $\overline{\sigma}$, contradicting the choice of 
    $D_{s'}$. 
    If the proxy vertex $\sigma'$ is $\sigma$ itself, we let 
    $D_{s'}$, $s' \in S$, be an arbitrary
    maximal disk that contains $D_s$. Possibly, this may be $D_s$ itself, but it also may be that $s' \not \in S(\sigma)$.
    In the same way, we obtain a maximal disk $D_{t'}$ that represents $D_t$.

    Now, we claim that there is a path $\pi$ in $\cD(S)$ 
    between $s'$ and $t'$ 
    that uses only maximal disks: 
    by our choice of $s'$ and $t'$, we know that $s'$ is in the
    same connected component as $s$, and $t'$ is in the same connected
    component as $t$. Since we assumed that $s$ and $t$ are connected
    in $\cD(S)$, this also holds for $s'$ and $t'$. Consider a 
    path in $\cD(S)$ between $s'$ and $t'$, and replace 
    every disk along this path by a maximal disk that contains it. 
    The resulting path
    $\pi$ 
    has the required property
    (possibly after making shortcuts between duplicate disks). 
    See \cref{fig:dg-disks-path-ignore-fully-contained} for an
    illustration.

    Consider the sequence $\pi'$ of cells in $H'$ that
    we obtain by replacing every site $u$ in $\pi$ by the vertex
    $\sigma_u$ of $H'$ with $u \in S(\sigma_u)$, and by 
    taking shortcuts between duplicate cells. Then,
    $\pi'$ is a path in $H'$, since the definition
    of $H'$ implies that the assigned cells for
    two intersecting maximal disks of $S$ must be either identical or  
    adjacent in $H'$.
    Furthermore, the first cell of $\pi'$ is either the
    proxy vertex $\sigma'$ or adjacent in $H'$ to $\sigma'$.
    The latter may happen if $\sigma' = \sigma$ and 
    $S$ has  a disk  that contains $D_s$,
    but no disk that Minkowski covers $\sigma$, see \cref{fig:maximal-path-replacement-adjacent}.
    Similarly, the last cell of $\pi'$ is either $\tau'$ or a cell
    that is adjacent to $\tau'$ in $H'$.
    Thus, by possibly adding $\sigma'$ and/or $\tau'$ to $\pi'$, we
    obtain a path $\pi'$ in $H'$ that connects $\sigma'$ and $\tau'$.
\end{proof}

\begin{figure}
    \centering
    \begin{subfigure}[t]{0.45\textwidth}
        \begin{center}
            \includegraphics{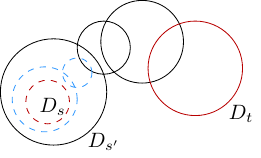}
        \end{center}
        \caption{When $D_s$ is connected to $D_t$, then the inclusion maximal disk $D_{s'} \supseteq D_s$, which represents $D_s$, is also connected to $D_t$.}
            \label{fig:dg-disks-path-ignore-fully-contained:b}
    \end{subfigure}
    \hfill
    \begin{subfigure}[t]{0.45\textwidth}
        \begin{center}
            \includegraphics{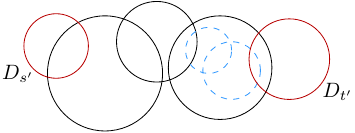}
        \end{center}
        \caption{A path between the two red disks 
	may consist of the black disks as intermediates.
        Note that any non-maximal disk is not required to form a path and can be safely ignored.}
        \label{fig:dg-disks-path-ignore-fully-contained:a}
    \end{subfigure}
    \caption{The two main ingredients for the construction of path $\pi$ in \cref{lem:dg-gnore-fully-connected}.}
    \label{fig:dg-disks-path-ignore-fully-contained}
\end{figure}

\begin{figure}
    \centering
    \includegraphics{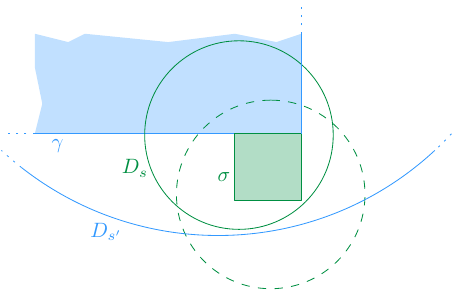}
    \caption{In the proof of \cref{lem:dg-gnore-fully-connected} the first cell of $\pi'$ might not be the proxy vertex of $\sigma$. This occurs if $\sigma$ is its own proxy cell, there exist a maximal disk $D_{s'}$ containing $D_s$, $s' \not \in S(\sigma)$, and $D_{s'}$ does not Minkowski cover $\sigma$. In this example, we have $s \in S(\sigma)$, $s' \in S(\gamma)$, and the dashed disk would be assigned to $\sigma$, hence $D_{s'}$ does not Minkowski cover $\sigma$.}
    \label{fig:maximal-path-replacement-adjacent}
\end{figure}

\cref{lem:dg-gnore-fully-connected} implies that it is enough to 
maintain the sparsified proxy graph $H'$ in an HLT-structure
$\mathcal{H}$. To perform updates and
queries efficiently, we need to do two things:
first, 
we need to determine
which edges in $H'$ are affected by an update in $S$;
second, 
in order to perform
a connectivity query between two sites $s$ and $t$, we  not
only need to determine the assigned cells $\sigma$ and $\tau$ for
$s$ and $t$, but we must also find the proxy vertices 
$\sigma'$ for $\sigma$ and $\tau'$ for $\tau$ in $H'$.
We 
again maintain a  \emph{quadforest} $\mathcal{F}$:
for each $s \in S$, let $\sigma_s$ be the cell
with $s \in S(\sigma_s)$,
and let $\mathcal{C} = \bigcup_{s \in S} \{\sigma_s\} \cup \mathcal{N}_1(s)
\cup \mathcal{N}_2(s)$, with the
notation from 
\cref{lem:dg-linear-cells-intersecting-disks-edge}.\footnote{Note that
unlike in~\cref{sub:dg-adapting-the-unit-disk-case}, the neighborhoods
depend on the site $s$ and not just the cell $\sigma$ that $s$
is assigned to.}

As in~\cref{sub:dg-adapting-the-unit-disk-case},
we use $\mathcal{C}$ to define $\mathcal{F}$.
Let $\mathcal{C}'$ be the set of all cells in 
$\G_{\lfloor \log \Psi \rfloor}$ that contain
at least one cell from $\mathcal{C}$. 
For each
$\rho \in \mathcal{C}'$, let  $\mathcal{T}_\rho$ be the quadtree 
for the cells in $\mathcal{C}$ that lie in $\rho$ 
(cf.~\cref{sub:dg-adapting-the-unit-disk-case}).
Let $\mathcal{F}$
be the quadforest that contains all quadtrees $\mathcal{T}_\rho$,
$\rho \in \mathcal{C}'$.
By \cref{lem:dg-linear-cells-intersecting-disks-edge}, the quadforest
$\mathcal{F}$ contains $O(\Psi n)$ nodes, and as in 
\cref{sub:dg-adapting-the-unit-disk-case}, we see
that is has $O(n)$ roots.
We store the  
root cells of $\mathcal{F}$ in a red-black tree~\cite{cormen_introduction_2009}, indexed by the lower-left corner 
of the cells. We have an \MBM-structure for every pair of cells
$\sigma$, $\tau$ with $S(\sigma) \neq \emptyset$ and 
$\tau \in \bigcup_{ s \in S(\sigma) } \mathcal{N}_1(s)$ or with
$S(\tau) \neq \emptyset$ and 
$\sigma \in \bigcup_{t \in S(\tau)} \mathcal{N}_1(t)$.
The \MBM-structure for $\sigma$ and $\tau$ contains all
sites $s \in S(\sigma)$ with $\tau \in \mathcal{N}_1(s)$ and
all sites $t \in S(\tau)$ with $\sigma \in \mathcal{N}_1(t)$.
The proxy graph $H$ is represented by an 
HLT-structure $\cH$
(cf.~\cref{thm:dynamicspanningtree}). In $\cH$, we  store a vertex for every 
cell $\sigma$ with $S(\sigma) \neq \emptyset$
and edge $\sigma\tau$ for every pair of cells
$\sigma$ and $\tau$ whose \MBM is nonempty.

For 
every cell $\sigma$ in $\mathcal{F}$,
we store several data structures.
First, 
we maintain the
set $S(\sigma)$ of assigned sites for $\sigma$ (possibly empty).
Second, we store the set 
$\mathcal{C}(\sigma)$ of all sites $s \in S$ 
such that $\sigma$ is maximal Minkowski covered by $D_s$,
i.e., such that $\sigma \in \mathcal{N}_2(s)$.
The set $\mathcal{C}(\sigma)$ is represented as a red-black tree, 
where the ordering is determined by the associated radius $r_s$.
In addition, a pointer to the maximum element is maintained, allowing its retrieval in $O(1)$ time.
Third, we have the  \MBM{}s that involve $\sigma$.
The \MBM{}s are organized in a red-black tree, indexed by the lower-left corner of a neighboring
vertex $\tau$. The total space for $\mathcal{F}$, the red-black tree 
of roots, the $\mathcal{C}(\sigma)$ structures and $\mathcal{H}$ is 
$O(\Psi n)$, so the space is dominated by the total size
of the \MBM-structures. This is bounded by $O(\Psi n \log n)$, 
since every site appears in at most $O(\Psi)$ \MBM{}s.

Let $q = \lceil \log \log \Psi / 2 \rceil$.
In the quadforest $\mathcal{F}$, we designate the cells with level 
$i \cdot q$, for $0 \leq i \leq \lfloor \log \Psi \rfloor/q$
as \emph{special}.
In $\mathcal{F}$, we maintain a pointer from 
every 
special cell $\rho$ to the lowest special ancestor of $\rho$,
lying $q$ levels above. Additionally, the special cell $\rho$ 
stores a red-black tree $\overline{\mathcal{C}}(\rho)$ 
that contains the elements of the red-black trees
$\mathcal{C}(\rho')$ for all ancestors $\rho'$ of $\rho$ that
are between $\rho$ and the next special level above it.
Since the elements from a red-black tree $\mathcal{C}(\rho')$ appear
in at most $4^q = O(\log  \Psi)$ special red-black trees, the total
additional space is $O( \Psi n \log \Psi)$.
It follows that the whole structure needs 
$O(\Psi n (\log n + \log \Psi))$ space.
See \cref{fig:approach-psi} for an overview of the data structure.

\begin{figure}
    \centering
    \includegraphics{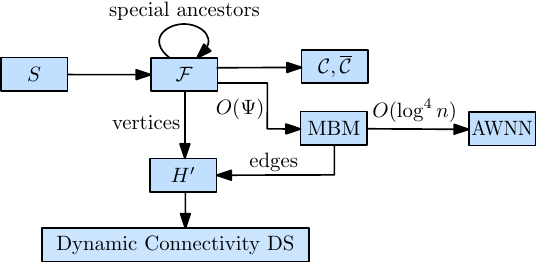}
    \caption{The structure of our data structure for general disks.}
    \label{fig:approach-psi}
\end{figure} 

Using $\mathcal{F}$ we can perform a connectivity query for two sites
$s$ and $t$. Let $\sigma$ and
$\tau$ be the cells with $s \in S(\sigma)$ and $t \in S(\tau)$.
As in~\cref{sub:dg-adapting-the-unit-disk-case}, we can store these 
cells with the sites
$s$ and $t$ and retrieve them in $O(1)$ time.
We describe how to find the proxy vertex $\sigma'$
for $\sigma$, the
procedure for $\tau$ is analogous: 
first, we obtain the maximal cell
$\overline{\sigma} \supseteq \sigma$ that contains $\sigma$
and is Minkowski covered by a disk from $S$ (if it exists).
For this, we ascend from $\sigma$ in 
its respective quadtree for at most $q$ steps,
until we reach the closest special level above $\sigma$. From
there, we follow the pointers that connect the special
levels, until the root.
For each cell $\rho$
along the way, we check if the red-black tree $\mathcal{C}(\rho)$
(for a non-special cell) or the red-black tree 
$\overline{\mathcal{C}}(\rho)$ (for a special cell)
is non-empty, in $O(1)$ time. 
Let $\rho'$ be the largest cell with this property.
If $\rho'$ is non-special, we set $\overline{\sigma} = \rho'$.
If $\rho$ is special, we ascend from 
$\rho'$ towards the root until the next special level, to find the largest
cell $\overline{\sigma}$ along the way that has 
$\mathcal{C}(\overline{\sigma})$ non-empty.
At the end, we have found the largest ancestor 
$\overline{\sigma}$ of $\sigma$ that is Minkowski covered by a 
disk $D_s$, $s \in S$,
if it exists. Now, if $\overline{\sigma}$ does not
exist, we set $\sigma' = \sigma$. Otherwise, let $u$ be the site of
maximum radius that is stored in the red-black tree 
$\mathcal{C}(\overline{\sigma})$ and set $\sigma' = \sigma_u$,
where $\sigma_u$ is the cell with $u \in S(\sigma_u)$.
Once $\overline{\sigma}$ is known, it takes $O(1)$ time to find
$\sigma'$. Thus, the total time to find $\sigma'$ (and $\tau'$) is 
$O(\log \Psi / \log \log \Psi)$.
Once $\sigma'$ and $\tau'$ are determined, we use them
to query the connectivity structure $\mathcal{H}$,
in $O(\log n/\log \log n)$ time, and we
return the result. By 
\cref{lem:dg-gnore-fully-connected}, this gives the
 correct answer. 
The overall running time  is 
$O(\log n/\log \log n + \log \Psi/\log\log \Psi)$.

To insert a site $s$ into $S$, we determine the cell $\sigma \in \G$
to which $s$ is assigned, as well as the sets $\mathcal{N}_1(s)$ and
$\mathcal{N}_2(s)$ of neighboring cells.
We locate all cells of 
$\{\sigma \} \cup \mathcal{N}_1(s) \cup \mathcal{N}_2(s)$ in $\mathcal{F}$,
creating new quadtree nodes if necessary. This takes
$O(\Psi + \log n)$ time, by 
\cref{lem:dg-linear-cells-intersecting-disks-edge} and the time to obtain the quadtree roots.
We add $s$ to $S(\sigma)$, and we insert $s$ into the \MBM-structures
for $\sigma$ and every neighboring cell $\tau \in \mathcal{N}_1(s)$,
creating new \MBM-structures if necessary. Finally, 
for $\tau \in \mathcal{N}_2(s)$,
we insert $s$ into
the red-black tree $\mathcal{C}(\tau)$
and into the special red-black trees $\overline{\mathcal{C}}(\tau')$ for all
special descendants of $\tau$ that are at most $q$ levels below it, creating cells as necessary.
For every \MBM{} that becomes non-empty, we insert a corresponding 
edge into the
HLT-structure $\cH$. Since 
$|\mathcal{N}_1(s) \cup \mathcal{N}_2(s)| = O(\Psi)$, by 
\cref{lem:dg-linear-cells-intersecting-disks-edge}, updating the \MBM{}s takes
$O(\Psi \log^4 n)$ amortized expected time,
while updating the red-black trees $\mathcal{C}(\tau)$ and the edges in $\mathcal{H}$
takes $O(\Psi \log^2 n)$  amortized time.
The special red-black trees
can be updated in  $O(\Psi \log \Psi \log n)$ time, since $s$ needs to be inserted into
 $O(\Psi \log \Psi)$ of them.
The overall running time  is 
$O(\Psi (\log^4 n + \log n\log \Psi))$.

To delete a site $s$ from $S$, we locate
in $\mathcal{F}$ the cell $\sigma$ with $s \in S(\sigma)$
as well as the sets $\mathcal{N}_1(s)$ and
$\mathcal{N}_2(s)$ of neighboring cells.
This takes
$O(\Psi + \log n)$ time, by 
\cref{lem:dg-linear-cells-intersecting-disks-edge}.
We remove $s$ from $S(\sigma)$ and from  the \MBM-structures
for $\sigma$ and every cell $\tau \in \mathcal{N}_1(s)$.
For 
$\tau \in \mathcal{N}_2(s)$,
we delete $s$ from
the red-black tree $\mathcal{C}(\tau)$
as well as the special red-black trees $\overline{\mathcal{C}}(\tau')$ for all
special descendants of $\tau$ that are at most $q$ levels below it.
For every \MBM{} that becomes empty, we delete the corresponding 
edge from the
HLT-structure $\cH$. Finally,
we delete from $\mathcal{F}$ all cells $\tau$ that
have $S(\tau) = \emptyset$, for which no site $t$ with $\tau \in \mathcal{N}_1(t) \cup \mathcal{N}_2(t)$ exists, and which also don't have a special cell $\tau'$ with non-empty red-black tree $\overline{\mathcal{C}}(\tau')$ as descendant.
Since 
$|\mathcal{N}_1(s) \cup \mathcal{N}_2(s)| = O(\Psi)$, by 
\cref{lem:dg-linear-cells-intersecting-disks-edge}, updating the \MBM{}s takes
$O(\Psi \log^4 n)$ amortized expected time,
while updating the red-black trees $\mathcal{C}(\tau)$ and the edges in $\mathcal{H}$
takes $O(\Psi \log n)$  amortized time. The special red-black trees
can be updated in  $O(\Psi \log \Psi \log n)$ time, since $O(\Psi \log \Psi)$ of them contain $s$.
The overall running time  is 
$O(\Psi (\log^4 n + \log n\log \Psi))$.

\begin{theorem}
    \label{thm:bounded-radius-ratio-psi}
    There is a data structure for 
    dynamic disk connectivity with amortized expected update time
    $O(\Psi \log^{4} n)$
    and query time $O(\log n/\log \log n)$.
    It needs $O(\Psi n \log n)$ 
    space.
\end{theorem}
\begin{proof}
  This is immediate by the discussion above and the fact that
  we can assume $\Psi = O(n^3)$, since otherwise the 
  theorem follows from the trivial algorithm that
  completely recomputes all connected components after
  every update.
\end{proof}

\section{Disk revealing data structure}%
\label{sec:reveal}

In \cref{sec:bounded}, we will  construct data structures 
for the semi-dynamic setting.
Before that, we address the following problem:
we are given two dynamic sets $R$ and $B$ of
disks, such that the disks in $R$ 
can be both inserted and deleted, while the
disks in $B$
can only be deleted.
We would like to maintain $R$ and $B$ in
a data structure such that whenever we delete
a disk $b$ from $B$,
we receive a list of all the disks in the current set $R$ 
that intersect the disk $b$ but no other disk
from the remaining set $B \setminus \{b\}$.
We call such disks
\emph{revealed} by the deletion of $b$;
see~\Cref{fig:example-disk-reveal}.
Using this \emph{disk revealing}
data structure, we will be able 
to obtain efficiently the affected edges 
during an update in the decremental setting, as described 
in \cref{sec:bounded}.
\begin{figure}
    \centering
    \includegraphics{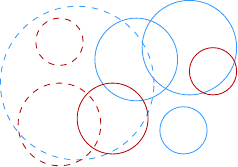}
    \caption{When removing a blue disk $b$, we want to obtain all 
    red disks that intersect $b$ but no other blue disk. 
    For example, after removing the dashed blue disk, the dashed red 
    disks are revealed and need to be reported.
    }
    \label{fig:example-disk-reveal}
\end{figure}

We construct the disk revealing structure 
in \cref{sec:sampling-planes,sec:sampling-disks}, and state our
final result in \cref{thm:disk-reveal}.
The central idea is to represent the  intersections 
between each disk $r \in R$ and the disks
in $B$ sparsely by assigning $r$ to one disk $b \in B$
that intersects it. If $b$ gets deleted, we either
report $r$ as revealed, or we determine a new 
disk of $B$ that intersects $r$.
To ensure that the assignments are not updated too often (at least
in expectation), we choose the assigned disk of $r$ at  random and assume an oblivious adversary.
Thus, given $r$, we must be able to randomly sample a disk 
among all disks in $B$ that intersect it.

Our main task in this section is 
to obtain a dynamic data structure for disks
that allows for \emph{random sample queries}:
given a query disk $D$, report a random disk among all disks in the
current set that intersect $D$.
We build upon the dynamic lower envelope data structure 
by Kaplan et al.~\cite{kaplan_dynamic_2020},
which we  briefly review 
in \cref{par:a-deeper-dive-kaplan}.
We proceed in two steps:
first, we describe a simpler data structure for 
sampling a random disk that contains a given query \emph{point}. This 
builds on the data structure of Kaplan et al.\@ for 
planes~\cite[Section~7]{kaplan_dynamic_2020}.
Afterwards, we extend the result to the data structure of 
continuous bivariate functions of constant description 
complexity~\cite[Section~8]{kaplan_dynamic_2020}.
This works essentially in the same way, but the details
are slightly more intricate. As a result, we get a
structure for sampling a random disk intersecting a given disk,
from which our disk revealing structure is readily derived.

\subsection{A deeper dive into Kaplan et al.}
\label{par:a-deeper-dive-kaplan}

We describe the details of the data structure
by Kaplan et al.~\cite{kaplan_dynamic_2020}.
Essentially, they explain how to
maintain the lower envelope of a set of $xy$-monotone surfaces
in $\R^3$ under insertions and deletions, while supporting 
vertical ray shooting queries~\cite[Section~8]{kaplan_dynamic_2020}.
Their structure is an extension of a structure by 
Chan~\cite{chan_dynamic_2010} that applies to the special case of
planes. 

The main ingredient are
\emph{vertical $k$-shallow $(1/r)$-cuttings}:
let $\mathcal{A}(H)$ be the arrangement of a set of 
planes $H$ in $\R^3$.
The \emph{$k$-level $L_k$} of $\mathcal{A}(H)$ 
is the closure of all points in $\bigcup_{h \in H} h$ with $k$ 
planes of $H$ strictly below them.
Then, $L_{\leq k}(H)$ (or $L_{\leq k}$ if the planes 
are clear from the context) is the union of the levels  $L_0, \dots, L_k$.
A \emph{vertical $k$-shallow $(1/r)$-cutting} for $H$ is a set $\Lambda$ 
of pairwise openly disjoint prisms such that (i) the union of the
prisms in
$\Lambda$ covers $L_{\leq k}(H)$; (ii) the interior of each 
$\tau \in \Lambda$ is intersected by at most $|H|/r$ planes of 
$H$; and (iii) every prism is \emph{vertical} (i.e., it consists of a 
three-dimensional triangle and all points vertically below it, where
some or all vertices the triangle may lie at infinity.). 
The \emph{conflict list} $\CL(\tau)$ of a prism $\tau \in \Lambda$ 
is the set of all planes in $H$ that cross the interior of $\tau$.
The \emph{size} of $\Lambda$ is the number of prisms
in $\Lambda$.
Given $H$, 
 a vertical 
$\Theta(|H|/r)$-shallow $1/r$-cutting for $H$ of size $O(r)$ 
can be found in time $O(|H| \log r)$,
as shown by Chan and Tsakalidis~\cite{chan_optimal_2016}.

The data structure of Kaplan et al.\@ 
consists of $O(\log n)$ static \emph{substructures} of 
exponentially decreasing sizes, where $n$ 
is the current number of planes in $H$.
A substructure $\Xi$ is constructed for some
initial set $H' \subseteq H$ of $n'$ planes, 
It consists of a 
hierarchy $\Lambda_{m +1}, \dots, \Lambda_0$ 
of $m + 2$ vertical shallow cuttings, with $m = O(\log n')$.
These shallow cuttings are obtained as follows:
let $\alpha > 0$ and $k_0 \in \mathbb{N}$ be appropriate constants, 
and let $k_j = 2^j k_0$, for $j > 0$.
The cutting $\Lambda_{m+1}$ consists of a single prism 
covering all of $\R^3$ which has $H'$ as its conflict list.
Let $H_{m+1} = H'$ 
and $n_{m + 1} = |H_{m + 1}|$.
The remaining cuttings are constructed iteratively:
set $H_m = H_{m + 1}$ and $n_m = |H_m|$.
For $j = m, \dots, 0$, the cutting $\Lambda_j$ is obtained as a 
vertical $k_j$-shallow $(\alpha k_j/n_{j})$-cutting for
$H_{j}$. 
The prisms of $\Lambda_j$ cover $L_{\leq k_j}(H_{j})$,
and each conflict list contains at most $\alpha k_j$ planes of $H_j$.
Additionally,
the set $Q$ of all planes  in $H_{j}$ 
that lie in ``too many'' conflict lists of the cuttings
constructed so far is determined.
All planes in $Q$ are removed from the
conflict lists in $\Lambda_j$ (but not from the higher cuttings, they are just marked there). 
For the next round $j - 1$, 
set $H_{j - 1} = H_j \setminus Q$ and $n_{j - 1} = |H_{j - 1}|$,
and the construction continues until $j = 0$. 
We say that the planes in $H_{-1}$
are \emph{stored} in $\Xi$, and the planes in $H' \setminus
H_{-1}$ are \emph{pruned} in $\Xi$. One can show that
$|H_{-1}| = \Theta(|H'|)$
(see the paper~\cite{kaplan_dynamic_2020} for details). 
The substructure $\Xi$
needs $O(n' \log n')$ space,
and Chan showed how to build it in $O(n' \log n')$ 
time~\cite{Chan20a}.

The whole data structure is now obtained as follows:
first, a substructure $\Xi$ with $H' = H$ is constructed.
Since not all planes from $H$ may be stored in $\Xi$,
let $H'' =  H' \setminus H_{-1}$ 
be the set of pruned planes in $\Xi$. 
Then, a substructure
$\Xi'$ is constructed for $H''$. This 
is repeated until there are no more pruned planes at the end of an iteration.
At the end, every plane from $H$ is stored
in exactly one substructure, and one can show that
$O(\log n)$ substructures are needed until the pruning stops.

Now, we describe how to perform a vertical ray shooting query for
$q = (x, y)$. Let $\ell_q$ be the vertical line  through $q$.
The shallow cutting $\Lambda_0$ of each of the $O(\log n)$ 
substructures is searched 
for the prism that intersects $\ell_q$. This takes $O(\log n)$ time
in each $\Lambda_0$, using a suitable point location structure (e.g.\@ Edelsbrunner et al.~\cite{edelsbrunner_optimal_1986}).
Then, the conflict list of each such prism is scanned 
for the lowest plane that intersects $\ell_q$,
in $O(k_0) = O(1)$ time per prism. To answer the query,
the lowest plane overall is returned. It follows that 
the total time for a vertical
ray shooting query is $O(\log^2 n)$.

To insert a new plane a standard 
Bentley-Saxe-method~\cite{bentley_decomposable_1980} is used,
rearranging smaller substructures into 
bigger ones to accommodate the new element.
Since the Bentley-Saxe-method adds another $\log$-factor
over the construction time of a substructure, 
an insertion needs $O(\log^2 n)$ amortized time.
Deletions are handled lazily,
by marking a plane as \emph{deleted}, while
leaving it in the structures that contain it.
During a query and the rebuilding during an insertion, the deleted planes are simply ignored.
Thus, without further action, it may happen that 
the lowest active (i.e.\@ not deleted, not pruned, and not purged, with purging being explained shortly) plane 
along a given vertical line no longer lies in the relevant prism of 
a lowest cutting $\Lambda_0$. See \cref{fig:lowest-cutting-empty} for an example.
\begin{figure}
    \centering
    \includegraphics{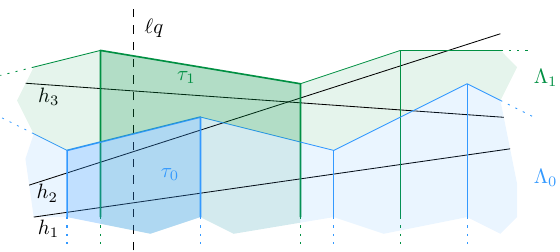}
    \caption{
    Illustration of a situation, where the lowest active plane of a substructure might not be in the relevant prism.
    The ray shooting query is performed with \(\ell_q\). If \(h_1\) and \(h_2\) are not active (e.g.\@ they were deleted), then \(h_3\) has to be returned.
    As only the conflict list of $\tau_0 \in \Lambda_0$ with $\ell_q \cap \tau_0 \neq \emptyset$ is searched for the lowest active hyperplane, $h_3$ is not found.
    This issue is mitigated through purging.}
    \label{fig:lowest-cutting-empty}
\end{figure}
This is avoided as follows: when deleting a 
plane $h$, all prisms $\tau$ in the
whole structure that have $h \in \CL(\tau)$ are obtained.
For each such $\tau$, a counter is incremented.
If this counter reaches the \emph{purging threshold}
$\left|\CL(\tau)\right|/2\alpha$, where $\alpha$ is the constant from
the cutting construction above,
the prism $\tau$ and all active planes 
in $\CL(\tau)$ are marked as \emph{purged}.
When a plane is first marked as purged in the substructure where
it is stored, it is 
reinserted into the data structure, using the standard insertion. 
Naturally, purged planes are ignored
during a query and the rebuilding during an insertion.

We briefly outline the reason why the purging mechanism 
ensures the correctness of the data structure:
suppose we would like to perform a vertical ray shooting
query $q = (x, y)$, and let $\ell_q$ be the vertical line through $q$. 
Let $h$ be the plane
in $H$ that has the lowest intersection with $\ell_q$,
and let $q^* = \ell_q \cap h$.
Furthermore, let $\Xi$ be the substructure that stores $h$, and let
$\Lambda_j$ be the lowest shallow cutting (i.e., with minimum index
$j$) in $\Xi$ such that $\Lambda_j$ has 
a prism $\tau$ that contains $q^*$.
If $j = 0$, the ray shooting query will be answered
correctly. Thus,  suppose that $j > 0$.
We know that $\left|\CL(\tau)\right| \leq \alpha k_j$,  
and that $\Lambda_{j-1}$ 
covers the $k_{j - 1}$-level of $H_{j - 1}$.
Hence, $H_{j - 1}$ has at least $k_{j - 1}$ planes 
that lie vertically below $q^*$;  see~\cref{fig:prisms-tau-i}.
By construction, all these planes are also in $\CL(\tau)$,
and since $q^*$ is the answer to the query, all of them
must have been deleted from the structure.
Thus, at least $k_{j - 1} = k_j/2 \geq \left|\CL(\tau)\right|/2 \alpha$
planes in $\CL(\tau)$  have been deleted, 
so $\tau$ should have been purged already, and $h$ would
no longer be stored in $\Xi$. This gives a contradiction and
we must have $j = 0$.
Using an appropriate potential function, one can show that
deletions require $O(\log^4 n)$ amortized time~\cite{Chan20a}. The
overall space requirement of the structure is $O(n \log n)$.

As mentioned above, this approach can also be extended to arrangements of 
bivariate functions.
For the shallow cuttings, the cells are
now vertical pseudo-prisms, 
where the ceiling is a pseudo-trapezoid 
that lies on a single surface~\cite[Section~3]{kaplan_dynamic_2020}.
One can then use the adapted shallow cuttings
as a black box in their data structure for planes to obtain the
extended result~\cite[Theorem 8.3, Theorem 8.4]{kaplan_dynamic_2020}.
Recently, Liu~\cite[Corollary~4.3]{Liu20} presented an improved construction
for the shallow cuttings that leads to fewer $\log$-factors.
The analysis is similar to the one sketched above, 
one simply needs to change a few
parameters.
Under appropriate, reasonable, conditions on the bivariate
functions, 
the data structure allows insertions in 
$O(\log^2 n)$ 
amortized expected time, deletions in $O(\log^4 n)$
amortized expected time, and a query requires $O(\log^2 n )$ time.
The data structure requires $O( n \log n )$ space.

\subsection{Sampling planes}
\label{sec:sampling-planes}

We first construct a data structure for the 
problem of sampling a random disk containing a given point, 
from a dynamic set of disks.
Using standard lifting~\cite{agarwal_range_1994, yao_yao_1985}, 
this is the same as sampling a 
random plane in $\R^3$ that does not lie above a given
three-dimensional query point, from a dynamic set of
planes. For this, we extend the data structure 
by Kaplan et al.

Recall from \cref{par:a-deeper-dive-kaplan} that 
their data structure uses $O(\log n)$  substructures
$\Xi$, each of which uses a logarithmic number of shallow 
cuttings $\Lambda_j$, each with an associated set of planes $H_j$.
Upon deletion, a plane is only marked as deleted in the respective 
substructures, and it remains stored there.
As explained in \cref{par:a-deeper-dive-kaplan},
the main idea behind the data structure lies in
the \emph{purging mechanism} which clears out
the conflict list of a prism once too many planes
in it are deleted. This ensures that for any query point $q$
that lies in a non-purged prism $\tau$, but not in the 
shallow cutting below it,
at least one plane in $\CL(\tau)$ below $q$ is not deleted.
We can extend this property of the structure to implement
our random sampling query.
As described in \cref{par:a-deeper-dive-kaplan},
in the original structure, the conflict list $\CL(\tau)$
in a prism $\tau$ is purged
once $\left|\CL(\tau)\right|/2\alpha$ planes in $\CL(\tau)$ have been
marked as deleted. To ensure that for any query point
$q$ that lies in $\tau$, but not in the shallow cutting below
it, a constant \emph{fraction} of the non-deleted planes in $\CL(\tau)$
lie below $q$, we need to
lower the purging threshold to $f' = 1/4 \alpha$.\footnote{Note
that the non-deleted planes below $q$ could still be marked as pruned,
and hence cannot be used for the sampling. However,
we will see below that this difficulty can be easily
dealt with.}
The essential properties of the data structure remain unchanged.

\begin{lemma}
    \label[lemma]{lem:adjusting-f}
    Suppose we change the purging threshold     
    in the Kaplan et al.\@ data structure for 
    planes~{\cite[Section~7]{kaplan_dynamic_2020}} 
   from $f = 1/2\alpha$ to $f' = 1/4\alpha$. 
   Then, this affects neither the correctness nor 
   the asymptotic time and space bounds of the structure.
\end{lemma}

The correctness argument in 
Kaplan et~al.~\cite[Lemma~7.6]{kaplan_dynamic_2020} (see also 
 \cref{par:a-deeper-dive-kaplan}) 
is unchanged: the prisms are just purged earlier.
The running time analysis~\cite[Lemma~7.7]{kaplan_dynamic_2020} 
requires only an adjustment of constants.\footnote{In the proof 
of Kaplan et al.\@, the constant $b'$ 
has to be chosen larger than originally 
(e.g., $b' \geq 8 \alpha b''$), 
as purging a prism $\tau$ releases at least
$(b'/4 \alpha - b'') \left| \CL(\tau) \right| \log N$ 
credits when changing $f$ to $f'$.} 
The asymptotic space bound is also unaffected.

\begin{lemma}
    \label[lemma]{lem:f-prime-sampling}
    Suppose we use the purging threshold 
    $f' = 1/4 \alpha$.
    Let $q \in \R^3$ be a query point, and 
    $\ell_q$ the downward vertical ray from $q$.
    Let $\Lambda_j$ and $\Lambda_{j-1}$ be two
    consecutive shallow cuttings in the same
    substructure, and suppose that $\ell_q$ intersects
    the prism $\tau_j \in \Lambda_j$ and the prism
    $\tau_{j - 1} \in \Lambda_{j - 1}$.
    Assume further that $q$ lies in $\tau_j \setminus \tau_{j - 1}$,
    and that $\tau_j$ has not been purged.
    Then, $\ell_q$ intersects at least 
    $\left|\CL(\tau_j)\right|/4\alpha$ planes from $\CL(\tau_j)$
    that are not marked as deleted.
\end{lemma}

\begin{proof}
   Since $\Lambda_{j - 1}$ covers the  
   $k_{j - 1}$-level of 
   $H_{j - 1}$, it follows that $\ell_q$ intersects
   at least $k_{j - 1}$ planes from $\CL(\tau_{j - 1})$.
   By construction, all these planes are also present in $\CL(\tau_j)$, 
   see~\cref{fig:prisms-tau-i}.
   Furthermore, we have $\left|\CL(\tau_j)\right| \leq \alpha k_j = 2\alpha k_{j - 1}$,
   as $\Lambda_j$ is an $(\alpha k_j/n_j)$-cutting for $H_j$.
   Thus, if $\tau_j$ has not been purged, 
   the downward vertical ray $\ell_q$ intersects
   more than 
   \[
     k_{j-1} - \frac{\left|\CL(\tau_j)\right|}{4\alpha}
        \geq 
     \frac{\left|\CL(\tau_j)\right|}{2\alpha} - \frac{\left|\CL(\tau_j)\right|}{4\alpha}
        = \frac{\left|\CL(\tau_j)\right|}{4 \alpha}
    \]
    non-deleted planes from $\CL(\tau_{j})$.
\end{proof}

\begin{figure}
  \centering
  \includegraphics{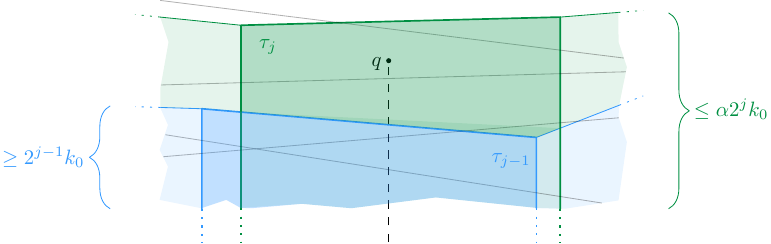}
  \caption{The situation after walking through the shallow cuttings $\Lambda_j$ for a query point $q \in \R^3$ from $j = 0$ upwards. 
  Here, $q$ lies in $\tau_j \in \Lambda_j$, but not in $\tau_{j - 1} \in \Lambda_{j - 1}$.
  Additionally, the number of planes of $H_j$ intersecting 
  the downward vertical ray from $q$ is indicated.}
  \label{fig:prisms-tau-i}
\end{figure}

To sample a plane not above a given point 
$q \in \R^3$, we go through the substructures $\Xi$.
In each $\Xi$, we walk through the shallow cuttings 
$\Lambda_j$,  from $j = 0$ upwards.
In each step, we locate the vertical prism 
that intersects the downward vertical ray from $q$ 
in $O(\log n)$ time with a suitable point location structure, 
as it is done in $\Lambda_0$ in the original data structure.
If $q$ lies \emph{inside} this prism, we stop; 
otherwise we continue with $\Lambda_{j + 1}$. 
Finally we
 apply
 \cref{lem:f-prime-sampling} to the prism we obtain,
see~\cref{fig:prisms-tau-i}.

\begin{theorem}
    \label[theorem]{thm:sample-plane}
    There is a data structure that maintains the lower envelope of 
    a dynamic set of planes in $\R^3$, such that an insertion 
    takes $O(\log^2 n)$ amortized time, a deletion takes 
    $O(\log^4 n)$ amortized time, 
    vertical ray shooting queries take $O(\log^2 n)$ worst-case  time, 
    and random sampling queries take
    $O(\log^3 n)$ expected time.
    Here, $n$ is the number of planes when the operation is performed.
    The structure requires $O(n \log n)$ space.
\end{theorem}

\begin{proof}
    We construct the data structure 
    by Kaplan et al.~\cite[Section~7]{kaplan_dynamic_2020} with 
    $f' = 1/4 \alpha$  as the purging threshold,  and with Chan's improved
    algorithm for hierarchies of shallow cuttings~\cite{Chan20a} and
    without the space optimization 
    (which we omitted in \cref{par:a-deeper-dive-kaplan}).
    According to \cref{lem:adjusting-f}, the 
    correctness and asymptotic time and space bounds are unchanged.
    We construct suitable point location structures~\cite{edelsbrunner_optimal_1986} for 
    all substructures and all shallow cuttings.
    This requires $O(n \log n)$ storage, and the  running 
    time required to construct these structures is subsumed in the time required to construct
    the vertical shallow cuttings.

    To sample a plane that lies on or below a query point $q \in \R^3$
    we proceed  as follows:
    for each substructure $\Xi$, we obtain the 
    first shallow cutting $\Lambda_j$ and the corresponding prism $\tau$ containing 
     $q$, as described above.
    If $\tau$ was purged, or if $\tau$ is in $\Lambda_0$ and contains
    no non-deleted plane that intersects the downward vertical
    ray from $q$ (which we can detect in $O(1)$ time) 
    we omit $\tau$, otherwise we add it to the set
    $\Delta$ of \emph{relevant prisms}.
    This step needs $O(\log^3 n)$ time in total.

    Now, we use rejection sampling as follows:
    we sample a random relevant prism $\tau$ from $\Delta$,
    where the sampling probability of a prism 
    $\rho \in \Delta$ is set to 
    $\left|\CL(\rho)\right|/\sum_{\rho' \in \Delta} \left|\CL(\rho')\right|$.
    Then, we choose a plane $h$ from $\CL(\tau)$, uniformly at random.
    If $h$ intersects the downward vertical ray from $q$ and 
    is not deleted, not purged, and not pruned in $\tau$, we return $h$.
    Otherwise, we repeat.

    By \cref{lem:f-prime-sampling}, with probability $\Omega(1)$
    during one sampling attempt, we obtain an $h$ that is not deleted
    and intersects the downward vertical ray from $q$.
    Furthermore, the data structure has the property that for every 
    non-deleted plane $h'$, there is exactly one
    substructure in which $h'$ is neither purged
    nor pruned. Hence, each sampling attempt
    returns a valid plane with probability $\Omega(1/\log n)$. 
    Thus, we expect $O(\log n)$ attempts until the sample succeeds.   
\end{proof}

Using standard lifting, we get the following corollary:

\begin{corollary}
    We can implement a data structure
    that maintains a dynamic set of disks in the plane
    and allows to sample a random disk from the current set that
    contains a given query point with the bounds of 
    \cref{thm:sample-plane}.
\end{corollary}

\subsection{Sampling disks}
\label{sec:sampling-disks}

To solve the more general problem of
sampling a random disk  that intersects a
given query disk, we adapt the more general data
structure of Kaplan et al.\@ for continuous bivariate 
functions of constant description 
complexity~\cite[Section~8]{kaplan_dynamic_2020}.
This structure works in the same way as the structure for planes,
the only difference being that the shallow cutting construction
is more general (see also~Liu~\cite[Corollary~4.3]{Liu20} for the improved log-factors).
Thus, applying our technique from \cref{sec:sampling-planes} in the more
general setting, we get the following result:

\begin{theorem}
    \label[theorem]{thm:le-sampling-functions}
    Let $\mathcal{F}$ be a 
    set of totally defined continuous bivariate functions 
    of constant description complexity,
    such that for every finite subset $F \subseteq \mathcal{F}$,
    the lower envelope of  $F$ has linear size.
    Then, a dynamic subset of $\mathcal{F}$ can be maintained while
    supporting vertical ray shooting queries and random sampling
    queries with the following guarantees,  where $n$ 
    is the number of functions in $F$ when the operation is performed:
    \begin{itemize}
        \item inserting a function takes 
	$O(\log^2 n)$ amortized expected time,
        \item deleting a function takes 
	$O(\log^4 n)$ amortized expected time, 
        \item a vertical ray shooting query takes $O(\log^2 n)$ time, and 
        \item sampling a random function that intersects a given downward vertical ray
	takes $O(\log^3 n)$ expected time.
    \end{itemize}
    The data structure requires $O(n \log n)$ space.
\end{theorem}

For disks, \cref{thm:le-sampling-functions} yields  the
following:

\begin{corollary}
    \label[corollary]{cor:sampling-disks}
    We can implement a data structure
    that maintains a dynamic set of disks in the plane
    and can sample a random disk from the current set that
    intersects a given query disk with the bounds of 
    \cref{thm:le-sampling-functions}.
\end{corollary}

\begin{proof}
    Let $D$ be a disk with center $c_d$ and radius $r_d$.
    We can represent the distance of any point $p \in \R^2$ 
    from $D$ as an additively weighted Euclidean distance
    function with $\delta(p, D) = \| p c_d\| - r_d$.
    The distance functions for a set of disks form a lower envelope of 
    linear complexity~\cite{sharir_intersection_1985}.
    Hence, we can apply \cref{thm:le-sampling-functions} 
    to maintain the distance functions $p \mapsto \delta(p, D)$ for
    a dynamic set of disks $D$.
    A disk intersecting a given query disk $Q$ with center
    $c_q$ and radius $r_q$
    can then be found by sampling a random function that intersects
    the downward vertical ray from the point $((c_q)_x, (c_q)_y, r_q)$, 
    as every such function satisfies $\delta(c_q, D) = 
    \| c_q c_d\Vert - r_d \le r_q$.
\end{proof}

We can now describe the revealing data structure:

\begin{theorem}[Revealing data structure (RDS)]
    \label{thm:disk-reveal}
    Let $B$ be a set of $n$ disks in the plane,
    and let $R$ be initially empty.
    We can preprocess $B$ into a data structure such 
    that the following operations are possible:
    (i) insert a disk into $R$; (ii) delete a disk from $R$; and
    (iii) delete a disk from $B$ and report all disks from
    $R$ that are revealed by the deletion.
    It takes $O(\log^3 n)$ expected time to insert a disk
    into $R$, and $O(1)$ worst-case time to delete
    a disk from $R$.
    Preprocessing $B$ and performing $k$ deletions in $B$
    requires $O(n \log^2 n + 
    k \log^4 n + m \log^4 n)$ expected time and 
    $O(n \log n + m)$ space, where $m$ 
    is the total size of $R$ and the $k$ deletions are assumed to be oblivious of the internal random choices of the data structure.
\end{theorem}

\begin{proof}
   We store $B$ in the data structure of \cref{cor:sampling-disks}.
   It takes  $O(n \log^2 n)$ expected time to
   insert all disks from $B$ into the initially empty structure, and 
   the space is $O(n \log n)$.

   To insert a new disk $r$ into $R$,
   we pick a random disk $b \in B$ that intersects $r$, and we
   \emph{assign} $r$ to $b$. 
   By \cref{cor:sampling-disks}, this takes $O(\log^3 n)$
   expected time. To delete a disk $r$ from $R$, we just remove
   it from $R$ without further action, in $O(1)$ time.
   To delete a disk $b$ from $B$, we proceed as follows:
   we find the set $R_b \subseteq R$ of disks from $R$ that
   are assigned to $b$, and for each $r \in R$, we try to sample a 
   new disk $b_r \in B$ that intersects it. If there is no such
   disk, we report $r$ as \emph{revealed}. Otherwise,
   we reassign $r$ to $b_r$. Finally, we delete
   $b$ from the structure.
   By \cref{cor:sampling-disks}, this takes amortized expected time 
   $O(\log^4 n + |R_b| \log^3 n)$.

   To bound the total time for the deletions, we fix a sequence
   of $k$ deletions
   $b_1, b_2, \dots, b_k$ of $B$, which is assumed to be oblivious of the random assignments through sampling.
   Then, the total deletion time is
   \begin{align*}
     \sum_{i = 1}^{k} \left(\log^4 n + |R_{b_{i}}| \log^3 n\right) &=
     k \log^4 n + \left(\sum_{i = 1}^{k}  |R_{b_{i}}|\right) \log^3 n\\
     &=
     k \log^4 n + \left(\sum_{r \in R}  I_r \right) \log^3 n,
   \end{align*}
   where $I_r$ denotes the number of times that the blue disk a red disk $r \in R$ is currently assigned to is deleted. 
   To bound the expected value of $I_r$, 
   consider the set $B_r$
   of all disks in the original set $B$ that intersect $r$.
   Due to the assumption of obliviousness, for the $i$-th disk $b_i$ in the deletion sequence
   the probability that $r$ is assigned to $b_i$ is at most
   $1/(|B_r| - i + 1)$, because
   this is an upper bound on the probability that $b_i$
   was chosen when $r$ was assigned the previous time to a blue disk.
   Hence, the expected value of $I_r$ is at most 
   \[
    \mathbf{E}[I_r] \leq 
    \sum_{i = 1}^{|B_r|} \frac{1}{|B_r| - i + 1} = O(\log |B_r|) =
   O(\log n),
   \]
   and the total expected time for $k$ deletions from $B$ is 
   $O(k \log^4 n + m \log^4 n)$.
\end{proof}

We now state a variant of \cref{thm:disk-reveal} that will be useful
for the decremental connectivity data structures 
in \cref{subsec:bounded:deletion,subsec:unbounded:deletion}.

\begin{corollary}
  \label{lem:reveal:cor}
  Let $R$ and $B$ be two sets of disks in $\R^2$ with 
 $|R| + |B| = n$.
  We can preprocess $R \cup B$ into a data structure 
  that supports deletions, while detecting 
  all newly revealed disks of $R$ after each deletion.
  Preprocessing needs 
  $O\left(|B|\log^2 n + |R|\log^3 n\right)$ expected time
  and the resulting data structure needs $O(|B| \log |B| + |R|)$ space.
  Deleting $k$ disks from $B$ and any number of disks from $R$ 
  needs $O\left(k\log^4 n + |R|\log^4 n\right)$
  expected time, where the $k$ deletions are assumed to be oblivious of the internal random choices of the data structure.
\end{corollary}

\section{Logarithmic dependence on \texorpdfstring{\(\Psi\)}{psi}}%
\label{sec:bounded}
We turn to the semi-dynamic setting, and
we show how to reduce the dependency on $\Psi$ 
from linear to logarithmic.
For both the incremental and the decremental scenario, 
we use the same proxy graph $H$ 
to represent the connectivity in $\cD(S)$.
The proxy graph is described in \cref{subsec:bounded:proxy}.
In \cref{subsec:bounded:deletion,subsec:bounded:insertion}
we present the data structures that are based on $H$.

Throughout this section, we assume that we work with a quadforest on a globally aligned grid.
Refer to \cref{sec:app:quadforest} for a details description of how the data structure can be adapted to work on the realRAM without global alignment.
\subsection{The proxy graph}\label{subsec:bounded:proxy}

To define the proxy graph $H$, we use the set $S$ of sites,
together with a set $\mathcal{A}$ of planar regions, to be described below.
The proxy graph $H$ has one \emph{site-vertex} for every
site in $S$ and one \emph{region-vertex} for  every region
in $\mathcal{A}$.
The regions in $\mathcal{A}$ are derived from a
quadforest for $S$, and every region $A \in \mathcal{A}$
has two 
associated sets of sites, $S_1(A)$ and $S_2(A)$.
The first set $S_1(A) \subseteq S$ 
has the property that (i) all sites of $S_1(A)$ lie in $A$;
(ii) every site $s \in S_1(A)$ has a
radius $r_s$ ``comparable'' to the diameter of $A$; and (iii)
the induced disk graph $\cD(S_1(A))$ 
is a clique.
A site $s$ can lie
in more than one set $S_1(A)$.
The second set $S_2(A)\subseteq S$  contains all sites 
$s \in S$ that (i) lie in the quadtree
cell that is ``associated'' with the region $A$;
(ii) have a  ``small'' radius relative to the diameter
of $A$; and (iii)
are adjacent in $\cD(S)$ 
to at least one site in $S_1(A)$.

The proxy graph $H$ is bipartite, with all edges going between the
site-vertices for $S$ and the 
region-vertices for $\mathcal{A}$. 
More precisely, a region-vertex
$A \in \mathcal{A}$ is connected to all site-vertices
in $S_1(A) \cup S_2(A)$. 
The edges between $A$
and $S_1(A)$ constitute a sparse representation of the
clique $\cD(S_1(A))$ in $\cD(S)$. 

The edges between $A$ and sites 
in $S_2(A)$ allow us to represent all edges in $\cD(S)$ between sites in $S_1(A)$ and sites in \(S_2(A)\) by a path of length at most two in $H$. 
This representation does not change the connectivity between the site
vertices in $H$, as compared to $\cD(S)$, since $\cD(S_1(A))$ is a clique. 
We will ensure that every edge in $\cD(S)$ is represented in $H$
in this manner.
Furthermore, 
the
number of regions in $\mathcal{A}$, as well as the total 
size of the associated sets $S_1(A)$ and
$S_2(A)$ will be ``small'', resulting in a sparse proxy graph $H$.

We now describe the details of the construction. 
The graph $H$ has vertex set $S \cup \mathcal{A}$,
where $S$ are the sites and $\mathcal{A}$ is a set of \emph{regions}.
For
every site $s \in S$, let $\sigma_s \in \G$ be the grid cell of the hierarchical grid
 with $s \in \sigma_s$ and 
$|\sigma_s| \leq r_s < 2|\sigma_s|$.
Let $N(s) = N_{15 \times 15}(\sigma_s)$ be the $15 \times 15$ neighborhood of $\sigma$ on the same level of the hierarchical grid.
Then, we let $\mathfrak{N} = (\bigcup_{s \in S} \{ \sigma_s \} \cup N(s))$, and
we construct the
quadforest $\mathcal{F}$ for $\mathfrak{N}$, as described
in \cref{sub:dg-adapting-the-unit-disk-case}.
Recall that $\mathcal{F}$ contains quadtrees that cover the
lowest $\lfloor \log \Psi \rfloor + 1$ levels of the hierarchical grid $\G$.
For every cell $\sigma$ of $\mathcal{F}$,
we define three kinds of regions:
the \emph{outer regions}, the \emph{middle regions}, and the 
\emph{inner region}.

To describe these regions, we first define for
every $d \in \mathbb{N}$
a set $\mathcal{C}_d$ of $d$ \emph{cones}
that have
(i) opening angle $2\pi/d$;
(ii) the origin as their apex; and
(iii) pairwise disjoint interiors. 
It follows that the cones in $\mathcal{C}_d$
cover the plane.
For a cell $\sigma\in \mathcal{F}$, we denote by 
$\mathcal{C}_d(\sigma)$ the set of 
all translated copies of the cones in
$\mathcal{C}_d$ whose apex has been moved to the
center
$a(\sigma)$ of $\sigma$, as shown in \cref{fig:cones}.
\begin{figure}
\centering
\includegraphics{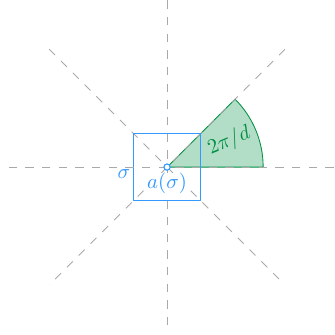}
\caption{The cones $\mathcal{C}_d$ with angle $2\pi/d$ and apex at 
the center $a(\sigma)$ of a cell $\sigma$.}
\label{fig:cones}
\end{figure}
To define the \emph{outer regions} for a cell  $\sigma \in \mathcal{F}$, 
we let $d_1  \in \mathbb{N}$  be a parameter 
to be fixed below, 
and we intersect the cones
from $\mathcal{C}_{d_1}(\sigma)$
with the annulus that is centered at $a(\sigma)$ with
inner radius $\frac{5}{2}|\sigma|$
and outer radius $\frac{9}{2}|\sigma|$. 
The \emph{middle regions} of $\sigma$ are defined similarly:
we let $d_2 \in \mathbb{N}$ be a parameter to be fixed below,
and we intersect all cones from  $\mathcal{C}_{d_2}(\sigma)$ 
with the annulus around $a(\sigma)$ that has 
inner radius $|\sigma|$ and outer
radius $\frac{5}{2}|\sigma|$.
The \emph{inner region} for $\sigma$ is the disk
with center $a(\sigma)$ and radius $|\sigma|$.
See \cref{fig:edgedef} for an illustration.
We let $\mathcal{A}_{\mathcal{F}}$ be
the set of all inner, middle, and outer regions for all
cells in $\mathcal{F}$.
\begin{figure}
\centering
\includegraphics{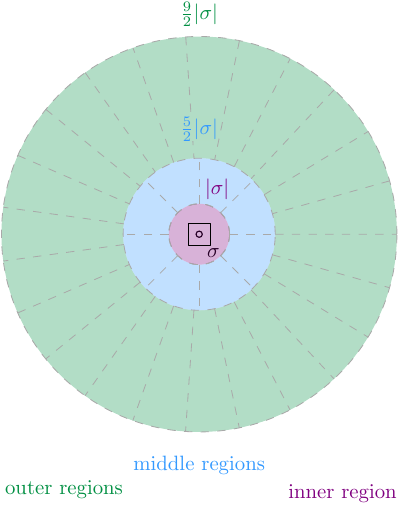}
\caption{The regions defined by a cell $\sigma$.}
\label{fig:edgedef}
\end{figure}
\begin{figure}
\centering
\includegraphics{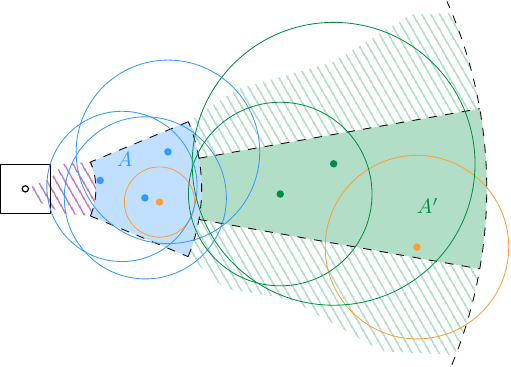}
\caption{The set $S_1(A)$ is marked blue. 
The orange site in $A$ is not in the set because its radius is too small.
The orange site in $A'$ is not in $S_1(A')$: 
even though its radius is in the correct range, it does 
not touch or intersect the inner boundary. 
}
\label{fig:conn:s1def}
\end{figure}
With every region $A \in \mathcal{A}_\mathcal{F}$,
we associate a set of sites $S_1(A) \subseteq S$, as follows:
\begin{itemize}
\item if $A$ is an outer region for a cell $\sigma$ of $\mathcal{F}$, then
the set $S_1(A)$ contains all sites $t$ such that
(i) $t \in A$; (ii)
 $|\sigma| \leq r_t < 2|\sigma|$; and
(iii) $\| a(\sigma)t\| \leq r_t + \frac{5}{2}|\sigma|$.
This means that the disk $D_t$ has (i) its
center in $A$; (ii) a radius comparable
to $|\sigma|$; and  (iii) a nonempty intersection 
with the inner boundary of the annulus used to define $A$.
\item if $A$ is a middle region or the inner region for a cell
$\sigma$ of $\mathcal{F}$, then $S_1(A)$ contains all sites $t$ 
such that
(i) $t \in A$; and
(ii) $|\sigma| \leq r_t < 2|\sigma|$.
That is, the disk $D_t$ has (i) its center in $A$
and (ii) a radius comparable
to $|\sigma|$.
\end{itemize}
We define $\mathcal{A} \subseteq \mathcal{A}_\mathcal{F}$ as the 
set of regions $A$ with $S_1(A) \neq \emptyset$.
For every $A \in \mathcal{A}$, 
we define $S_2(A) \subseteq S$  as the set of all sites $s$
that
(i) lie in $\sigma$;
(ii) have radius $r_s < 2|\sigma|$; and
(iii) are adjacent in $\cD(S)$ to at least one site in $S_1(A)$, see  \cref{fig:conn:s2def}.

\begin{figure}
\centering
\includegraphics{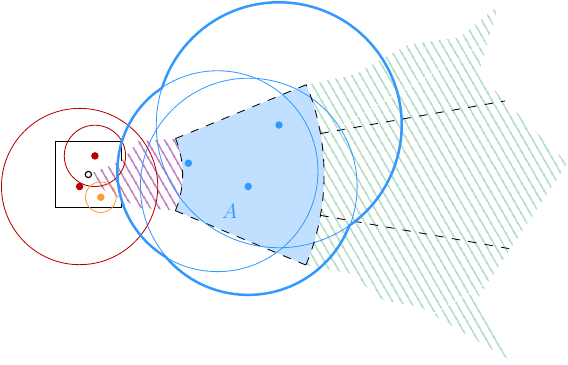}
\caption{The red sites in $\sigma$ are in $S_2(A)$. 
The radius of the orange site is in the correct range, 
but it does not intersect a site in $S_1(A)$ (marked blue).}
\label{fig:conn:s2def}
\end{figure}

We add an edge $sA$ in $H$ between a site $s$ and a region $A$ 
if and only if $s\in S_1(A) \cup S_2(A)$.
Note that the sets $S_1(A)$ and $S_2(A)$ are not 
necessarily disjoint, as for a center region $A$
defined by a cell $\sigma$, a site $s$ with $s \in \sigma$ and 
$|\sigma|\leq r_s < 2|\sigma|$ 
will satisfy the conditions for both  $S_1(A)$ and $S_2(A)$. 
However, this will  adversely affect neither the 
preprocessing time nor the correctness.
The following structural lemma helps us  to show that $H$
accurately represents the connectivity in
$\cD(S)$ as well as to bound 
the size of $H$ and
the preprocessing time in the decremental setting.

\begin{lemma}\label{lem:bounded:neighborhood}
Let $t \in S$ be a site.
Then
\begin{enumerate}
\item all cells that define a region $A$ with $t\in S_1(A)$
 are in $N(t)$; and
\label{item:lem:bouned:neighborhood:const}
\item if $s \in S$ is a site with $r_s \leq r_t$ such 
that $st$ is an edge in $\cD(S)$,
then there is a cell $\sigma \in N(t)$ 
such that $s \in \sigma$ and such that $\sigma$ 
defines a region $A$ with $t \in S_1(A)$.
\label{item:lem:bounded:neighborhood:cell}
\end{enumerate}
\end{lemma}

\begin{figure}
\centering
\includegraphics{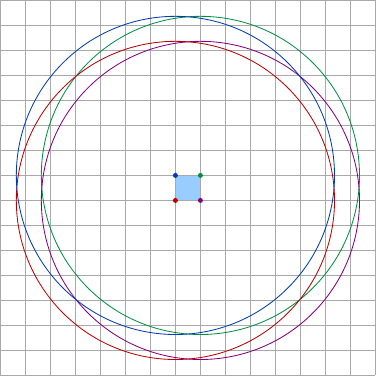}
\caption{The disk $D(t, \frac{9}{2}|\sigma_t|)$ is contained 
in $N_{15\times 15}(\sigma_t)$.}
\label{fig:neighborhood_contains}
\end{figure}

\begin{proof}
To prove the first claim, we observe that by definition, all 
cells $\sigma$ that define a region $A$ with $t \in S_1(A)$ have
$|\sigma| = |\sigma_t|$
and centers $a(\sigma)$
in the disk $D = D\left(t,\frac{9}{2}|\sigma_t|\right)$. 
The disk $D$ is contained in $N(t)$
(see \cref{fig:neighborhood_contains}), and the claim follows.

We now proceed to the second claim. Since $st$ is an edge in $\cD(S)$,
we have $\| st\| \leq r_s + r_t$.
By the definition of $\sigma_t$ and since $r_s \leq r_t$, 
we have $r_s \leq r_t <  2 |\sigma_t|$. 
Now, let $\sigma \in \G$ be the cell with $|\sigma| = |\sigma_t|$ 
and $s\in \sigma$. 
Then,  the triangle inequality shows 
\begin{equation}
\label{equ:triang}
\| a(\sigma) t\| \leq 
\frac{1}{2}|\sigma|+ r_s + r_t
< \frac{5}{2}|\sigma| + r_t
< \frac{9}{2}|\sigma|.
\end{equation}
Thus, $a(\sigma)$ lies in the disk
$D=D\left(t,\frac{9}{2}|\sigma|\right)$,
and $\sigma \in N(t)$,
By symmetry, we also have
$t\in D\left(a(\sigma),\frac{9}{2}|\sigma|\right)$,
and hence $t \in S_1(A)$. Note that \(A\) can be an inner, middle or outer region. 
The inequalities in (\ref{equ:triang}) also show
that the $S_1$-condition (iii) for an outer region is fulfilled.
\end{proof}

Before we argue that $H$ 
accurately represents the connectivity of $\cD(S)$, we 
show that the associated sites of a region in $\mathcal{A}$ 
form a clique in $\cD(S)$.

\begin{lemma}\label{lem:bounded:clique1}
Suppose that $d_1 \geq 23$ and $d_2 \geq 8$.
Then, for any region $A \in \mathcal{A}$, 
the sites in $S_1(A)$ induce a clique in $\cD(S)$.
\end{lemma}

\begin{proof}
Let $\sigma$ be the cell that defines $A$.
We distinguish three cases.

\subparagraph*{Case 1: $A$ is an outer region.}
Let $t \in S_1(A)$, and let $C \in \mathcal{C}_{d_1}(\sigma)$ be the cone
that was used to define $A$.
Consider the line segments $\ell_1$ and $\ell_2$ that 
that go through $t$, lie in $C$, and are perpendicular to the upper and
the lower boundary of $C$, respectively.
We denote the endpoints of $\ell_1$ by $p_1$ and $p_2$, and 
the endpoints of $\ell_2$ by $q_1$ and $q_2$, where $p_1$ and $q_1$ 
are the endpoints at the right angles; 
see \cref{fig:cliquec1}.

Now, let $Z_t$ be the convex hull of $p_1$, $p_2$, $q_1$, and $q_2$.
We claim that $Z_t \subset D_t$.
Indeed, 
let $\alpha_1$, $\alpha_2$ be the angles at $a(\sigma)$ defined by the line 
segment $\overline{a(\sigma)t}$ and the two boundaries of $C$.
By our choice of $C$, we have $\alpha_1 + \alpha_2 = 2\pi/d_1$.
\begin{figure}
\center
\includegraphics{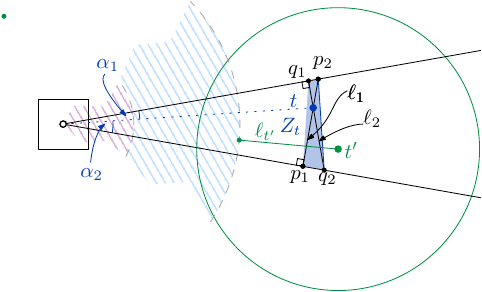}
\caption{The line segment $\ell_{t'}$ intersects $p_2q_2$.}
\label{fig:cliquec1}
\end{figure}
Basic trigonometry yields
\begin{align*}
\| a(\sigma)p_1 \| &= \cos(\alpha_2) \cdot \| a(\sigma) t\|,&
\| p_1p_2 \| &= \| a(\sigma)p_1\| \cdot \tan\left(2\pi/d_1\right),\\
\| a(\sigma)q_1 \| &= \cos(\alpha_1) \cdot \| a(\sigma) t\|, \text{ and }&
\| q_1q_2 \| &= \| a(\sigma)q_1\| \cdot \tan\left(2\pi/d_1\right).
\end{align*}
Hence,
\begin{align*}
\| \ell_1\| &= \| p_1p_2\|
= \cos(\alpha_2) \cdot \| a(\sigma) t\| \cdot \tan\left(2\pi/d_1\right) \\
\| \ell_2\| &= \| q_1q_2\|
= \cos(\alpha_1) \cdot \| a(\sigma) t\| \cdot \tan\left(2\pi/d_1\right)
\end{align*}
By $|\cos(\alpha_1)|, |\cos(\alpha_2)| \leq 1$, 
the properties $\|a(\sigma)t\| \leq r_t + (5/2)|\sigma|$ 
and $|\sigma| \leq r_t$ in the definition of
$S_1(A)$, and our choice of $d_1$, we get
\begin{align*}
\| \ell_1 \| , \| \ell_2\| & \leq \| a(\sigma) t\| 
\tan\left(\frac{2\pi}{d_1}\right)
\leq \left(r_t + \frac{5}{2}|\sigma|\right)
\cdot \tan\left(\frac{2\pi}{d_1}\right)
\leq \frac{7}{2} r_t \cdot \tan\left(\frac{2\pi}{d_1}\right)
\leq r_t.
\end{align*}
Since $t$ lies on $\ell_1$ and $\ell_2$, it follows that 
$Z_t \subset D_t$, as claimed.

Next, let
$\ell_t = a(\sigma)t \setminus 
D\left(a(\sigma),\frac{5}{2}|\sigma|\right)$ be the
part of the line segment between $t$ and $a(\sigma)$ that 
lies outside of $D\left(a(\sigma), \frac{5}{2}|\sigma|\right)$.
By the 
property
$\| a(\sigma)t \|  \leq r_t + \frac{5}{2}|\sigma|$ in
the definition of $S_1(A)$, we have
that $\ell_t \subset D_t$.

Consider two sites $t, t' \in S_1(A)$.
We argue that $D_t$ and $D_{t'}$ intersect
and hence $t$ and $t'$ are adjacent in $\cD(S)$.
Indeed, if $t' \in Z_t \subset D_t$
or $t \in Z_{t'} \subset D_{t'}$,
this is immediate.
Thus, suppose that this does not hold, and
assume without loss of generality that $\|a(\sigma)t\|
\leq \|a(\sigma)t'\|$.
As $\| a(\sigma) t\| \geq \frac{5}{2}|\sigma|$, the 
line segment $p_2q_2$ lies completely outside of 
$D\left(a(\sigma), \frac{5}{2}|\sigma|\right)$,
so $t'$ is separated in $C$ from 
$a(\sigma)$ by the convex hull $Z_t$.
Then, $\ell_{t'} \subset D_{t'}$ intersects $p_2q_2 \subset D_t$, 
as claimed.

\subparagraph*{Case 2: $A$ is a middle region.}
The law of cosines and our choice of $d_2$ yield 
\[
\diam(A)  \leq |\sigma|\cdot \sqrt{1 + 
\frac{25}{4} - 
\frac{10}{2}\cos\left(\frac{2\pi}{d_2}\right)} 
\leq 2|\sigma|.
\]
Now, let $t, t' \in S_1(A)$.
By the properties in the definition of $S_1(A)$, 
we have $t,t'\in A$ and  $r_t, r_{t'} \geq |\sigma|$. 
Thus, we get
\[
\|tt'\| \leq \diam(A) \leq 2|\sigma| \leq r_t + r_{t'},
\]
and $tt'$ is an edge in $\cD(S)$.

\subparagraph*{Case 3: $A$ is an inner region.}
Let
$t, t' \in S_1(A)$. Since $t$ and $t'$ both lie in the
disk 
$D\left(a(\sigma), |\sigma|\right)$ of diameter $2|\sigma|$, 
and since 
 $r_t, r_{t'} \geq |\sigma|$, 
we again get that $tt'$ is an edge in $\cD(S)$.
\end{proof}

We can now
show that $H$ accurately represents the connectivity of $\cD(S)$.

\begin{lemma}\label{lem:bounded:conn}
Two sites are connected in $H$ if and only if they are 
connected in $\cD(S)$.
\end{lemma}

\begin{proof}
Let $s, t \in S$.
First, we show that if $s$ and $t$ are connected in $H$, 
they are also connected in $\cD(S)$.
The path between $s$ and $t$  in $H$ alternates between vertices 
in $S$ and vertices in $\mathcal{A}$.
Thus, it suffices to show that if two sites $u$ and $u$'
are adjacent in $H$ to the same region $A \in \mathcal{A}$,
they are connected in $\cD(S)$. 
This follows directly from \cref{lem:bounded:clique1}:
if $u, u' \in S_1(A)$, they are part 
of a clique, and hence adjacent.
If $u \in S_2(A)$, then
there is at least one site
in $S_1(A)$ whose disk intersects $D_u$, and
hence $u$ is connected to all sites in the clique $S_1(A)$. 
Thus, if $u' \in S_1(A)$, we are done, and
if $u' \in S_2(A)$, the same argument shows that $u'$ must  also 
be connected to
all sites in $S_1(A)$, and hence $u$ and $u'$ are connected through
$S_1(A)$.

Now, we consider two sites that are connected in $\cD(S)$,
and we show that they are also connected in $H$.
It suffices to show that if $s$ and $t$ are adjacent 
in $\cD(S)$, they are connected in $H$.
Assume without loss of generality that $r_s \leq r_t$, and let $\sigma$ 
be the cell in $N(t)$ with $s\in\sigma$.
The cell $\sigma$ exists by 
\cref{lem:bounded:neighborhood},
it belongs to $\mathcal{F}$, since $\sigma$
lies in the first $\lfloor \log \Psi \rfloor + 1$ levels
of $\G$, and we have $\sigma_s \subseteq \sigma$.
From \cref{lem:bounded:neighborhood} also follows, that $t \in S_1(A)$ for some 
$A\in \mathcal{A}_\mathcal{F}$ defined by $\sigma$. 
As the regions with non-empty sets $S_1(A)$ are in $\mathcal{A}$,
the edge $tA$ exists in $H$.
Now we argue that $s\in S_2(A)$, and thus the edge $As$ 
also exists in $H$.
This follows by straightforward checking of 
the properties from the definition of  $S_2(A)$:
(i)
by the choice of $\sigma$, we have $s\in \sigma$;
(ii)
by the definition of $N(t)$ and the
assumption $r_s \leq r_t$, we have
$r_s < 2|\sigma|$; and (iii)
we already know that $t \in S_1(A)$, and 
$s$ and $t$ are adjacent in $\cD(S)$.
Thus, $s$ and $t$ are connected in $H$ through $A$, 
and the claim follows.
\end{proof}

Finally, 
we show that the size of $H$ depends
only on $n$ and $\Psi$, and not on the number of edges in $\cD(S)$
or the diameter of $S$.
We first bound the total size of the sets $S_1(A)$  
and $S_2(A)$.

\begin{lemma}\label{lem:bounded:S1S2size}
We have	$\sum_{A\in \mathcal{A}} |S_1(A)| = O(n)$ and 
$\sum_{A \in \mathcal{A}} |S_2(A)| = O(n\log \Psi)$.
\end{lemma}

\begin{proof}
First, we bound the total size of the sets $S_1(A)$.
Fix a site $t \in S$.
By 
\cref{lem:bounded:neighborhood}, 
the site $t$ can lie only in regions $A$ that are defined by cells in $N(t)$,
and there are $O(1)$ such cells. Thus, $t$ lies in at most $O(1)$ sets
$S_1(A)$, and since $t$ was
an arbitrary site, we get $\sum_{A\in \mathcal{A}} |S_1(A)| = O(n)$.
Next, we focus on the total size of all sets $S_2(A)$.
Fix a site $s \in S$.
A necessary condition for 
$s \in S_2(A)$ is that $s$ lies in the
cell that defines $A$. There
are at most $\lfloor \log \Psi \rfloor + 1$  
cells containing \(s\)
in $\mathcal{F}$, and every such cell defines $O(1)$ regions
$A$.
Since $s$ was arbitrary, it follows that 
$\sum_{A\in \mathcal{A}} |S_2(A)| = O(n\log \Psi)$.
\end{proof}

\begin{corollary}\label{lem:bounded:size}
The proxy graph $H$ has $O(n)$ vertices and $O(n\log\Psi)$ edges.
\end{corollary}

\begin{proof}
There are at most $\sum_{A\in\mathcal{A}} |S_1(A)|$ non-empty regions
$A$, so 
 $|\mathcal{A}| = O(n)$. 
This gives the bound on the number of vertices.
The number of edges is at most
$\sum_{A\in \mathcal{A}} |S_1(A)| + |S_2(A)| = O(n\log \Psi)$.
\end{proof}

\subsection{Decremental data structure}\label{subsec:bounded:deletion}

We use the proxy graph $H$ from
\cref{subsec:bounded:proxy} to build 
a data structure that allows interleaved
deletions and connectivity queries in a disk graph. 
The data structure has several components: 
we store a quadforest that contains the cells defining $\mathcal{A}$, and
for every $A \in \mathcal{A}$, 
we store the sets $S_1(A)$ and $S_2(A)$.
For each region $A\in \mathcal{A}$, we store a revealing data structure (\RDS) 
as in \cref{lem:reveal:cor} with $B = S_1(A)$ and
$R = S_2(A)$.
Finally, we store the proxy graph $H$ in an
HLT-structure 
\(\mathcal{H}\)~\cite{holm_poly-logarithmic_2001}.
See \cref{fig:bounded:dec} for an illustration.

\begin{figure}
\centering
\includegraphics{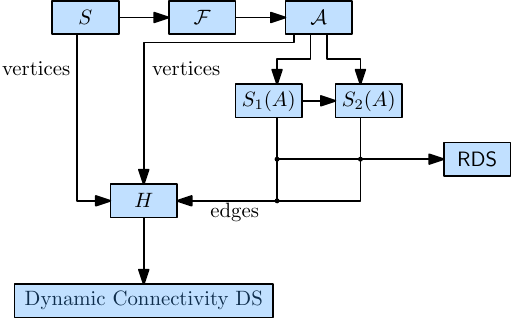}
\caption{The structure of the decremental data structure.}
\label{fig:bounded:dec}
\end{figure}

As usual, the connectivity queries are answered through $\mathcal{H}$. 
To delete a site $s$,
we first remove from $\mathcal{H}$ all  edges incident to $s$.
Then, we go through all regions $A$ with $s\in S_1(A)$.
We remove $s$ from $S_1(A)$ and from the \RDS of $A$. Let $U$ be the set of 
revealed sites from $S_2(A)$ that are reported by the \RDS.
We delete every $u\in U$ from $S_2(A)$ and from the corresponding \RDS.
Additionally, we delete the edges $uA$ 
from $\mathcal{H}$ for all $u\in U$ that are not also in $S_1(A)$.
Next, for each  region $A$ with  $s\in S_2(A)$,
we simply remove $s$ from the set $S_2(A)$ and the associated \RDS.
First, we analyze the preprocessing time.

\begin{lemma}\label{lem:bounded:preprocessing}
Given a set $S$ of $n$ sites, we can construct the data 
structure described
 above in $O\left(n\log \Psi \log^3 n\right)$ 
 time.
The data structure requires $O(n (\log n + \log \Psi))$ space. 
\end{lemma}

\begin{proof}
For each $s \in S$, we identify the cell $\sigma_s$ of $\G$
that has $s \in \sigma_s$ and  $|\sigma_s| \le r_s < 2 |\sigma_s|$.
Let $\mathfrak{N} = \bigcup_{s \in S} (\{ \sigma_s \} \cup N(s))$.\footnote{Here, in the decremental setting, $\mathfrak{N} = \{ \sigma_s \mid s \in S\}$ would already be sufficient, as the omitted cells would never have a nonempty $S_2(A)$ for their regions $A$.}
We build the quadforest $\mathcal{F}$
for $\mathfrak{N}$, together with a red-black tree~\cite{cormen_introduction_2009} that contains the roots, as described in 
\cref{sub:dg-adapting-the-unit-disk-case}.
As in \cref{sub:dg-adapting-the-unit-disk-case},
the cells of $\mathcal{F}$ lie on the first 
$\lfloor \log \Psi \rfloor + 1$ levels of $\G$,
and it takes 
$O(n(\log \Psi + \log n))$ steps to construct it.

Now we find the sets $S_1(A)$ and $S_2(A)$.
Fix a site $t \in S$.
By \cref{lem:bounded:neighborhood}, we can
find all sets $S_1(A)$ that contain $t$ by looking
at the regions defined by cells in $N(t)$.
Thus, we find $N(t)$, and for every $\sigma \in N(t)$,  we 
iterate over all  regions $A$ defined by  $\sigma$,
identifying those with $t \in S_1(A)$.
Identifying the cells in $N(t)$ takes  $O(\log n + \log \Psi)$ time per
cell and also overall, since $N(t)$ has constant size.
Thus, the sets $S_1(A)$ can be constructed 
in total time $O(n(\log n + \log \Psi))$

To find the sets $S_2(A)$, we build an  
\emph{additively
weighted Voronoi diagram} for each set $S_1(A)$,
where a site $t\in S_1(A)$ has weight $-r_t$.
Such a diagram can be constructed in
$O\left(|S_1(A)|\log n\right)$ time, and 
additively weighted nearest-neighbor queries
in $S_1(A)$ then take  $O(\log n)$ 
time~\cite{berg_computational_2008,fortune_sweepline_1987,sharir_intersection_1985}.
By \cref{lem:bounded:S1S2size}, we
need $O(\sum_{A\in \mathcal{A}} |S_1(A)| \log n) = O(n\log n)$ 
time to compute
all the diagrams. For a site $s\in S$, let $\pi_s$ be the path in 
$\mathcal{F}$
from the root to the cell $\sigma_s$.
For each cell along $\pi_s$, we query all additively weighted Voronoi
diagrams for its regions $A$ with $s$.
If $s$ intersects the reported nearest 
neighbor, we add $s$ to $S_2(A)$.
A site $s$ is used for $O(\log \Psi)$ queries, 
for an overall of $O(n\log n\log \Psi)$ time to find all sets $S_2(A)$.

The edges of $H$ are determined by the sets $S_1(A)$ and $S_2(A)$.
We insert the edges 
into an initially empty HLT-structure, so we obtain the
 connectivity data structure $\mathcal{H}$ in overall 
 $O\left(n\log \Psi \log^2 n\right)$ time.
 Following \cref{lem:bounded:size}, $\mathcal{H}$ requires $O(n \log \Psi)$ space.

For every region $A \in \mathcal{A}$,
we build  the \RDS with 
$B = S_1(A)$ and $R = S_2(A)$ in 
$O\left(|S_1(A)|\log^2 n + |S_2(A)|\log^3 n \right)$ 
expected time,
using  \cref{lem:reveal:cor}.
Each of the resulting \RDS requires $O(|S_1(A)| \log |S_1(A)| + |S_2(A)|)$ space.
Summing over all regions and using 
\cref{lem:bounded:S1S2size}, we get a total expected time of
\begin{align*}
&~~\phantom{=} O\left(\sum_{A\in\mathcal{A}} |S_1(A)|\log^2 n +|S_2(A)|\log^3 n \right)\\
&= O\left(\left(\sum_{A\in \mathcal{A}} |S_1(A)|\right)\cdot \log^2 n + \left(\sum_{A\in \mathcal{A}}|S_2(A)|\right) \cdot \log^3 n \right)\\
&=O\left(n\log^2 n + n \log\Psi \log^3 n \right)
= O(n \log \Psi \log^3 n),
\end{align*}
and this dominates the preprocessing time.

Similarly, summing over all regions and using 
\cref{lem:bounded:S1S2size} yields a total space usage for the revealing data structures of
\begin{align*}
    &~~\phantom{=} O\left(\sum_{A\in\mathcal{A}} |S_1(A)| \log |S_1(A)| + |S_2(A)| \right)\\
    &= O\left(\left( \sum_{A\in\mathcal{A}} |S_1(A)| \right) \log n + \left( \sum_{A\in\mathcal{A}} |S_2(A)| \right) \right) \\
    &= O(n (\log n + \log \Psi)),
\end{align*}
which dominates the space usage for the regions and their associated sets, the quadforest $\mathcal{F}$, the HLT-structure $\mathcal{H}$, and also the  temporarily constructed additively weighted Voronoi diagrams.

\end{proof}

Now we show that the data structure correctly and efficiently handles
queries and deletions.

\begin{theorem}\label{thm:bounded:deletion} The data structure
handles deletions of sites in overall expected time
$O\left(
n\log\Psi\log^4n  
\right)$, assuming the deletions are oblivious of the internal random choices of the data structure. 
Furthermore, it requires
$O(\log n/\log\log n)$ time to answer connectivity queries correctly and requires $O(n (\log n + \log \Psi))$ space.
\end{theorem}

\begin{proof} 
We first show that the answers given by our data structure are indeed correct.
Over the lifetime of the data structure, we
maintain the invariant that the sets $S_1(A)$ and $S_2(A)$ always contain
the sites as defined in \cref{subsec:bounded:proxy}, the graph stored in
$\mathcal{H}$ is the proxy graph $H$, and each \RDS associated 
with a region
$A$ contains the sets $S_1(A)$ and $S_2(A)$.
Assuming that this invariant
holds, \cref{lem:bounded:conn} implies that the connectivity
queries are answered correctly.

To show that the invariant is maintained when deleting a
site $s$, we first note that removing $s$ 
from a set $S_2(A)$ only leads to the deletion of a single 
edge in $H$.
As we make sure to mirror the removal from $S_2(A)$ in $\mathcal{H}$
and the \RDS, removing $s$ from all sets $S_2(A)$
that contain it maintains the second half of the invariant.

Now, let $A$ be a region such that 
$s$ lies in $S_1(A)$.
Then, for all sites $t$ in the corresponding set $S_2(A)$,
it is necessary that $t$ intersects at least one site in $S_1(A)$.
Furthermore, there is a---possibly empty---set $U'$ 
of sites in $S_2(A)$ that only intersect $s$. 
So to maintain the invariant, we have to delete $U'$ from 
$S_2(A)$ and the associated \RDS.
As $U'$ contains exactly the sites reported in the set $U$ 
returned by the \RDS, the sites from  $U'$ are removed by construction, 
and the invariant on $S_2(A)$ and the \RDS is maintained.
Since we do not delete the edges that were present because a site was in 
$S_1(A) \cap S_2(A)$, the graph stored in $\mathcal{H}$ 
is still the proxy graph $H$, and the invariant is maintained.

By these observations, applied to $S_1(A)$ and $S_2(A)$ for all regions $A$ during a deletion, and by the fact that a deletion removes all edges incident to the deleted site, the invariant holds.

Now, we analyze the running time.
Queries are performed to $\mathcal{H}$ take
$O(\log n/\log \log n)$ time.
Every edge is removed exactly once from $\mathcal{H}$, for a total of
$O\left(n\log\Psi\log^2n\right)$ time.
Finding the regions $A$ whose sets $S_1(A)$ and $S_2(A)$ have to be 
updated during a deletion takes again $O(\log n+\log \Psi)$ time, 
by similar argument as in the proof of \cref{lem:bounded:preprocessing}.
The  running time is dominated by the deletions from the \RDS.
By \cref{lem:reveal:cor}, the \RDS associated with a single region 
adds
$O\left(|S_1(A)| \log^2 n + k_A \log^{4} n + 
|S_2(A)|\log^4 n\right)$ expected steps
to the total running time, where $k_A$ is the number 
of sites deleted from $S_1(A)$ and the deletions are assumed to be oblivious of the internal random choices of the \RDS{}. 
Summing over all regions, we have 
$\sum_{A\in\mathcal{A}} k_A = O(k)$, since every 
site is contained in $O(1)$ sets $S_1(A)$.
Furthermore, as we have $\sum_{A\in\mathcal{A}} |S_1(A)| = O(n)$ and
$\sum_{A\in\mathcal{A}} |S_2(A)| = O(n\log\Psi)$, 
an overall running time of 
\[
O\left(n\log^2 n +  k \log^{4}n +
 n\log \Psi \log^4 n \right)
= O\left(n\log \Psi \log^4 n + 
 k \log^{4} n\right)
\]
 for $k$ deletions follows. 
As \(k \in O(n\log\Psi)\) and the
 the space usage is unchanged from \cref{lem:bounded:preprocessing}, the theorem follows.
\end{proof}

\subsection{Incremental data structure}\label{subsec:bounded:insertion}
Next, we describe our incremental connectivity data structure 
for the bounded radius ratio case.
It is also based on the proxy graph $H$ from 
\cref{subsec:bounded:proxy}.
Since we only do insertions, we use the connectivity data structure 
from \cref{thm:generalinc} for $\mathcal{H}$.
This data structure achieves $O\left(1\right)$ amortized time 
for updates and $O\left(\alpha(n)\right)$ amortized time for queries.

To update the edges incident to a region $A$ defined by a 
cell $\sigma$, we use two fully dynamic additively weighted 
nearest neighbor data structures (\AWNN, \cref{lem:prelims:dynamicNN}): 
one for the  set $S_1(A)$ and one for the set $\overline{S_2(A)}$ 
that contains those sites $s \in \sigma$ with radius $r_s < 2|\sigma|$ 
that have not been added to $S_2(A)$ yet.
As before, we maintain a quadforest 
$\mathcal{F}$ of height $\lfloor \log \Psi \rfloor + 1$ 
to navigate the cells.
See \cref{fig:bounded:inc} for an illustration of the data structure.

\begin{figure}
\centering
\includegraphics[page=2]{approachsemi}
\caption{The structure of the incremental data structure.}
\label{fig:bounded:inc}
\end{figure}

The data structure works as follows: when inserting a site $s$, 
we determine the cells of the neighborhood $N(s)$, and
add those among then that are not in $\mathcal{F}$ to
 $\mathcal{F}$.
Furthermore, we add the associated region vertices to $H$ 
and also to the dynamic connectivity graph structure $\mathcal{H}$.
Then, we have to connect the site $s$ to the regions.
Hence, we have to identify the sets $S_1(A)$, $S_2(A)$, and 
$\overline{S_2(A)}$ that the site $s$ belongs to, add $s$ to the 
corresponding \AWNN{}s, and insert the edges incident to $s$ 
into $\mathcal{H}$.
After the insertion of $s$ into a set $S_1(A)$, we also query the \AWNN 
of the associated set $\overline{S_2(A)}$ to find possible sites 
that intersect $D_s$ and thus have to be moved to $S_2(A)$, 
do so if required, and add the edges incident to the transferred site to 
$\mathcal{H}$.

\begin{theorem}\label{thm:bounded:insertion} The data structure
    described above correctly answers connectivity queries in $O(\alpha(n))$ 
    amortized time, performs insertions in
    $O(\log\Psi \log^{4} n)$ 
    amortized expected time, and requires $O(n \log \Psi \log n)$ space, where $n$ 
    is the number of sites stored in the data structure.
\end{theorem}

\begin{proof}    
    First, we show correctness, using
    similar invariants as in the proof of \cref{thm:bounded:deletion}.
    For every region $A$,
    \begin{enumerate}
        \item the \AWNN of $S_1(A)$ contains exactly the sites in $S_1(A)$,
	\label{ins:(a)}
        \item the \AWNN $\overline{S_2(A)}$ 
	contains exactly the sites that would lie in $S_2(A)$ 
	if there was a disk in $S_1(A)$ intersecting them,
        \label{ins:(b)}
        \item the sets $S_1(A)$ and $S_2(A)$ 
	represented in our data 
	structure always contain the sites as defined in 
	\cref{subsec:bounded:proxy}; and
        \label{ins:(c)}
        \item the graph stored in $\mathcal{H}$ is the proxy graph 
	$H$.
        \label{ins:(d)}
    \end{enumerate}
    Note that 
    we do not need to store the sets $S_2(A)$ explicitly, since
    no sites are deleted from $S_2(A)$ and since 
    $S_2(A)$ is not used during the insertion procedure. Thus,
    the set $S_2(A)$ is implicitly represented 
    by the edges of $\mathcal{H}$.
    Under the assumption that the invariants hold, 
    \cref{lem:bounded:conn} again implies that connectivity 
    queries are answered 
    correctly.

    Invariant~\ref{ins:(a)} and the first part of 
    Invariant~\ref{ins:(c)} hold by construction.
    When inserting a site $s$, it is added to the sets $S_1(A)$ 
    of all regions $A$ with $A \in N(s)$ that $s$ belongs to.
    By \cref{lem:bounded:neighborhood}, these are the 
    only regions potentially containing $s$.
    When adding $s$ to $S_1(A)$, we also add $s$ 
    to the \AWNN of $S_1(A)$, and insert the edge $sA$  into 
    $\mathcal{H}$. This is exactly the edge we need  in $H$.
    To complete the proof of Invariants~\ref{ins:(c)} and~\ref{ins:(d)}, 
    we first need to prove that Invariant~\ref{ins:(b)} holds.
    
    By definition of $S_2(A)$, for a region $A$ defined by a cell $\sigma$,
    we  only have in $S_2(A)$ sites $s$ with 
    $s \in \sigma$, $r_s < 2|\sigma|$, which satisfy the constraint that $D_s$ 
    intersects at least one site in $S_1(A)$.
    Furthermore, we add $s$  to $\overline{S_2(A)}$, if it fulfills the 
    first two but not the last constraint.
    This guarantees that a site $s$ is potentially present in the sets 
    $S_2(A)$ of the regions defined by any cell on the path from the root to 
    $\sigma_s$, as these are exactly the cells for which the radius 
    constraint of $S_2(A)$ holds.
    We check for all those regions if $s$ intersects any site in $S_1(A)$.
    The site is only inserted into $S_2(A)$ if there is an intersection, 
    otherwise, it is added to the \AWNN of $\overline{S_2(A)}$.
    Furthermore, a site in $\overline{S_2(A)}$ is only transferred from 
    $\overline{S_2(A)}$ to $S_2(A)$ if a newly inserted site to $S_1(A)$ 
    intersects it. This proves Invariant~\ref{ins:(b)}.
    Whenever we assign a site to a region $S_2(A)$, during its insertion or 
    due to a change of the corresponding set $S_1(A)$, we add the edge 
    $sA$ to $\mathcal{H}$, thus representing the edge we need  in $H$.
    This also concludes the proof of Invariants~\ref{ins:(c)} 
    and~\ref{ins:(d)}. The correctness follows.
    
    Now, we consider the finer details and the running time.
    When inserting a site $s$, we first have to identify the
    root of the quadtree in $\mathcal{F}$ that contains $\sigma_s$.
    Then, we can descend into this quadtree to find $\sigma_s$, creating
    new quadtree nodes when necessary, including in the neighborhoods of the cells on the path towards $\sigma_s$.
    As in \cref{sub:dg-adapting-the-unit-disk-case}, this 
    takes $O\left(\log n + \log \Psi \right)$ time, where $n$ 
    is the number of sites at the time of the insertion.
    When introducing new cells into $\mathcal{F}$, 
    we also need their associated regions to be present in the proxy 
    graph $H$.
    Hence, we insert the regions of the new cells as isolated vertices 
    to $\mathcal{H}$,  and we initialize the corresponding \AWNN{}s 
    for $S_1(A)$ and $\overline{S_2(A)}$.
    Subsequently, we have to insert $s$ and its incident edges:
    first, we add $s$ as an isolated vertex to $\mathcal{H}$.
    Then, we insert $s$ into the sets it needs to be contained in and, if applicable, to 
    the associated \AWNN{}s.
    Since we implicitly store $S_1(A)$ and $S_2(A)$, 
    we add the corresponding edges between the regions and $s$ to 
    $\mathcal{H}$ along the way.
    
    Recall that by \cref{lem:bounded:neighborhood},
    the site $s$ can only be in a set $S_1(A)$ defined by a cell in $N(s)$.
    Thus, we iterate through the cells of $N(s)$ and we add $s$ 
    with weight $-r_s$ to the \AWNN{}s of the sets $S_1(A)$ 
    it belongs to.  
    We also introduce the respective edges 
    $sA$ into $H$.
    As $N(s)$ is of constant size, this step takes $O(1)$ 
    amortized time for the insertion of edges in $\mathcal{H}$ and 
    $O(\log^2 n)$ amortized expected time for insertions into the 
    \AWNN{}s.
    As the size of \(N(s)\) is a constant, this step can be implemented without explicit pointers to neighboring cells by traversing the quadtree from the root to each cell in \(N(s)\) without any overhead.
    
    Furthermore, for a region $A$, changes in $S_1(A)$ 
    may affect the set $S_2(A)$.
    Hence, we have to check if the insertion of $s$ into $S_1(A)$ causes
     a site $t$ to move from $\overline{S_2(A)}$ to $S_2(A)$ 
    because $D_s$ intersects $D_t$.
    We identify these sites by successive weighted nearest-neighbor queries 
    with $s$ in the \AWNN of $\overline{S_2(A)}$, stopping when 
    the resulting site does not intersect $D_s$.
    For each site $t$ that intersects $D_s$, we delete $t$ 
    from the \AWNN of $\overline{S_2(A)}$ 
    and insert the edge $tA$ into $\mathcal{H}$.
    The amortized expected running time of $O(\log^{4} n)$ 
    to delete $t$ from $\overline{S_2(A)}$ dominates this step.
    We do not know how many deletions are performed in this step, but a site 
    $t$ can be present in the most $O(\log \Psi )$ sets $\overline{S_2(A)}$, 
    associated to the cells containing $t$ on the path from the root to 
    $\sigma_t$.
    Furthermore, a site $t$ is inserted into $\overline{S_2(A)}$ only once.
    Thus, we can charge those deletions to the previous insertions, 
    which yields an amortized expected running time of 
    $O(\log\Psi \log^{4} n)$ for this step per insertion.
    
    Now, we consider the time needed to insert  $s$ either into $S_2(A)$ 
    or $\overline{S_2(A)}$ for some regions $A$.
    By definition, we know that $s$ has to be added either to $S_2(A)$ 
    or to $\overline{S_2(A)}$ of the regions of the cells 
    along the path from the corresponding root to $\sigma_s$ in $\mathcal{F}$.
    Let $A$ be a region defined by a cell along this path. To 
    determine whether to add $s$ to $S_2(A)$ or to $\overline{S_2(A)}$, 
    we perform a weighted nearest neighbor query  in the \AWNN of $S_1(A)$.
    If we find a site in $S_1(A)$ that intersects $D_s$, 
    we know that $s$ belongs to $S_2(A)$.
    Then, we insert the edge $sA$ to $\mathcal{H}$, 
    if it does not exist yet.
    This takes $O(1)$ amortized time.
    In the other case, $s$ is inserted to the \AWNN of 
    $\overline{S_2(A)}$, in $O(\log^2 n)$
    amortized expected time.
    The assignment to $S_2(A)$ or to $\overline{S_2(A)}$
    concludes the insertion.
    Since each cell defines a constant number of regions, 
    the step requires a constant number of lookup 
    and insertion operations on each level.
    As the time for insertions into the \AWNN of $\overline{S_2(A)}$ 
    dominates, 
    we get an amortized 
    expected  running time of $O(\log\Psi \log^2 n)$.
    Summing over all insertion steps, we achieve an amortized expected 
    running time of $O(\log\Psi \log^{4} n)$ 
    per insertion, due to the dominating running time of the 
    deletions from $\overline{S_2(A)}$.
    
    The query time follows
    directly from the query time in $\mathcal{H}$.
    
    Analyzing the space usage of the data structure is similar to the proof of \cref{lem:bounded:preprocessing}.
    The quadforest $\mathcal{F}$ requires $O(n \log \Psi)$ space, as does $\mathcal{H}$ by  \cref{lem:bounded:size}.
    This space usage is again dominated by the space required for the \AWNN{}s for $S_1(A)$ and $\overline{S_2(A)}$ for all regions $A$.

    First, we have $\sum_{A\in\mathcal{A}} |\overline{S_2(A)}| = O(n \log \Psi)$ for the same reason as in \cref{lem:bounded:S1S2size}.
    Following this and \cref{lem:bounded:S1S2size}, we get a space usage for the \AWNN{}s of
    \begin{align*}
    &~~\phantom{=} O\left(\sum_{A\in\mathcal{A}} |S_1(A)|\log  |S_1(A)| +|\overline{S_2(A)}|\log |\overline{S_2(A)}| \right)\\
    &= O\left(\left(\sum_{A\in\mathcal{A}} |S_1(A)| +|\overline{S_2(A)}| \right)\cdot \log n \right)\\
    &= O(n \log \Psi \log n),
    \end{align*}
    which is also the overall space bound.
    The theorem follows.
\end{proof}

\section{Arbitrary radius ratio}\label{sec:unbounded}
We extend the approach from \cref{sec:bounded} to obtain a decremental data 
structure with a running time that is independent of $\Psi$. 
The $O\left(n \log \Psi\log^4n\right)$ term in \cref{thm:bounded:deletion} 
came from the total size of the sets $S_2(A)$ which, in turn, followed
from the height of the quadtrees in $\mathcal{F}$.
We can get rid of this dependency by using a \emph{compressed quadtree} 
$\Q$ instead of $\mathcal{F}$ (see \cref{sec:more_prelims}).
The height and size of $\Q$  do not depend on the radius ratio 
or the diameter 
of $S$, but only on $n$.
Nonetheless, the height of $\Q$ could still be $\Theta(n)$, 
which is not favorable for our purposes. 
In order to reduce the number of edges in our proxy graph to 
$O\left(n\log^2n\right)$, we use a 
\emph{heavy path decomposition} of $\Q$ (see \cref{sec:more_prelims})
in combination with a \emph{canonical decomposition}  of every heavy path.
The new proxy graph $H$ is described in \cref{subsec:unbounded:proxy}, 
and the decremental connectivity data structure based on $H$ 
can be found in \cref{subsec:unbounded:deletion}.

\subsection{More preliminaries}
\label{sec:more_prelims}

Let $\diam(S) = \max_{s,t\in S} \| st \|$ be the diameter of the
initial site set $S$.
To simplify our arguments, we assume without loss of generality that
$S$ and its associated radii are scaled 
so that all associated radii are at least $1$. 
This allows us to keep working with the hierarchical grid $\G$
from \cref{sec:hier-grid}

\subparagraph*{Compressed quadtrees.}
If we define a quadtree for a set $\mathcal{C}$
of $n$ cells as in \cref{sec:hier-grid},
then it has $O(n)$ leaves and height $O(\log(|\rho|))$, 
where $\rho$ is the smallest cell in $\G$ that contains all cells
of $\mathcal{C}$.
This height can be arbitrarily large, even if $n$ is small.
To avoid this, we need the notion of a 
\emph{compressed quadtree} $\Q$~\cite{har-peled_geometric_2011}.
Let $\T$ be the (uncompressed) quadtree for $\mathcal{C}$.
Then, let $\sigma_1, \dots, \sigma_k$ be a maximal path in
$\T$ towards the leaves, where all $\sigma_i$, $1 \leq i \leq k - 1$,
have only one child that contains (not necessarily proper) cells from $\mathcal{C}$, 
and no $\sigma_i$, $2 \leq i \leq k - 1$ lies in $\mathcal{C}$.
In the compressed quadtree $\Q$, 
this path is replaced by the single edge $\sigma_1 \sigma_k$. 
Then, $\Q$ has $O(n)$ vertices, height $O(n)$,
and it can be constructed in $O(n\log n)$
time~\cite{BuchinLMM11,har-peled_geometric_2011}.

Indeed, the latter
construction algorithm is stated for planar
point sets (and not for cells), but it can be applied by 
using a set of 
$O(n)$ virtual sites, similar to a construction 
of Har-Peled~\cite{har-peled_geometric_2011}:
for each cell $\sigma \in \mathcal{C}$, we add two virtual sites
that lie at the centers of two of the four cells 
that partition $\sigma$, 
see \cref{fig:neighborhoodlevels}.
Now, all cells in $\mathcal{C}$ have at least two children in the 
non-compressed quadtree for the virtual sites, and thus the cells are 
also present in the compressed quadtree for the virtual sites,
as constructed by the traditional algorithms.

The construction of Buchin et al.\ \cite{BuchinLMM11} does not use the floor function at the cost of having cells that are not aligned. In \cref{sec:app:quadtrees} we give the details on how to slightly modify the construction process and the operations on the quadtrees such that we can for our purposes assume to use an aligned compressed quadtree with pointers from each cell to its neighbors at each level of the hierarchical grid.
\begin{figure}
\center
\includegraphics[page=2]{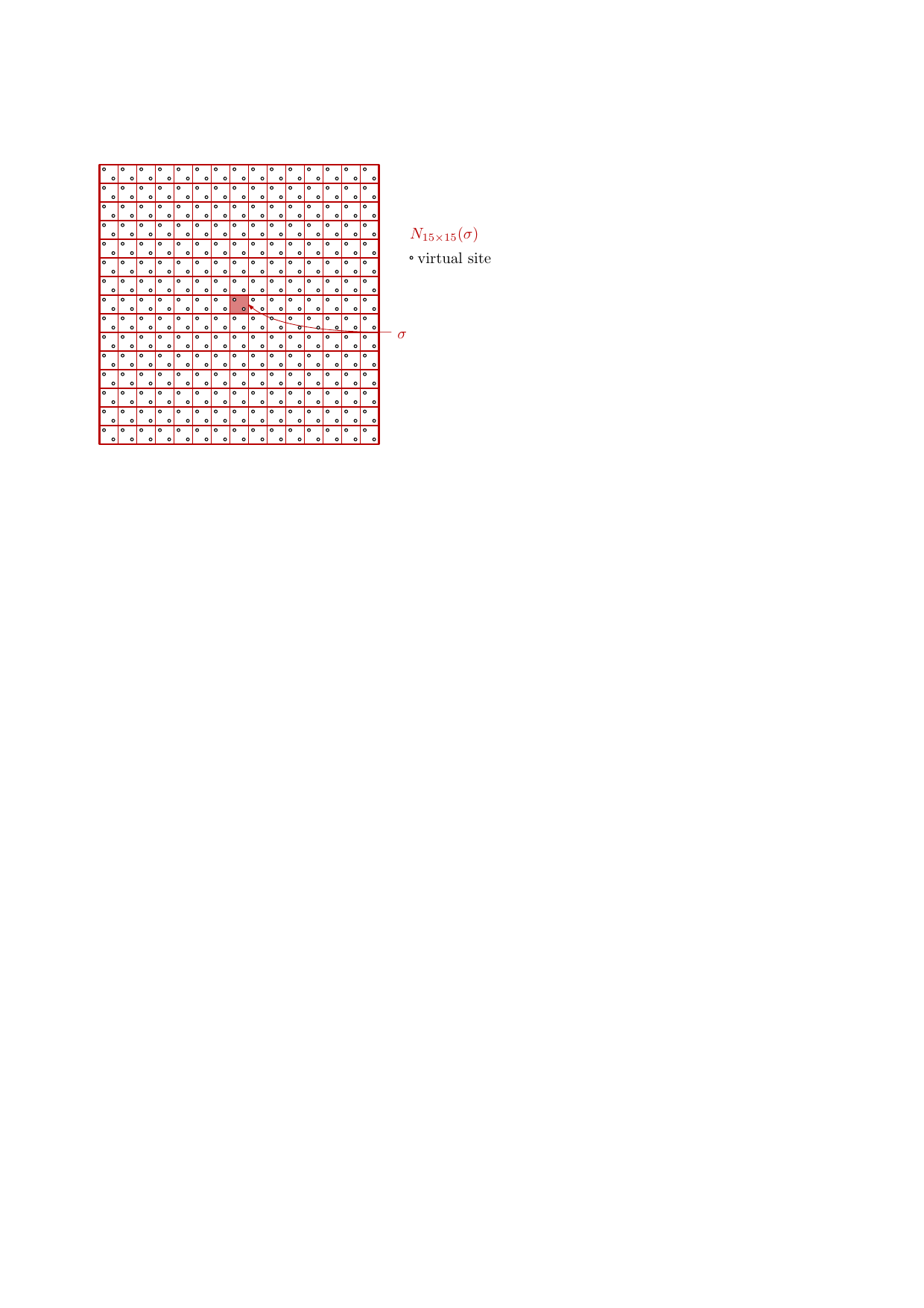}
\caption{Four cells with the virtual sites to ensure that they are present in the compressed quadtree}
\label{fig:neighborhoodlevels}
\end{figure}

\subparagraph*{Heavy paths.}
Let $T$ be a rooted ordered tree (i.e., we have an order
on the children of each interior node of $T$). 
An edge $uv \in T$ is called \emph{heavy} if $v$ 
is the first child of $u$ in the given child-order
that maximizes the total number of nodes in the subtree rooted
at $v$ (among all children of $u$). Otherwise,
the edge $uv$ is \emph{light}.
By definition,
every interior node in $T$ has exactly
one child that is connected by a heavy edge.

A \emph{heavy path} is a maximum path in $T$ that
consists only of heavy edges.
The \emph{heavy path decomposition} of $T$ is 
the set of all the heavy paths in $T$.
The following lemma summarizes a classic result on
the properties of heavy path decompositions.

\begin{lemma}[Sleator and Tarjan~\cite{sleator_data_1983}]\label{lem:heavypath}
Let $T$ be a tree with $n$ vertices.
Then, the following properties hold:
\begin{enumerate}
  \item Every leaf-root path in $T$ contains $O(\log n)$ light 
  edges;
  \label{lem:heavypath:height}
  \item every vertex of $T$ lies on exactly one heavy path; and
  \label{lem:heavypath:path}
  \item the heavy path decomposition of $T$ can be constructed in $O(n)$ time.
  \label{lem:heavypath:constr}
\end{enumerate}
\end{lemma}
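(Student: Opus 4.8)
This is the classical statement of Sleator and Tarjan, so the plan is simply to recall their argument. Throughout, for a vertex $u$ let $n_u$ denote the number of vertices in the subtree of $T$ rooted at $u$, so that $n_u = 1 + \sum_c n_c$ with $c$ ranging over the children of $u$, and $n_r = n$ for the root $r$.

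For part~\ref{lem:heavypath:height}, the crucial observation is that crossing a light edge at least doubles the subtree size. If $uv$ is light with $v$ a child of $u$, then $u$ is internal and hence has a (unique) heavy child $w$; by the definition of heaviness $n_w \ge n_v$, so $n_u \ge 1 + n_v + n_w \ge 2 n_v$. Now fix a leaf $\ell$, walk up to the root $r$, and let $u_1 v_1, \dots, u_k v_k$ be the light edges met in that order, with $v_i$ the child and $u_i$ the parent. Since subtree sizes do not decrease as one moves towards $r$ along the path, $n_{v_{i+1}} \ge n_{u_i} \ge 2 n_{v_i}$, whence $n_{v_k} \ge 2^{k-1}$ and $n \ge n_{u_k} \ge 2 n_{v_k} \ge 2^{k}$; thus $k \le \log_2 n = O(\log n)$.

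For part~\ref{lem:heavypath:path}, I would consider the subgraph $H$ of $T$ formed by the heavy edges alone. Each non-root vertex has exactly one parent edge, which is heavy or light, so each vertex is incident to at most one heavy edge going up; and each internal vertex is incident to exactly one heavy edge going down, while each leaf to none. Hence every vertex has degree at most $2$ in $H$, and since $T$ is acyclic, the components of $H$ are simple paths (an isolated vertex, which occurs precisely for a leaf whose parent edge is light, being a degenerate path). These components are exactly the maximal heavy paths, so each vertex lies on exactly one of them. For part~\ref{lem:heavypath:constr}, a single post-order traversal computes all $n_u$ in $O(n)$ time; a second traversal marks, for each internal vertex, the first child in the child-order attaining $\max_c n_c$ as heavy; and a final scan that links each vertex to its marked heavy child assembles the heavy path decomposition. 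Every vertex and edge is handled $O(1)$ times, so the total running time is $O(n)$.

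Since this is textbook material, there is no real obstacle; the only points that need mild care are the tie-breaking rule (take the \emph{first} maximizing child in the given child-order, so that the heavy child is well defined and unique) and the degenerate cases in part~\ref{lem:heavypath:path}, i.e.\ making sure that ``maximal heavy path'' is meaningful even when it consists of a single vertex (a leaf whose parent edge is light, or the whole tree when $n=1$).
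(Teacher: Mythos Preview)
Your argument is correct and is the standard Sleator--Tarjan proof. The paper does not actually prove this lemma; it simply states it with a citation, so there is no alternative approach to compare against.
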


\subsection{The proxy graph}\label{subsec:unbounded:proxy}

The general structure of the proxy graph is as 
in \cref{subsec:bounded:proxy}, and we will often refer back to it.
We still have a bipartite graph with $S$ 
on one side and a set of regions vertices on the other side.
The regions are again used to define sets $S_1(A)$ and $S_2(A)$ 
that determine the edges.
However, we adapt the regions $A$ and 
define them based on certain \emph{subpaths} of the compressed quadtree 
$\Q$ instead of single cells.
Furthermore, we relax the condition on the radii in 
the definition of the sets $S_1(A)$.

As usual, for a site $s \in S$, let $\sigma_s$ be the
cell in $\G$ with $s \in \sigma_s$ and 
$|\sigma_s| \leq r_s < 2|\sigma_s|$. 
Let $N(s)$ be the $(15 \times 15)$-neighborhood of $\sigma_s$.
Let $\mathfrak{N} = \bigcup_{s \in S} \{ \sigma_s \} \cup N(s)$, and let $\mathcal{Q}$
be the compressed quadtree for $\mathfrak{N}$.
Now, let $\cR$ be the heavy path decomposition of $\Q$, as
in \cref{lem:heavypath}.
For each heavy path $R \in \cR$, we find a set $\P_R$ 
of \emph{canonical paths} such that every subpath of $R$ 
can be written as the disjoint union of $O(\log n)$ canonical paths. 
Specifically, for each $R \in \cR$, 
we build a \emph{biased} binary search tree $T_R$ 
with the cells of $R$ in the leaves, 
sorted by increasing diameter. 
The weights in the biased binary search 
tree are chosen as described by Sleator and
Tarjan~\cite{sleator_data_1983}: for a node $\sigma$ of $R$, 
let the weight $w_\sigma$ be the number of nodes in $\Q$ that are below
$\sigma$ (including $\sigma$), but not below another node of $R$ below 
$\sigma$. 
Then, the depth of a leaf $\sigma$ in $T_R$ is $O(\log (w_R/w_\sigma))$,
where $w_R$ is the total weight of all leaves in $T_R$.
We associate each vertex $v$ in $T_R$ with the path induced by the 
cells in the subtree rooted at $v$, and we add this path to $\P_R$.
Using this construction, we can write every path in $\Q$ that 
starts at the root as the disjoint union of $O(\log n)$ canonical 
paths, as shown in the following lemma:

\begin{lemma}\label{lem:pathunion}
Let $\sigma$ be a vertex of $\Q$, and let $\pi$ 
be the path from the root of $\Q$ to $\sigma$. 
There exists a set $\P_\pi$ of canonical paths such that:
\begin{enumerate}
\item $|\P_\pi| = O\left(\log n\right)$; and
\item $\pi$ is the disjoint union of the canonical paths in $\P_\pi$.
\end{enumerate}
\end{lemma}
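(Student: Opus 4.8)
The statement combines two standard decompositions: the heavy path decomposition of $\Q$ and, within each heavy path, a balanced binary search tree whose nodes span canonical subpaths. The plan is to trace the root-to-$\sigma$ path $\pi$ through these two layers and count the canonical pieces. The path $\pi$ passes through $O(\log n)$ heavy paths of $\cR$: this is exactly property~\ref{lem:heavypath:height} of \cref{lem:heavypath}, since entering a new heavy path from above costs a light edge, and there are only $O(\log n)$ light edges on any root-to-leaf path of $\Q$. (Here $n$ is the number of vertices of $\Q$, which is $O(n)$ since we only added $O(1)$ cells per site.)

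\textbf{Key steps.} First I would decompose $\pi$ at the light edges: write $\pi = \pi_1 \cdot \pi_2 \cdots \pi_k$ with $k = O(\log n)$, where each $\pi_i$ is the (contiguous) intersection of $\pi$ with one heavy path $R_i \in \cR$. This is well-defined because $\pi$ enters $R_i$ at one vertex and leaves (downward) at another, so $\pi_i$ is a subpath of $R_i$. Second, for each $i$, I would express $\pi_i$ as a disjoint union of canonical paths from $\P_{R_i}$: since the leaves of the balanced BST on $R_i$ are the cells of $R_i$ sorted by diameter, and $\pi_i$ is a \emph{contiguous} subpath of $R_i$ (hence a contiguous run of leaves in that sorted order — the cells along a heavy path are nested and therefore have strictly increasing diameters, so "contiguous on the path" equals "contiguous in the sorted leaf order"), $\pi_i$ corresponds to an interval of leaves. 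Any leaf interval of a balanced BST of height $O(\log n)$ decomposes into $O(\log n)$ canonical subtree-ranges — the standard range-tree / segment-tree argument. Each such range is precisely one canonical path in $\P_{R_i}$. Third, I would collect $\P_\pi = \bigcup_{i=1}^k (\text{the } O(\log n) \text{ canonical paths for } \pi_i)$, giving $|\P_\pi| = O(\log n) \cdot O(\log n) = O(\log^2 n)$, and note the union is disjoint since the $\pi_i$ are disjoint and within each $R_i$ the chosen canonical pieces tile $\pi_i$ without overlap.

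\textbf{Main obstacle.} The one point that needs a careful word rather than a routine citation is the claim that a contiguous subpath of a heavy path $R$ corresponds to a contiguous interval of leaves in the BST built on $R$. This hinges on the fact that cells along any root-to-leaf path of the (compressed) quadtree are strictly nested with strictly increasing diameters, so sorting the cells of $R$ by diameter reproduces their order along $R$; hence a subpath is an interval in the sorted order. I would state this explicitly before invoking the interval-decomposition lemma for balanced search trees. The rest — that a leaf interval in a height-$h$ search tree is covered by $O(h)$ canonical subtree ranges, and that these are pairwise vertex-disjoint — is the textbook argument and I would cite it as such.

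\textbf{Remark on $O(\log^2 n)$ vs. $O(\log n)$.} It is worth noting (and the proof makes clear) that the bound is $O(\log^2 n)$ rather than $O(\log n)$ precisely because $\pi$ may traverse $\Theta(\log n)$ distinct heavy paths and we spend $\Theta(\log n)$ canonical pieces in each; this is the source of the extra logarithmic factor in the proxy graph size mentioned at the start of \cref{sec:unbounded}.
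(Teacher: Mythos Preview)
Your proposal is correct and follows essentially the same approach as the paper: decompose $\pi$ into its $O(\log n)$ heavy-path pieces via \cref{lem:heavypath}, then decompose each piece into $O(\log n)$ canonical paths via the balanced BST. The only minor difference is that the paper observes each piece $\pi\cap R_i$ always starts at the \emph{top} (largest) cell of $R_i$, so the leaf interval is one-sided and the canonical paths are exactly the right children along the search path for the smallest cell $\sigma_i$; your general two-sided interval argument covers this as a special case with the same bound.
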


\begin{figure}
\includegraphics[page=2]{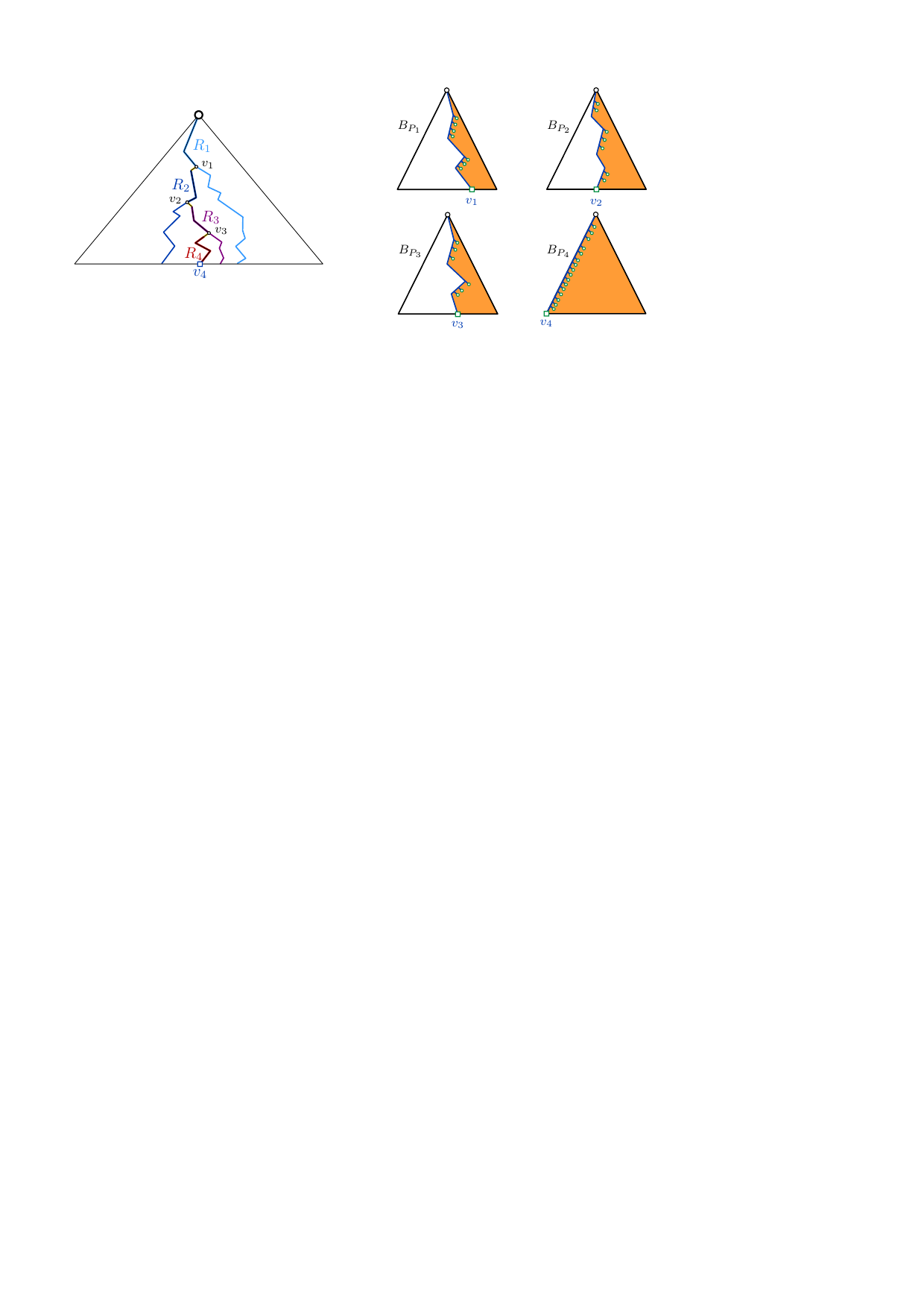}
\caption{Illustration of \cref{lem:pathunion}. 
On the left, we see the decomposition of $\pi$ into $R_1,\dots R_k$. 
On the right, the vertices defining $\mathcal{P}_\pi$ are depicted in green.}
\label{fig:heavypath}
\end{figure}

\begin{proof}
Consider the heavy paths $R_1,\dots, R_k$ encountered along $\pi$.
By \cref{lem:heavypath}, $k = O(\log n)$, and the path $\pi$
can be subdivided into the intersections $\pi \cap R_i$.
Each of these intersections constitutes a subpath of $R_i$ 
whose largest cell is also the largest cell of $R_i$.
Let $\sigma_i$ be the smallest cell of $\pi \cap R_i$. 
Then, the subpath of $R_i$ consists of all cells in $R_i$ with 
diameter at least $|\sigma_i|$.
This subpath can be composed as the disjoint union of all canonical 
paths defined by the right children along the search path of
$\sigma_i$ in the biased binary search tree associated with $R_i$, 
together with the path consisting of $\sigma_i$ only; see \cref{fig:heavypath}.
Summing the depths of the corresponding leaves  of the 
biased binary search trees, we get that  
the overall number of canonical paths for $\pi$
is $O(\log n)$. 
\end{proof}

The vertex set of the proxy graph $H$ again consists of $S$ 
and a set of regions $\mathcal{A}$.
We define $O(1)$ regions for each canonical path $P$ 
in a similar way as in 
\cref{subsec:bounded:proxy}. Let $\sigma$ be the smallest cell and $\tau$ 
the largest cell of $P$. 
The \emph{inner} and \emph{middle regions} of $P$ are defined as 
in \cref{subsec:bounded:proxy}, 
using $\sigma$ as the defining cell. 
More precisely, the inner region for $P$ is the disk with center $a(\sigma)$ 
and radius $|\sigma|$.
The middle regions of $P$ are the $d_2$ regions that 
are defined as the intersection of the cones in $\mathcal{C}_{d_2}(\sigma)$ 
with the annulus of inner radius $|\sigma|$ and outer radius 
$\frac{5}{2}{|\sigma|}$.
For the \emph{outer regions} of $P$, we extend the outer radius of 
the annulus:
they are defined as the intersections of the cones in 
$\mathcal{C}_{d_1}(\sigma)$ 
with the annulus of inner radius $\frac{5}{2}{|\sigma|}$ 
and outer radius $\frac{5}{2}|\sigma| + 2|\tau|$, 
again centered at $a(\sigma)$.
The set $\mathcal{A}$ now contains the regions defined in this way for 
all canonical paths.

Given a region $A\in \mathcal{A}$ for a canonical path $P$ with
smallest cell $\sigma$ and largest cell $\tau$, 
we can now define the sets $S_1(A)$ and $S_2(A)$.
The set $S_1(A)$ is defined similarly to the
analogous set in \cref{subsec:bounded:proxy}, again using $\sigma$ 
as the defining cell for most parts. The difference is that the 
radius range for a site $t$ in a set \(S_1(A)\) is larger, as its upper bound depends on 
the diameter of $\tau$.
The set $S_1(A)$ contains all sites $t$ such that
\begin{enumerate}[label=(\roman*)]
\item $t \in A$; 
\item $|\sigma| \leq r_t < 2|\tau|$; and
\item $\| a(\sigma)t\| \leq r_t + \frac{5}{2}|\sigma|$.
\end{enumerate}
The last condition is only relevant if $A$ is an outer region, 
as it is trivially true for middle and inner regions.

The definition for $S_2(A)$ is also similar 
to \cref{subsec:bounded:proxy}, using canonical
paths instead of cells.
Let $s \in S$ be a site and
$\pi_s$ be the path in $\Q$ from the root to $\sigma_s$.
Let $\P_{\pi_s}$ be the decomposition of $\pi_s$ into canonical paths as 
in \cref{lem:pathunion}.
Let $A$ be a region, defined by a canonical path $P$. 
Then, $s \in S_2(A)$ if
\begin{enumerate}[label=(\roman*)]
\item  $P \in \P_{\pi_s}$ and
\item $s$ is adjacent in $\cD(S)$ to at least one site in $S_1(A)$.
\end{enumerate}
If $\sigma$ is the smallest cell in a canonical path defining a region $A$, 
then every site $s \in S_2(A)$ lies in $\sigma$, has 
$r_s < 2|\sigma|$, and intersects at least one site in $S_1(A)$ 
, satisfying basically the same conditions we had in \cref{subsec:bounded:proxy}. 
However, as the definition here is restricted to those canonical paths in 
$\P_{\pi_s}$, not all sites satisfying the conditions from \cref{subsec:bounded:proxy} are considered for inclusion in $S_2(A)$. 
As we will see below, this suffices to make sure that the proxy graph 
represents the connectivity, while also ensuring that each site $s$ 
lies in few sets $S_2(A)$.

The graph $H$ is now again defined by connecting 
each region $A \in \mathcal{A}$ to all sites in 
$s\in S_1(A) \cup S_2(A)$.
To show that $H$ accurately reflects the connectivity in $D(S)$, 
we need the following corollary of \cref{lem:bounded:clique1}.

\begin{corollary}\label{cor:unbounded:clique}
Suppose that $d_1 \geq 23$ and $d_2 \geq 8$.
Then, for any region $A \in \mathcal{A}$, the 
sites in $S_1(A)$ form a clique in $\cD(S)$.
\end{corollary}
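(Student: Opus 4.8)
The plan is to reduce Corollary~\ref{cor:unbounded:clique} directly to Lemma~\ref{lem:bounded:clique1}. The key observation is that the geometry of each region $A \in \mathcal{A}$ in the compressed setting is governed entirely by the \emph{smallest} cell $\sigma$ of the defining canonical path, exactly as in Section~\ref{subsec:bounded:proxy}: the inner region is $D(a(\sigma),|\sigma|)$, the middle regions are the cones $\mathcal{C}_{d_2}(\sigma)$ intersected with the annulus of inner radius $|\sigma|$ and outer radius $\tfrac{5}{2}|\sigma|$, and the outer regions are the cones $\mathcal{C}_{d_1}(\sigma)$ intersected with the annulus of inner radius $\tfrac{5}{2}|\sigma|$ and outer radius $\tfrac{5}{2}|\sigma| + 2|\tau|$, where $\tau$ is the largest cell of the path. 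The only differences from the bounded case are that the outer annulus has a larger outer radius ($\tfrac{5}{2}|\sigma|+2|\tau|$ instead of $\tfrac{9}{2}|\sigma|$), and that the radius range in the definition of $S_1(A)$ is widened to $|\sigma| \le r_t \le 2|\tau|$ instead of $|\sigma|\le r_t \le 2|\sigma|$. So I would carry out the proof by re-examining each of the three cases in the proof of Lemma~\ref{lem:bounded:clique1} and checking that the argument still goes through with these two changes.

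First I would handle the inner and middle regions, which are the easy cases. For the inner region, $t,t'$ both lie in $D(a(\sigma),|\sigma|)$, a disk of diameter $2|\sigma|$, and since $r_t, r_{t'} \ge |\sigma|$ we immediately get $\|tt'\| \le 2|\sigma| \le r_t + r_{t'}$, so the edge is present; this is verbatim the bounded argument and needs no modification. For the middle region, the diameter bound $\operatorname{diam}(A) \le 2|\sigma|$ from the law of cosines (using $d_2 \ge 8$) is unchanged, and again $r_t,r_{t'}\ge |\sigma|$ gives the edge. Note that for both of these region types the relaxed upper bound $r_t \le 2|\tau|$ is irrelevant — only the lower bound $r_t \ge |\sigma|$ is used — so nothing breaks.

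The outer region case is where the changes matter, and this is the step I expect to be the main obstacle. Here the proof of Lemma~\ref{lem:bounded:clique1} constructs, for each site $t\in S_1(A)$, the two perpendicular chords $\ell_1, \ell_2$ of the cone through $t$ and shows $\|\ell_1\|,\|\ell_2\| \le \|a(\sigma)t\| \tan(2\pi/d_1) \le r_t$ — using $\|a(\sigma)t\| \le r_t + \tfrac{5}{2}|\sigma| \le \tfrac{7}{2} r_t$ (which relies on $r_t \ge |\sigma|$, hence $\tfrac{5}{2}|\sigma| \le \tfrac{5}{2}r_t$) and $\tan(2\pi/d_1) \le \tfrac{2}{7}$ for $d_1 \ge 23$. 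This chain of inequalities is completely insensitive to the value of $|\tau|$: it only uses the condition $\|a(\sigma)t\| \le r_t + \tfrac{5}{2}|\sigma|$, which is condition~(3) in the new definition of $S_1(A)$ verbatim, together with $r_t \ge |\sigma|$. So the convex hull $Z_t$ is still contained in $D_t$. The remaining half of the outer-region argument — that the segment $\ell_t = \overline{a(\sigma)t}\setminus D(a(\sigma),\tfrac{5}{2}|\sigma|)$ of one site must cross the far chord $\overline{p_2 q_2}$ of the other, forcing the edge — again depends only on the inner radius $\tfrac{5}{2}|\sigma|$ of the annulus and on the cone angle, not on the outer radius. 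The larger outer radius and larger radius cap thus only enlarge which sites can belong to $S_1(A)$; they do not weaken any inequality in the clique argument. I would therefore write the proof as: ``The geometry of the regions and the defining conditions of $S_1(A)$ relevant to the clique argument are identical to those in Lemma~\ref{lem:bounded:clique1} except that the outer annulus has a larger outer radius and the radius upper bound is $2|\tau|$ rather than $2|\sigma|$; neither change is used anywhere in that proof, which only invokes $r_t \ge |\sigma|$ and $\|a(\sigma)t\| \le r_t + \tfrac52|\sigma|$. Hence the same argument applies.'' The one subtlety to double-check is that in the middle-region diameter computation the bound $\operatorname{diam}(A)\le 2|\sigma|$ still uses the outer radius $\tfrac52|\sigma|$ of the middle annulus, which is unchanged — so that case is genuinely untouched — and that no case secretly used $r_t \le 2|\sigma|$ as an \emph{upper} bound (it does not). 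Once that is confirmed, the corollary follows.
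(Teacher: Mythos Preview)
Your proposal is correct and follows essentially the same approach as the paper: both reduce to Lemma~\ref{lem:bounded:clique1} by observing that its proof only uses the lower bound $r_t \ge |\sigma|$ and the distance condition $\|a(\sigma)t\| \le r_t + \tfrac{5}{2}|\sigma|$, never the upper bound on $r_t$ nor the outer radius of the outer annulus. The paper's proof is a two-sentence remark to this effect, while you spell out the case-by-case verification explicitly; but the substance is identical.
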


\begin{proof}
Recall that the center of the annuli and disks defining $A$ 
is given by the smallest cell of the associated canonical path.
A close inspection of the proof for \cref{lem:bounded:clique1} 
shows that we only use the lower bound on the radii of the sites in 
$S_1(A)$.
As this lower bound is unchanged, all arguments carry over 
for sites with larger radii.
\end{proof}

\begin{lemma}\label{lem:unbounded:conn}
Two sites are connected in $H$ 
if and only if they are connected in $\cD(S)$.
\end{lemma}

\begin{proof}
Let $s, t \in S$.
If $s$ and $t$ are connected in $H$, 
the same argument as in the proof of \cref{lem:bounded:conn} 
with \cref{cor:unbounded:clique} instead of \cref{lem:bounded:clique1} applies.
The more challenging part is to show that if $s$ and $t$ are connected in 
$\cD(S)$, they are also connected in $H$.
\begin{figure}
\centering
\includegraphics{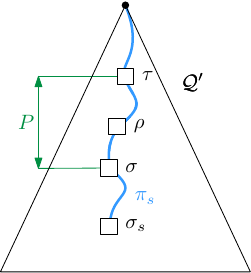}
\caption{The cell $\sigma_s$ is the smallest cell with 
$r_s \leq 2|\sigma_s|$. The canonical path $P$ contains $\rho$.}
\label{fig:unboundconns}
\end{figure}
It suffices to show that if $s$ and $t$ are adjacent in $\cD(S)$, 
they are connected to the same region $A\in \mathcal{A}$.
Refer to \cref{fig:unboundconns} for an illustration 
of the following argument.
Assume without loss of generality that $r_s \leq r_t$,
and consider the neighborhood $N(t)$ of $t$. 
By \cref{lem:bounded:neighborhood}, there is a cell $\rho \in N(t)$ 
that contains $s$.
Consider the path $\pi_s$ in $\Q$ from the root to $\sigma_s$.
Then, $\rho$ belongs to $\pi_s$.
Let $\mathcal{P}_{\pi_s}$ be the decomposition of $\pi_s$ 
into canonical paths as in \cref{lem:pathunion}, and let 
$P \in \mathcal{P}_{\pi_s}$ be 
the canonical path containing $\rho$.
Let $\sigma$ and $\tau$ be the smallest and the largest cell on $P$, 
respectively.
By the definition of $P$, 
we have $\sigma_s \subseteq \sigma \subseteq \rho \subseteq \tau$.
As $st$ is an edge in $\cD(S)$, we have 
$\| st \| \leq r_s + r_t < 2|\sigma| + 2|\tau|$, 
and thus $\| a(\sigma) t\| \leq \frac{5}{2}|\sigma| + 2|\tau|$.
This implies that $t$ lies in a region $A$ defined by $P$,  and thus 
$t\in S_1(A)$.
As $D_s$ intersects $D_t$, it intersects at least one site in $S_1(A)$.
Moreover, $P$ is a canonical path in $\mathcal{P}_{\pi_s}$, and thus $s\in S_2(A)$.
Thus, $s$ and $t$ are both connected to $A$ in $H$,
and the claim follows.
\end{proof}

\begin{lemma}\label{lem:unbounded:size}
  The total size of the binary search trees that define the canonical paths is $O(n)$ and $\sum_{A\in\mathcal{A}}(|S_1(A)| + |S_2(A)|) = 
  O\left(n\log n\right)$.
The proxy graph $H$ has $O(n)$ vertices and 
$O\left(n\log n\right)$ edges. 
\end{lemma}

\begin{proof}
As discussed in \cref{sec:more_prelims}, 
the compressed quadtree $\Q$ consists of $O(n)$ cells. 
Each cell is part of exactly one heavy path, so the total size 
of the binary search trees that define the canonical paths is $O(n)$.
A balanced binary search tree with $m$ leaves has $O(m)$ 
inner vertices, and thus, there is a total of $O(n)$ canonical paths.
As each canonical path defines $O(1)$ regions, the number of regions,
and therefore the number of vertices in $H$ follows.

To bound the number of edges, we again count the number of sets 
$S_1(A)$ and $S_2(A)$ that a single site $t \in S$ can be contained in.
Let $P$ be a canonical path, 
let $\sigma$ and $\tau$ be the smallest and largest cell
of  $P$, and let $A$ be an 
associated region.
Suppose that $t \in S_1(A)$.
We claim that $N(t)$ contains a cell 
that belongs to $P$.
By property (ii) in the definition of $S_1(A)$, 
we have $|\sigma| \leq r_t < 2|\tau|$.
If $|\sigma|\leq r_t < 2|\sigma|$, the claim holds by 
\cref{lem:bounded:neighborhood} (we have $\sigma \in N(t)$).
Thus, suppose that $2|\sigma| \leq r_t < 2|\tau|$,
and let $\rho \in \G$ be the cell  with 
$2|\sigma| \leq |\rho| \leq r_t < 2|\rho|$ and 
$\sigma \subset \rho \subseteq \tau$. 
By property (ii) in the definition of  $S_1(A)$, we have 
 \[
 \| ta(\sigma) \| \leq r_t + \frac{5}{2}|\sigma|
  \leq 2|\rho| + \frac{5}{4}|\rho|  
 = \frac{13}{4}|\rho|,
 \]
since $r_t < 2 |\rho|$  and  $2|\sigma| \leq |\rho|$. 
Hence, by the triangle inequality
$\| ta(\rho) \|  \leq \frac{15}{4}|\rho|$, as $\sigma \subseteq \rho$.
By the same considerations as in the proof of 
\cref{lem:bounded:neighborhood}, it follows that $\rho \in N(t)$, 
and $\rho$ is in $\Q$ and thus in $P$.

Let $\rho$ be a cell of $N(t)$, and let $R$ 
be the heavy path containing $\rho$.
Then, $t$ is considered for the set $S_1(A)$ 
for all regions that are defined by a canonical path containing $\rho$.
These are the $O(\log n)$ paths along the search path  for $\rho$ 
in the binary search tree on the cells of $R$, and thus each site 
can be part of at most $O(\log n)$ sets $S_1(A)$.

The number of sets $S_2(A)$ that contain a fixed site $t \in S$ 
is at most the number of canonical paths that partition the 
path from the root to $\sigma_t$.
By \cref{lem:pathunion}, there are $O(\log n)$ such paths.
This yields $\sum_{A\in\mathcal{A}}(|S_1(A)| + |S_2(A)|) = 
O\left(n\log n\right)$ as an upper bound for the number of edges.
\end{proof}

\subsection{Decremental data structure}\label{subsec:unbounded:deletion}

The approach for the decremental structure is the same as in 
\cref{subsec:bounded:deletion}, see \cref{fig:bounded:dec}. 
We again store for each region $A \in \mathcal{A}$ 
the sets $S_1(A)$ and $S_2(A)$ together with an
associated \RDS.
The set $B$ for the \RDS is again $S_1(A)$, and the set $R$ is $S_2(A)$.
Furthermore, we store the graph proxy graph $H$ 
in an HLT-structure $\mathcal{H}$.

Both queries and deletions work exactly as in 
\cref{subsec:bounded:deletion}, but we repeat them here for 
completeness.
Queries are performed directly to $\mathcal{H}$. 
To delete of a site $s$ from $S$, we first remove all edges 
incident to $s$ from $\mathcal{H}$. 
Then, $s$ is removed from all the sets $S_1(A)$ containing it, 
as well as from the associated \RDS{}s.
The sites $U$ reported as revealed by the \RDS are then removed from 
the corresponding sets $S_2(A)$, and the edges $uA$, for 
$u\in U \setminus S_1(A)$, are removed from $\mathcal{H}$.
Finally, the site $s$ is removed from all the sets $S_2(A)$ and all 
corresponding \RDS{}s.

\begin{lemma}\label{lem:unbounded:preprocessing}
The data structure can be preprocessed in $O(n\log^4 n)$ expected
time and requires $O(n \log^2 n)$ space.
\end{lemma}

\begin{proof}
To find the regions $\mathcal{A}$, we first compute the extended 
compressed quadtree $\Q$.
As described in \cref{sec:more_prelims},
this takes $O(n \log n)$ time.

After that, we need $O(n)$ additional time to find the heavy paths in
$\Q$, by \Cref{lem:heavypath:constr} of \cref{lem:heavypath}.
It also takes $O(n)$ time to compute the biased binary search trees over the heavy paths by standard techniques\cite{Bent1985BiasedST}. 
This gives us the set of regions $\mathcal{A}$.
To find the sets $S_1(A)$, recall that a site $t$ can only belong to the set $S_1(A)$ for a region defined by a canonical path that 
contains a cell in $N(t)$.
Therefore the sets $S_1(S)$ can  be found as follows:
for each site $t \in S$, find the cells in $N(t)$,
and for each $\rho \in N(t)$, find the heavy path $R$ that contains it.
In the biased binary search tree defined on $R$, follow the search path for 
$\rho$ and for each canonical path along this search path, 
explicitly find the region containing $t$,
and check if the distance condition (condition (iii) for the set \(S_1(A)\)) holds.
A naive implementation of this step takes $O\left(n\log n\right)$ 
time, which is fast enough for our purposes. 

As in \cref{lem:bounded:preprocessing}, we construct an additively 
weighted Voronoi diagram for each set $S_1(A)$, again assigning the weight 
$-r_s$ to each site $s\in S$.
Recall that the time needed to build a single 
Voronoi diagram is $O(|S_1(A)| \log n)$~\cite{berg_computational_2008,fortune_sweepline_1987,sharir_intersection_1985}.
Since $\sum_{A\in\mathcal{A}} |S_1(A)| = O(n\log n)$ by to \cref{lem:unbounded:size}, the construction of 
all diagrams takes \(O\left(n\log^2 n \right)\) time. 
The query time however remains $O(\log n)$ in each diagram, 
as such a diagram contains at most $O(n)$ sites.
This also limits the temporary space required for all Voronoi diagrams to $O(n \log n)$. 

Let $s \in S$. Recall that $\pi_s$ is the path in $\Q$ to $\sigma_s$. 
To find all sets $S_2(A)$ containing $s$, we obtain the decomposition 
of $\pi_s$ into canonical paths, and we query the Voronoi diagrams 
with $s$ for all regions defined by these paths.
As there are $O\left(\log n\right)$ canonical paths in the decomposition, 
this takes an additional $O\left(n\log^2 n\right)$ time for all sites.
Inserting the $O\left(n\log n\right)$ edges into $\mathcal{H}$ takes 
$O\left(\log^2 n\right)$ amortized time each, for a total of 
$O\left(n\log^3 n\right)$.

Again, the step dominating the preprocessing time is the construction of 
the $\RDS$. 
For a single region $A \in \mathcal{A}$, this is expected
$O\left(|S_1(A)|\log^2 n + |S_2(A)| \log^3 n \right)$, 
by \cref{lem:reveal:cor}. 
We have $\sum_{A\in\mathcal{A}} |S_1(A)| = O(n\log n)$ and 
$\sum_{A\in\mathcal{A}} |S_2(A)|=O\left(n\log n\right)$ by \cref{lem:unbounded:size}. 
The claimed preprocessing time follows.

Following \cref{lem:unbounded:size} again, $\mathcal{H}$ requires $O(n \log n)$ space and both the quadtree and all canonical paths require $O(n)$ space.
Each of the \RDS requires $O(|S_1(A)| \log |S_1(A)| + |S_2(A)|)$ space by \cref{lem:reveal:cor}.
Using \cref{lem:unbounded:size} we get that the total space usage for the \RDS{}s is 
\begin{align*}
  &~~\phantom{=} O\left(\sum_{A\in\mathcal{A}} |S_1(A)| \log |S_1(A)| + |S_2(A)| \right)\\
  &= O\left(\left( \sum_{A\in\mathcal{A}} |S_1(A)| \right) \log n + \left( \sum_{A\in\mathcal{A}} |S_2(A)| \right) \right) \\
  &= O(n \log^2 n)
\end{align*}
This  dominates the overall space usage.
\end{proof}

\begin{theorem}\label{thm:unbounded:deletion}
The  data structure described above correctly answers connectivity queries in 
$O(\log n/\log \log n)$ time.
It requires 
$O(n\log^{5} n)$ 
overall expected update time
, assuming the deletions are oblivious of the internal random choices of the data structure.
The data structure requires $O(n \log^2 n)$ space.
\end{theorem}
\begin{proof}
As the only difference with the structure from \cref{thm:bounded:deletion}
is the definition
of the sets $S_1(A)$ and $S_2(A)$, 
the correctness follows from \cref{thm:bounded:deletion}. 
The $O(\log n/\log\log n)$ bound for the queries in $\mathcal{H}$ 
also carries over.
The preprocessing takes 
$O\left(n\log^4 n\right)$ time by 
\cref{lem:unbounded:preprocessing}.
Deletions from $\mathcal{H}$ take $O\left(\log^2n\right)$ 
amortized time for each of the $O\left(n\log n\right)$ edges, 
for an overall time of $O\left(n\log^3 n\right)$.
The time for the sequence of deletions is again dominated by the 
time needed for the updates in the \RDS{}s.
For a region $A \in \mathcal{A}$, let $k_A$ be the number of 
sites deleted from $S_1(A)$. 
Then, the time needed by the \RDS associated with $A$ is 
$O\left(|S_1(A)| \log^2 n  + k_A\log^4 n +
|S_2(A)|\log^4 n \right)$ under the assumption of an oblivious adversary, by \cref{lem:reveal:cor}.
We have  
$\sum_{A\in \mathcal{A}} |S_1(A)| = O(n\log n)$,
$\sum_{A\in\mathcal{A}} k_A = O(k\log n)$,
and 
$\sum_{A\in \mathcal{A}} |S_2(A)| = O(n\log n)$, following \cref{lem:unbounded:size}.
Summing over all 
$A\in\mathcal{A}$, we get a running time of
$O\left(n\log^5 n\right)$ 
as claimed.
The space usage is unchanged from \cref{lem:unbounded:preprocessing}.
\end{proof}

It should be noted that following a similar line of argument as Klost~\cite{KLOST2023101979}, \cref{lem:unbounded:preprocessing} implies an efficient static data structure.
\begin{lemma}\label{lem:static:general}
There is a static connectivity data structure for disk graphs with \(O(n\log^2n)\) preprocessing time and \(O(1)\) query time.
\end{lemma}
\begin{proof}
The part in the proof of \cref{lem:unbounded:preprocessing} that computes the proxy graph has a running time of \(O(n\log^2n)\) and yields a graph with the same connected components and only \(O(n\log^2n)\) edges. 
Using a graph traversal to annotate each vertex with a label of its connected component then takes an additional \(O(n\log^2n)\) time and allows \(O(1)\) query time by comparing the labels.
\end{proof}

\FloatBarrier
\section{Conclusion}
We discussed several problems related to dynamic connectivity in 
disk graphs. 
First of all, we significantly improved the state of 
the art for unit disk graphs, by developing data structures tailored 
to this case.
Furthermore, in the general bounded radius ratio case, we 
were able to improve the dependency on $\Psi$ for updates. 
We then considered the incremental and decremental setting. For 
the incremental setting with bounded radius ratio, we gave 
a data structure with an amortized update time that is logarithmic in $\Psi$ 
and polylogarithmic in $n$ and near constant query time.

In order to obtain a similarly efficient data structure 
in the decremental setting, we first considered problems 
related to the lower envelopes of planes and more general two-dimensional
surfaces. 
Using these, we were able to describe a dynamic revealing
data structure that is fundamental for our decremental 
data structure and might be of independent interest.
Using the \RDS, we were able to give data structures with 
$O(\log n/\log\log n)$ query time. 
In the bounded setting, the update time is again logarithmic in $\Psi$ 
and polylogarithmic in $n$, while for the setting of unbounded radius 
ratio, we managed to achieve a data structure whose update 
time depends only on $n$.

For the semi-dynamic data structures described in 
\cref{sec:bounded,sec:unbounded}, this significantly 
improves the previously best time bounds that can be derived from 
the fully dynamic data structure of Chan et al.~\cite{chan_dynamic_2011}.

There are still several open questions. 
First of all, our result in the incremental 
setting only deals with the setting of bounded radius ratio. 
We are currently working on extending it to the case of general disk graphs. 
Furthermore, as the incremental and decremental data structures we 
developed are significantly faster than the fully dynamic data 
structures, an interesting question would be if similar bounds 
can be achieved in the fully dynamic setting.

\bibliography{connectivity}

\appendix

\section{Implementation on the real RAM}

\subsection{Quadforests}\label{sec:app:quadforest}

Given an arbitrary set of sites \(S\) with \(\Phi = \max_{s,t\in S} \Vert st\Vert\), the quadtree as defined in \cref{sec:hier-grid} has height \(O(\log \Phi)\).
In \cref{sec:poly-dependence,sec:bounded}, we developed data structures whose update time depend on the radius ratio \(\Psi\) instead.
This is done by considering a quadforest instead of a quadtree.
The quadforest consists of a set of quadtrees whose root have diameter \(\Psi' = 2^{\lfloor \log_2 \Psi \rfloor}\).
Note that given \(\Psi\), the value \(\Psi'\) can be found with a doubling search in \(O(\log\Psi)\) time.
In \cref{sec:unit,sec:poly-dependence,sec:bounded} we assumed that all roots of quadtrees are aligned to a global grid, where the cells of \cref{sec:unit} can be considered as quadtrees with $\Psi \in O(1)$. 
These assumptions cannot easily be satisfied on the real RAM where no floor function is available.
In the following, we show how we can achieve the same query, update, and space bounds as with the results of \cref{sec:unit,sec:poly-dependence,sec:bounded} without explicitly aligning the quadtree roots to a global grid.

The general idea is to \emph{locally} align quadtree roots instead.
To avoid costly rebuilding when sets of roots with incompatible local alignments get too close after several insertions, we allow a small overlap of sets with different alignments.
In such an overlap quadtree roots of different sets are handled separately.
They may not be aligned to each other and will overlap each other as well.
Disks may fall into such an overlap and will then be inserted into quadtrees of all involved sets according to the respective data structures of \cref{sec:unit,sec:poly-dependence,sec:bounded}, but into at most $O(1)$ many.
A single set of aligned quadtrees might not contain the full neighborhood
of a disk or its cell (according to the respective neighborhood definition applied to it), but all involved sets together cover the required space.

First, we reduce the problem to a single dimension:
we describe how to construct (potentially overlapping) intervals, which allows finding locally aligned quadtree roots in one dimension.
Afterwards, we apply this approach for the $x$- and $y$-dimension separately and combine the results to construct the sets of aligned quadtree roots in two dimensions.
The approaches are dynamic.
Disks can be added or deleted over time, requiring the creation or deletion of new quadtree roots. 
As a byproduct of handling the dimensions separately, some cells close in one but not two dimensions will be (partially) aligned as well, but that does not cause any problems.

\subsubsection{Covering points in one dimension}
\label{subsub:local-alignment-1d}

In the following we describe how to maintain a set of $O(n)$ \emph{intervals} of equal size to cover a dynamic set of $n$ input points---which will be centers of disks later on---in one dimension.
Each $v \in \mathbb{R}$ is covered by $O(1)$ intervals and the intervals are constructed in such a way that local connectivity of disks is preserved when using them later in \cref{subsub:local-align-2d}.
Their size is a multiple of the side length \(\Psi'/\sqrt{2}\) of a quadtree root, inducing an alignment of quadtree roots in one dimension in their interior.

Let $c \in O(\Psi)$ be the width of the neighborhood of a disk or quadtree root of maximum size in the respective connectivity data structure described in \cref{sec:unit,sec:poly-dependence,sec:bounded}.
Then, each interval has a width of $w > 8 c \in O(\Psi)$ with $w$ a multiple of \(\Psi'/\sqrt{2}\).
Each interval saves its left and right endpoint and the number of contained points.

During updates of the point set we adjust the set of intervals to maintain the following invariants:
\begin{description}
    \item[$\mathcal{I}1$] Each input point is covered by at least one interval.
    \item[$\mathcal{I}2$] Each $v \in \mathbb{R}$ is covered by at most two intervals.
    \item[$\mathcal{I}3$] The length of the overlap of two intersecting intervals is at least \(c\) and at most \((w-c)/2\).
    \item[$\mathcal{I}4$] Non-intersecting intervals have a distance of at least $c$.\footnote{Note that this is automatically given for two intervals overlapping a common third interval assuming $\mathcal{I}2$ and $\mathcal{I}3$ hold.}
\end{description}

The left and right endpoints of all intervals are maintained in two separate red-black trees~\cite{cormen_introduction_2009}.
Assuming the invariants hold, the red-black trees allow the retrieval of the up to two covering intervals for a given point in $O(\log n)$ time.
Similarly, given a point, the nearest intervals to the left and to the right can be obtained in $O(\log n)$ time.
Additionally, a red-black tree containing the current input point set is maintained which contains in each inner node the number of descendant leaves, allowing range counting queries in $O(\log n)$ time.

When a new input point \(p\) is added to the point set we might need to create new intervals to fulfill $\mathcal{I}1$.
For this, the endpoint trees are queried for intervals containing the new point.
If one or two intervals are returned, all invariants are already fulfilled for the updated point set and we just need to increase their point counter by one.
If no interval is returned we have to create a new interval to cover the newly added point.
For this, the endpoint trees are queried again for the closest intervals \(I_L\) to the left and \(I_R\) to the right. We denote by \(\Vert I_L I_R \Vert\) the size of the gap between these disjoint intervals.
Then, four different cases can arise, depending on the returned intervals.
In the first three cases a new interval will be constructed which leaves a gap at one or both sides, whereas in the last case a gap will be covered with the new interval.
The cases are visualized in \cref{fig:interval-cases}.

\begin{figure}
    $\ $ 
    \subparagraph*{Case 1a:} $w + 2c < \Vert I_L I_R \Vert$ and the point has distance $> c$ to $I_L$, $I_R$
    \begin{center}
        \includegraphics[page=1]{./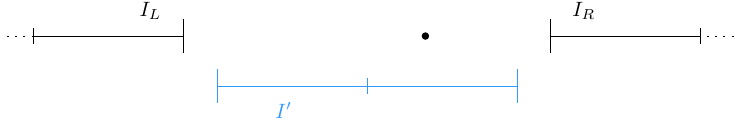}
    \end{center}
    \subparagraph*{Case 1b:} $w + 2c < \Vert I_L I_R \Vert$ and the point has distance $\le c$ to $I_R$
    \begin{center}
        \includegraphics[page=2]{./figures/quadtree1dcases.pdf}
    \end{center}
    \subparagraph*{Case 2:} $w - 2c < \Vert I_L I_R \Vert \le w + 2c$
    \begin{center}
        \includegraphics[page=3]{./figures/quadtree1dcases.pdf}
    \end{center}
    \subparagraph*{Case 3:} $\Vert I_L I_R \Vert \le w - 2c$
    \begin{center}
        \includegraphics[page=4]{./figures/quadtree1dcases.pdf}
    \end{center}
    \caption{Examples for cases 1--4 with $w = 9c$ and distance $\Vert I_L I_R \Vert$ between $I_L$ and $I_R$. Existing intervals are shown at the top and the newly created intervals at the bottom. The points to cover are drawn as black dots.}
    \label{fig:interval-cases}
\end{figure}

\begin{description}
\item[Case 1: \(I_L\) or \(I_R\) does not exist or \( w + 2c < \Vert I_L I_R\Vert \).] This case has two subcases:

\begin{description}
\item[Case 1a: \(\Vert p I_L\Vert > c\) and \(\Vert p I_R\Vert > c\).]
An arbitrary interval covering the new input point is created which has a distance of at least $c$ to the closest intervals.

\item[Case 1b:  \(\Vert p I_L\Vert \le c\) or \(\Vert p I_R\Vert \le c\).]
Let $I$ be the interval with a distance of $c$ or less to the new input point.
Then, a new interval $I'$ is created such that $I'$ overlaps $I$ by $c$ and contains the new input point.

\end{description}
\item[Case 2: $w - 2c <\Vert I_L I_R\Vert\leq w+2c $.]
Assume without loss of generality that the distance from the new input point to $I_R$ is at most the distance to $I_L$.
Then, we create an interval $I'$ which overlaps $I_R$ by $3c$ on the left side.
This new interval must overlap the new input point due to $(w + 2c) / 2 < (w - 3c)$ and has a distance to $I_L$ of more than $(w - 2c) - (w - 3c) = c$.

\item[Case 3:  $\Vert I_L I_R\Vert \leq w - 2c$.]

A new interval $I'$ can be created centrally between them.
Assuming the distance between $I_L$ and $I_R$ is at least $c$,
then $I'$ must overlap $I_L$ and $I_R$ each by at least $c$ and by at most $(w - c) / 2$.
\end{description}

Additionally, a range counting query is done on the existing point set, to find all points contained in the new interval. The counter of the new interval is then initialized with the result plus one.
Finally, \(p\) and the new interval are added to all relevant red-black trees. 

When an input point is deleted, we query for all intervals containing it.
In all obtained intervals the internal counter is decremented and intervals reaching zero are deleted.
Afterwards, the corresponding red-black trees are updated.

The intervals maintained with the construction described fulfill the invariants $\mathcal{I}1$--$\mathcal{I}4$ by induction.
Additionally, every interval covers at least one input point.
All operations for maintaining the intervals require $O(\log n)$ time per point update, and we maintain three red-black trees and an interval set of size $O(n)$.
Hence, we can conclude:

\begin{lemma}
    \label{lem:covering-intervals-1d}
    We can maintain $O(n)$ intervals for a dynamic set of points in one dimension which fulfill invariants $\mathcal{I}1$--$\mathcal{I}4$, where $n$ is the maximum number of points at any time.
    To maintain the intervals $O(\log n)$ additional time per point update operation and $O(n)$ additional space is required.
\end{lemma}

\subsubsection{Local alignment in two dimensions}
\label{subsub:local-align-2d}

To locally align quadtree roots in two dimensions we construct interval sets according to \cref{lem:covering-intervals-1d} separately for the $x$- and $y$-dimension where the sites form the point set of the lemma.
Then, each pair of intervals with one interval from the $x$-set together with one interval from the $y$-set forms a square.
Inside each such square, we can for any given point identify a quadtree root aligned to the square's boundary using binary search in $O(\log \Psi)$ time.
This allows us to apply the data structures of \cref{sec:unit,sec:poly-dependence,sec:bounded} for each square separately\footnote{For correctness, time bounds, and space bounds it is irrelevant whether quadtree roots and its descendant cells normally created by the data structures outside the square are discarded or created nonetheless.}
and a site will be assigned to the data structures of the up to four squares containing it.
In this approach we have to take special care to reflect the global connectivity by carefully handling overlapping squares, and we have to make sure to only handle interval pairs whose squares contain at least one site to keep the space requirement low.

To be more precise, we maintain two interval sets of \cref{lem:covering-intervals-1d}, one for the $x$-coordinates of the sites and one for their $y$-coordinates.
Additionally, we maintain a partial mapping of pairs of $x$- and $y$-intervals to sets of aligned quadtree roots inside the square induced by the respective interval pair. 
This mapping is implemented as a red-black tree~\cite{cormen_introduction_2009} which maps such an interval pair (using e.g.\@ the tuple of their left endpoints as key with lexicographical ordering) to a red-black tree that contains the aligned quadtree roots in the square, with the quadtree root's lower left corner's coordinate as the key in lexicographical order.

By similar arguments as for the global grid $\G_{\lfloor \log \Psi \rfloor}$, given a query point the corresponding quadtree root in an interval pair can be obtained in $O(\log n)$ time due to the alignment of the quadtree roots. 
Hence, given a query point, it is possible to obtain all (up to four) quadtree roots containing a point in $O(\log n)$ with the following approach.
First, query the red-black trees of the interval endpoints used in \cref{lem:covering-intervals-1d} for both dimensions.
Due to $\mathcal{I}2$ of \cref{lem:covering-intervals-1d} at most two intervals are returned in each dimension.
Then, for each of the at most four resulting interval pairs in \(O(\log n)\) time the corresponding square can be found and, as described above, with additional time \(O(\log n)\) the quadtree root can be identified.

This construction already captures all local connectivity information of the corresponding data structure when inserting sites in all interval pairs which overlap them. 
To extend this to global connectivity, we will describe in the following step by step how the data structures from the main part of the paper can be adapted to work with the alignment framework described above.
We will refer to the data structures of the main body as the \emph{original} data structures and to the new data structure that is obtained by using the original data structures with the interval as the \emph{adapted} data structures.

\begin{lemma}
    \label{lem:real-ram-local-connectivity}
    Fix a connectivity data structure \(\mathcal{X}\) as described Theorem~\ref{thm:dynamicUDG}, \ref{thm:dg-dynamicDG}, \ref{thm:bounded-radius-ratio-psi}, \ref{thm:bounded:deletion}, or \ref{thm:bounded:insertion}.
    Construct a new data structure by applying the mapping based on interval pairs of \cref{lem:covering-intervals-1d} to a set of sites, constructing an instance of \(\mathcal{X}\) for every interval pair that covers at least one site, and inserting all sites into all instances of \(\mathcal{X}\) assigned to interval pairs covering it. For this new data structure the following holds:
    \begin{enumerate}
    \item For each pair of sites that are connected by an edge in the disk graph, there exist an interval pair covering both sites.
    \item If \(\mathcal{X}\) is not the data structure from \cref{thm:bounded-radius-ratio-psi}, then the direct connectivity between two sites in the instance of \(\mathcal{X}\) of each covering pair is maintained as in the corresponding Lemma~\ref{lem:gridreachability}, \ref{lem:dg-graph-properties}, or \ref{lem:bounded:conn}.
    \item If \(\mathcal{X}\) is the data structure from \cref{thm:bounded-radius-ratio-psi}, then in the instance $x$ of \(\mathcal{X}\) of each covering pair of intervals for two directly connected sites
    either (a) one of the sites' disk Minkowski covers the cell of the other site, or (b) they share an edge in the connectivity graph constructed inside $x$.
    \item If two sites are not connected in the disk graph, then they are not connected in all instances of \(\mathcal{X}\).
\end{enumerate}
\end{lemma}
\begin{proof}
    Let $c$ be the width of a neighborhood in the fixed data structure as defined in \cref{subsub:local-alignment-1d}.
    We first consider each dimension and its associated interval set of \cref{lem:covering-intervals-1d} separately.

    By $\mathcal{I}1$ of \cref{lem:covering-intervals-1d} each site is covered by at least one interval of the set.
    We show by contradiction that two sites whose disks intersect are covered by a common interval.
    Hence, assume that they are not covered by a common interval.
    Then, two different cases can arise: Either all intervals covering one site do not intersect any interval covering the other site, or at least one interval covering one site intersects an interval covering the other site.

    By the definition of the neighborhood in \(\mathcal{X}\), the centers of two intersecting disks must have a distance of \emph{less} than $c$ in each dimension.
    Hence, if no intersection between covering intervals exists, invariant $\mathcal{I}4$ of \cref{lem:covering-intervals-1d} is contradicted, as their distance must be at least $c$.
    If there is an overlap between an interval pair, then at least one interval must contain both disk centers as the overlap is at least of size $c$ due to $\mathcal{I}3$ and the length of the intervals is larger than $2c$, contradicting the assumption.
    Hence, there exists an interval in each dimension covering the centers of each pair of intersecting disks.

    Following this, there exists a pair of an $x$-interval and a $y$-interval, such that their associated two-dimensional area covers both sites.
    Therefore, if \(\mathcal{X}\) is not of \cref{thm:bounded-radius-ratio-psi}, then its instance of a covering pair contains all relevant quadtree roots, cells, and other associated parts required to show their connectivity according to Lemma~\ref{lem:gridreachability}, \ref{lem:dg-graph-properties}, or \ref{lem:bounded:conn}.
    In case the data structure is of \cref{thm:bounded-radius-ratio-psi} and no site's disk Minkowski covers the cell of the other site in the instance of \(\mathcal{X}\) of the covering pair, then the instance has both disks connected in its internal HLT-structure due to the edge conditions in \cref{sub:dg-limit-insertions-to-mbm} and the same reasoning as for the other data structures.

    Similarly, the instances of \(\mathcal{X}\) are constructed according to Theorem~\ref{thm:dynamicUDG}, \ref{thm:dg-dynamicDG}, \ref{thm:bounded-radius-ratio-psi}, \ref{thm:bounded:deletion}, or \ref{thm:bounded:insertion}.
    Hence, two sites in different components of the disk graph are not connected in any instance of \(\mathcal{X}\).
\end{proof}

Note that in the construction of instances of \cref{thm:bounded-radius-ratio-psi} for interval pairs the proxy vertices for cells are also determined locally only.
In the original, global approach of \cref{thm:bounded-radius-ratio-psi} or in instances of the data structure for other interval pairs there might exist other, possibly larger, sites' disks which Minkowski cover a site's cell in a local instance of \cref{thm:bounded-radius-ratio-psi} in \cref{lem:real-ram-local-connectivity}.
Due to \cref{lem:real-ram-local-connectivity} they again share an interval pair, as they have intersecting disks.
Hence, multiple candidates for proxy vertices according to \cref{lem:dg-gnore-fully-connected} can exist in different interval pairs, which requires additional attention later on.

We can use the approach of \cref{lem:real-ram-local-connectivity} as a basis to adapt the disk connectivity data structures of \cref{sec:unit,sec:poly-dependence,sec:bounded} on the real RAM.
It remains to bridge the individual data structures with local connectivity into a global connectivity data structure.
This can be done by using a global proxy graph $H$, which is shared between all the data structures, instead of many local ones.
Additionally, individual subcomponents in the global proxy graph need to be linked appropriately.
For the data structures of \cref{thm:bounded:deletion,thm:bounded:insertion} no further action is required.
For the other data structures we add a vertex to $H$ for every site and link this vertex to all cells of different interval pairs this site gets assigned to.

\begin{lemma}
    \label{lem:real-ram-2d-local-to-global-bridges}
    Fix a connectivity data structure \(\mathcal{X}\) of Theorem~\ref{thm:dynamicUDG}, \ref{thm:dg-dynamicDG}, \ref{thm:bounded-radius-ratio-psi}, \ref{thm:bounded:deletion}, or \ref{thm:bounded:insertion}.
    Additionally, assume that the interval pairs and instances of \(\mathcal{X}\) are constructed as described in \cref{lem:real-ram-local-connectivity}, except that the proxy graph $H$ is augmented and shared globally as described above.
    For this new data structure the following holds:
    \begin{enumerate}
        \item If \(\mathcal{X}\) is not the data structure from \cref{thm:bounded-radius-ratio-psi}, then two sites are connected in $H$ iff they are connected in the disk intersection graph.
        \item If \(\mathcal{X}\) is the data structure from \cref{thm:bounded-radius-ratio-psi}, then at least one pair of proxy cells of two sites is connected in $H$ iff the two sites are connected in the disk intersection graph.
    \end{enumerate}
\end{lemma}
\begin{proof}
    First assume that \(\mathcal{X}\) is not the data structure from \cref{thm:bounded-radius-ratio-psi}.
    Then, by \cref{lem:real-ram-local-connectivity} the connectivity of each pair of sites with intersecting disks is reflected in $H$ through at least one instance of \(\mathcal{X}\), possibly indirectly through the cells they are assigned to.
    Due to the (potentially added) vertices and edges for each site which link the sites' cells or regions between different instances of \(\mathcal{X}\), a site is connected to all other sites in $H$ to which it has an edge in the disk intersection graph.
    This allows applying the proofs of Lemma~\ref{lem:gridreachability}, \ref{lem:dg-graph-properties}, or \ref{lem:bounded:conn}.

    Now, assume \(\mathcal{X}\) is the data structure from \cref{thm:bounded-radius-ratio-psi}.
    Let $D_{s}$, $D_t$ be the disk pair of two connected sites $s$, $t$ and let $\{\sigma_i\}$, $\{\tau_j\}$ be the up to four cells of instances of \(\mathcal{X}\) the sites are assigned to.
    Then, let $\{\sigma_i'\}$ be the up to four cells which are proxy vertices of the cells in $\{\sigma_i\}$ and let $\{D_{s_i'}\}$ be the up to four disk which define the proxy vertices, i.e.\@ the disks of maximum radius which Minkowski cover the respective $\overline{\sigma_i}$.
    Every site whose disk contains $D_s$ must be in at least one instance of \(\mathcal{X}\) which also contains $s$ by \cref{lem:real-ram-local-connectivity}.
    Hence, at least one disk $D_{s_{k}'}$ must be inclusion maximal.
    Similarly, at least one disk $D_{t_{\ell}'}$ defining a proxy vertex of a $\tau_j$ must be inclusion maximal.

    Similar to the other case, by \cref{lem:real-ram-local-connectivity} and the bridging edges, the connectivity of all paths of intersecting inclusion maximal disks is preserved in $H$.
    This allows applying the proof of \cref{lem:dg-gnore-fully-connected} with $\sigma_{k}'$ and $\tau_{\ell}'$ as the proxy vertices.
\end{proof}

As in the original data structures, this global proxy graph is represented by an instance of \cref{thm:generalinc} when extending \cref{thm:bounded:insertion} and the HLT-structure of \cref{thm:dynamicspanningtree} for the remaining cases.
This directly implies that the query bounds from the original data structures carry over to the adapted data structures.
For all adapted data structures except the one from \cref{thm:bounded-radius-ratio-psi} the data structure of the proxy graph is queried directly with the sites.
For the adapted data structure derived from the original data structure from \cref{thm:bounded-radius-ratio-psi} all up to four proxy vertices are obtained for both sites according to the instances of the original data structure
and then the HLT-structure is queried for connectivity of all up to 16 pairs of proxy vertices.

To insert a new site (unless the original data structure is of \cref{thm:bounded:deletion}), we obtain the intervals overlapping the corresponding coordinate of the site from both sets of intervals.
When at least one interval per dimension could be found, then the corresponding instances of the original data structure for the up to four pairs are obtained (or initialized if required) and the site is added to them.
New aligned quadtree roots can be created and added to the respective red-black trees managing them in $O(\log \Psi + \log n)$ time each, which is always overshadowed by the insertion routine of the respective original data structure.

If instead for at least one dimension no interval could be obtained, then we have to create a new interval as described in \cref{subsub:local-alignment-1d}.
This new interval can overlap previously existing intervals and contain already inserted sites, requiring us to update various data structures.
We have to insertion of the sites contained in the new interval into the original data structure assigned to the interval.
First, all sites contained in the new interval in the corresponding dimension are obtained with a range query using the red-black tree for points of \cref{subsub:local-alignment-1d} in $O(\log n + k)$ time, where $k$ is the number of returned sites.
Then, for each of the $k$ sites, the covering intervals of the other dimension are obtained.
For each pair formed by the new interval with an existing interval a new original data structure is initialized, and the pair is added to the red-black tree of pairs.
The $k$ sites are then added to the new original data structure.
This requires in total time for $k$ insertions into the instances of the original data structure and $O(k \log n)$ for the remaining updates.
The cost for the $k$ additional insertions can be charged to the original insertion of the $k$ sites, as they occur at most once per dimension due to $\mathcal{I}3$ of \cref{subsub:local-alignment-1d}.

Afterwards, the new site and the bridging edges are added to $\mathcal{H}$ if required by the original data structure.
In total, we have to pay for up to four insertions into instances of the original data structure and the additional operations require $O(\log n)$ time in total.

Deletions work similar as insertions, assuming the original data structure supports deletions.
First, the corresponding $x$- and $y$-intervals are obtained in $O(\log n)$ time, then the up to four instances of the original data structure.
In each of them the site is deleted.
When a quadtree root or even an instance of the data structure for an interval is no longer used, then it is destroyed.
In the latter case the corresponding interval pair is removed from the red-black tree of pairs.
Additionally, the counter for covered sites of the intervals is decremented.
In case one interval's counter reaches a value of zero, then the interval is deleted as described in \cref{lem:covering-intervals-1d} and removed from all relevant red-black trees.
Note that in that case only one or two interval pairs containing that interval can exist, all covering the deleted site.
Furthermore, the site vertex and its bridging edges required for \cref{lem:real-ram-2d-local-to-global-bridges} are removed from $\mathcal{H}$ if they were introduced, depending on the original data structure.
All steps require the time for at most four deletions in the instances of the original data structures, four deletions in $\mathcal{H}$, and additional $O(\log n)$ overhead for the remaining work.
Hence, asymptotically the run time is unchanged.

Although we require a relatively large amount of auxiliary data structures, all of them require $O(n)$ additional storage.
Similarly, we require $O(n)$ space for the interval data structures due to \cref{lem:covering-intervals-1d}.
The number of additional vertices and edges in comparison to the original data structures is also bound by $O(n)$.
This results in an asymptotic space usage identical to the respective original data structure. 

Combining everything above, we can conclude the following:
\begin{theorem}
    \label{thm:real-ram-everything-works}
    (Semi-)dynamic disk connectivity can be implemented with the bounds of \cref{thm:dynamicUDG,thm:dg-dynamicDG,thm:bounded-radius-ratio-psi,thm:bounded:deletion,thm:bounded:insertion} on a real RAM.
\end{theorem}

\subsection{Compressed quadtrees with neighbor pointers}\label{sec:app:quadtrees}
In this section we describe how the compressed quadtrees can be adapted in order to allow the same running times as given in \cref{sec:unbounded} in the real RAM model without using the floor function.
The relevant operations that are needed for the data structure in \cref{sec:unbounded} are constructing a compressed quadtree that contains the cells in \(N(s)\) for any given \(s\in S\) and efficiently finding the neighbors on the same level of a given cell.

The algorithm of Buchin et al.\ \cite{BuchinLMM11} to construct a compressed quadtree from a set of point sites takes \(O(n\log n)\) time on the real RAM without rounding. It achieves this running time with the disadvantage of not having all cells aligned to the hierarchical grid.
On a high level, Buchin et al.\ start building a non-compressed quadtree and whenever splitting a cell only lead to one new non-empty cell for \(c\) iterations, a compressed vertex is added and a new bounding box is computed for the vertices in the cell.
This process is then continued recursively.

This section will focus on how the data structure in \cref{sec:unbounded} can be easily adapted to work with a non-aligned grid.
We still want to have a cell \(\sigma_s\) for each site \(s\in S\) with \(s\in \sigma\) and \(|\sigma_s|\leq r_s < 2 |\sigma_s|\) but relax the definition in so far that \(\sigma_s\) does not have to be part of a global hierarchical grid but is the cell of the (not necessary aligned) compressed quadtree on the sites with this property.
Below we will describe a process to guarantee that such a cell exists.
We also relax the notion of the neighborhood \(N(s)\). We will not use a \(15\times 15\) subgrid on the same level as \(|\sigma_s|\), but instead use all cells of the compressed quadtree that would intersect this \(15\times 15\) subgrid and whose diameter lies in the interval \(\left[|\sigma_s|, 2|\sigma_s|\right)\).
The size of the neighborhood is still \(O(1)\).

\begin{observation}\label{obs:app:neighborhooddisjoint}
Given a site \(s\), the cells in \(N(s)\) are disjoint.
\end{observation}

\begin{lemma}
Let \(S\subseteq \R^2\) be a set of sites with associated radii. A compressed quadtree that contains a suitable neighborhood \(N(s)\) for every \(s\in S\) can be constructed in \(O(n\log n)\) time on a real RAM.
Furthermore, the constructed quadtree allows to return the neighborhood of a cell in \(O(1)\) time, assuming a pointer to the cell is given.
\end{lemma}
\begin{proof}
We use the algorithm of Buchin et al.\ \cite{BuchinLMM11} to construct the quadtree. The algorithm expects a set of points as its input, so we construct a point set that makes sure that a cell \(\sigma_s\) and its neighborhood are contained in the quadtree.

For this we take all sites \(s\in S\) and surround them with a \(61\times 61\) grid of points, where the vertical and horizontal distance between neighboring points is \(\frac{r_s}{2\sqrt{2}}\). 
The distance between these points enforces a cell \(\sigma_s\) with \(s\in\sigma_s\) and \(|\sigma_s|\leq r_s < 2|\sigma_s|\) together with the cells in its \(15\times 15\) neighborhood to be split and thus be present in the compressed quadtree, see \cref{fig:app:compressedQuadtree}.
As these are \(O(n)\) points, the construction takes \(O(n\log n)\) time.

Now we describe how to find the neighborhood. This can be done by constructing a set of pointers in a traversal of the constructed quadtree.
Here the main observation is, that the neighborhood of a given cell is fully contained in the neighborhood of its parent.
So when traversing the quadtree cells in increasing size, one can always traverse the neighborhood of the parent in order to find the neighborhood of any cell.
This traversal can be done using a priority queue in \(O(n\log n)\) overall time.
After this traversal, the stored neighborhood of a cell can be returned in \(O(1)\) time.
\end{proof}

\begin{figure}
\centering
\includegraphics{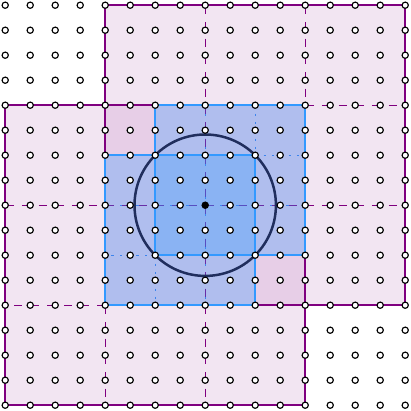}
\caption{A site \(s\) together with the dummy points for constructing the compressed quadtree that includes the \(3\times 3\) neighborhood. For the \(13\times 13\) neighborhood more dummy points are added.}
\label{fig:app:compressedQuadtree}
\end{figure}

\end{document}